  \def\noeditingmarks{}
\def\longversion{}
   \newcommand\SkipToFmtEnd{}%
   \newcommand\EndFmtInput{}%
   \long\def\SkipToFmtEnd#1\EndFmtInput{}%
\newcommand\ReadOnlyOnce[1]{\@ifundefined{#1}{\@namedef{#1}{}}\SkipToFmtEnd}
\DeclareFontFamily{OT1}{cmtex}{}
\DeclareFontShape{OT1}{cmtex}{m}{n}
  {<5><6><7><8>cmtex8
   <9>cmtex9
   <10><10.95><12><14.4><17.28><20.74><24.88>cmtex10}{}
\DeclareFontShape{OT1}{cmtex}{m}{it}
  {<-> ssub * cmtt/m/it}{}
\DeclareFontShape{OT1}{cmtt}{bx}{n}
  {<5><6><7><8>cmtt8
   <9>cmbtt9
   <10><10.95><12><14.4><17.28><20.74><24.88>cmbtt10}{}
\DeclareFontShape{OT1}{cmtex}{bx}{n}
  {<-> ssub * cmtt/bx/n}{}
\newcommand{\Conid}[1]{\mathit{#1}}
\newcommand{\Varid}[1]{\mathit{#1}}
\newcommand{\anonymous}{\kern0.06em \vbox{\hrule\@width.5em}}
\newcommand{\plus}{\mathbin{+\!\!\!+}}
\renewcommand{\geq}{\geqslant}
\newdimen\mathindent\mathindent\leftmargini}%
\def\resethooks{%
  \global\let\SaveRestoreHook\empty
  \global\let\ColumnHook\empty}
\newcommand*{\savecolumns}[1][default]%
  {\g@addto@macro\SaveRestoreHook{\savecolumns[#1]}}
\newcommand*{\restorecolumns}[1][default]%
  {\g@addto@macro\SaveRestoreHook{\restorecolumns[#1]}}
\newcommand*{\aligncolumn}[2]%
  {\g@addto@macro\ColumnHook{\column{#1}{#2}}}
\newcommand{\onelinecommentchars}{\quad-{}- }
\newcommand{\commentbeginchars}{\enskip\{-}
\newcommand{\commentendchars}{-\}\enskip}
\newcommand{\visiblecomments}{%
  \let\onelinecomment=\onelinecommentchars
  \let\commentbegin=\commentbeginchars
  \let\commentend=\commentendchars}
\newcommand{\invisiblecomments}{%
  \let\onelinecomment=\empty
  \let\commentbegin=\empty
  \let\commentend=\empty}
\newlength{\blanklineskip}
\newcommand{\hsindent}[1]{\quad}
\let\hspre\empty
\let\hspost\empty
\newcommand{\hsnewpar}[1]%
  {{\parskip=0pt\parindent=0pt\par\vskip #1\noindent}}
\newcommand{\hscodestyle}{}
\newcommand{\sethscode}[1]%
  {\expandafter\let\expandafter\hscode\csname #1\endcsname
   \expandafter\let\expandafter\endhscode\csname end#1\endcsname}
   \let\hspre\(\let\hspost\)%
   \let\hspre\(\let\hspost\)%
\newcommand{\plainhs}{\sethscode{plainhscode}}
\def\codeframewidth{\arrayrulewidth}
   \let\endoflinesave=\\
   \framedhslinecorrect\endoflinesave{.5ex}\hline
\newcommand{\framedhslinecorrect}[2]%
  {#1[#2]}
\def\column##1##2{}%
   \newcommand\>[1][]{}\newcommand\<[1][]{}\newcommand\\[1][]{}%
   \def\fromto##1##2##3{##3}%
\let\orighscode=\hscode
   \let\origendhscode=\endhscode
   \def\endhscode{\def\hscode{\endgroup\def\@currenvir{hscode}\\}\begingroup}
\def\hscode{\endgroup\def\@currenvir{hscode}}}%
   \global\let\hscode=\orighscode
   \global\let\endhscode=\origendhscode}%
\def\mathindent{1em} 
\renewcommand\Varid[1]{\mathord{\textsf{#1}}}
\renewcommand\Conid[1]{\mathord{\textsf{#1}}}
\def\frefdefname{Def.}
\def\Frefdefname{Def.}
\def\freflemname{Lemma}
\def\Freflemname{Lemma}
\def\frefthmname{Theorem}
\def\Frefthmname{Theorem}
\def\frefappendixname{Appendix}
\def\Frefappendixname{Appendix}
\def\fancyrefdeflabelprefix{def}
\def\fancyreflemlabelprefix{lem}
\def\fancyrefthmlabelprefix{thm}
\def\fancyrefappendixlabelprefix{appendix}
\newcommand{\ghc}{\textsc{ghc}\xspace}
\newcommand{\eg}{\textit{e.g.}\xspace}
\newcommand{\ie}{\textit{i.e.}\xspace}
\newcommand{\case}[3][]{\mathsf{case}_{#1} #2 \mathsf{of} \{#3\}^m_{k=1}}
\newcommand{\data}{\mathsf{data} }
\newcommand{\flet}[1][]{\mathsf{let}_{#1} }
\newcommand{\fin}{ \mathsf{in} }
\newcommand{\varid}[1]{\ensuremath{\Varid{#1}}}
\newcommand{\termsOf}[1]{\mathnormal{terms}(#1)}
\newcommand{\multiplicatedTypes}[1]{\bigotimes(#1)}
\newcommand{\ta}[2]{γ(#1)(#2)}
\newcommand{\figuresection}[1]{\par \addvspace{1em} \textbf{\sf #1}}
    \newcommand{\note}[1]{{\color{blue}{\begin{itemize} \item {#1} \end{itemize}}}}
    \newenvironment{alt}{\color{red}}{}
    \newcommandx{\unsure}[2][1=]{\todo[linecolor=red,backgroundcolor=red!25,bordercolor=red,#1]{#2}}
    \newcommandx{\info}[2][1=]{\todo[linecolor=green,backgroundcolor=green!25,bordercolor=green,#1]{#2}}
    \newcommandx{\change}[2][1=]{\todo[linecolor=blue,backgroundcolor=blue!25,bordercolor=blue,#1]{#2}}
    \newcommandx{\inconsistent}[2][1=]{\todo[linecolor=blue,backgroundcolor=blue!25,bordercolor=red,#1]{#2}}
    \newcommandx{\critical}[2][1=]{\todo[linecolor=blue,backgroundcolor=blue!25,bordercolor=red,#1]{#2}}
    \newcommand{\improvement}[1]{\todo[linecolor=pink,backgroundcolor=pink!25,bordercolor=pink]{#1}}
    \newcommandx{\resolved}[2][1=]{\todo[linecolor=OliveGreen,backgroundcolor=OliveGreen!25,bordercolor=OliveGreen,#1]{#2}} 
    \newcommandx{\rn}[1]{\todo[]{RRN: #1}} 
    \newcommandx{\simon}[1]{\todo[]{SPJ: #1}}
    \newcommandx{\jp}[1]{\todo[linecolor=blue,bordercolor=blue,backgroundcolor=cyan!10]{JP: #1}{}}
    \newcommand{\manuel}[1]{\todo[linecolor=purple,bordercolor=purple,backgroundcolor=blue!10]{Manuel: #1}{}}
    \newcommand{\note}[1]{}
    \newcommand{\unsure}[2]{}
    \newcommand{\info}[2]{}
    \newcommand{\change}[2]{}
    \newcommand{\inconsistent}[2]{}
    \newcommand{\critical}[2]{}
    \newcommand{\improvement}[1]{}
    \newcommand{\resolved}[2]{}
    \newcommand{\rn}[1]{}
    \newcommand{\simon}[1]{}
    \newcommand{\jp}[1]{}
    \newcommand{\manuel}[1]{}
\newcommand\HaskeLL{Linear Haskell\xspace{}}
\newcommand\calc{{\ensuremath{λ^q_\to}}}
\begin{document}

\title{Linear Haskell}       
\subtitle{Practical Linearity in a Higher-Order Polymorphic Language}                     


\author{Jean-Philippe Bernardy}
\affiliation{%
  \institution{University of Gothenburg}
  \department{Department of Philosophy, Linguistics and Theory of Science}
  \streetaddress{Olof Wijksgatan 6}
  \city{Gothenburg}
  \postcode{41255}
  \country{Sweden}}
\email{jean-philippe.bernardy@gu.se}
\author{Mathieu Boespflug}
\affiliation{%
  \institution{Tweag I/O}
  \city{Paris}
  \country{France}
}
\email{m@tweag.io}
\author{Ryan R. Newton}
\affiliation{%
  \institution{Indiana University}
  \city{Bloomington}
  \state{IN}
  \country{USA}
}
\email{rrnewton@indiana.edu}
\author{Simon Peyton Jones}
\affiliation{
  \institution{Microsoft Research}
  \city{Cambridge}
  \country{UK}
}
\email{simonpj@microsoft.com}
\author{Arnaud Spiwack}
\affiliation{
  \institution{Tweag I/O}
  \city{Paris}
  \country{France}
}
\email{arnaud.spiwack@tweag.io}

\renewcommand{\shortauthors}{J.-P. Bernardy, M. Boespflug, R. Newton,
  S. Peyton Jones, and A. Spiwack}


\begin{abstract}
  Linear type systems have a long and storied history, but not a clear
  path forward to integrate with existing languages such as OCaml or
  Haskell. In this paper, we study a linear type system
  designed with two crucial properties in mind:
  backwards-compatibility and code reuse across linear and non-linear
  users of a library. Only then can the benefits of linear types
  permeate conventional functional programming.  Rather than bifurcate
  types into linear and non-linear counterparts, we instead
  attach linearity to {\em function arrows}.  Linear functions can
  receive inputs from linearly-bound values, but can {\em also} operate over
  unrestricted, regular values.

  To demonstrate the efficacy of our linear type system~---~both how
  easy it can be integrated in an existing language implementation and
  how streamlined it makes it to write programs with linear
  types~---~we implemented our type system in 
  \textsc{ghc}, the leading Haskell compiler, and demonstrate
  two kinds of applications of linear types:
  mutable data with pure interfaces;
  and enforcing protocols in I/O-performing functions.
\end{abstract}

\ifx\draftmode\undefined    
\begin{CCSXML}\begin{hscode}\SaveRestoreHook
\column{B}{@{}>{\hspre}l<{\hspost}@{}}%
\column{E}{@{}>{\hspre}l<{\hspost}@{}}%
\>[B]{}\Varid{ccs2012}\mathbin{>}{}\<[E]%
\\
\>[B]{}\Varid{concept}\mathbin{>}{}\<[E]%
\\
\>[B]{}\Varid{concept\char95 id}\mathbin{>}\mathrm{10011007.10011006}. \mathrm{10011008.10011024}\mathbin{</}\Varid{concept\char95 id}\mathbin{>}{}\<[E]%
\\
\>[B]{}\Varid{concept\char95 desc}\mathbin{>}\Conid{Software}\;\Varid{and}\;\Varid{its}\;\Varid{engineering}\mathop{{\kern 1pt}'}\Conid{Language}\;\Varid{features}\mathbin{</}\Varid{concept\char95 desc}\mathbin{>}{}\<[E]%
\\
\>[B]{}\Varid{concept\char95 significance}\mathbin{>}\mathrm{500}\mathbin{</}\Varid{concept\char95 significance}\mathbin{>}{}\<[E]%
\\
\>[B]{}\mathbin{/}\Varid{concept}\mathbin{>}{}\<[E]%
\\
\>[B]{}\Varid{concept}\mathbin{>}{}\<[E]%
\\
\>[B]{}\Varid{concept\char95 id}\mathbin{>}\mathrm{10011007.10011006}. \mathrm{10011008.10011009}. \mathrm{10011012}\mathbin{</}\Varid{concept\char95 id}\mathbin{>}{}\<[E]%
\\
\>[B]{}\Varid{concept\char95 desc}\mathbin{>}\Conid{Software}\;\Varid{and}\;\Varid{its}\;\Varid{engineering}\mathop{{\kern 1pt}'}\Conid{Functional}\;\Varid{languages}\mathbin{</}\Varid{concept\char95 desc}\mathbin{>}{}\<[E]%
\\
\>[B]{}\Varid{concept\char95 significance}\mathbin{>}\mathrm{300}\mathbin{</}\Varid{concept\char95 significance}\mathbin{>}{}\<[E]%
\\
\>[B]{}\mathbin{/}\Varid{concept}\mathbin{>}{}\<[E]%
\\
\>[B]{}\Varid{concept}\mathbin{>}{}\<[E]%
\\
\>[B]{}\Varid{concept\char95 id}\mathbin{>}\mathrm{10011007.10011006}. \mathrm{10011039}\mathbin{</}\Varid{concept\char95 id}\mathbin{>}{}\<[E]%
\\
\>[B]{}\Varid{concept\char95 desc}\mathbin{>}\Conid{Software}\;\Varid{and}\;\Varid{its}\;\Varid{engineering}\mathop{{\kern 1pt}'}\Conid{Formal}\;\Varid{language}\;\Varid{definitions}\mathbin{</}\Varid{concept\char95 desc}\mathbin{>}{}\<[E]%
\\
\>[B]{}\Varid{concept\char95 significance}\mathbin{>}\mathrm{300}\mathbin{</}\Varid{concept\char95 significance}\mathbin{>}{}\<[E]%
\\
\>[B]{}\mathbin{/}\Varid{concept}\mathbin{>}{}\<[E]%
\\
\>[B]{}\mathbin{/}\Varid{ccs2012}\mathbin{>}{}\<[E]%
\ColumnHook
\end{hscode}\resethooks
\end{CCSXML}

\ccsdesc[500]{Software and its engineering~Language features}
\ccsdesc[300]{Software and its engineering~Functional languages}
\ccsdesc[300]{Software and its engineering~Formal language definitions}

\keywords{GHC, Haskell, laziness, linear logic, linear types,
 polymorphism, typestate}  
\fi

\maketitle

\ifx\longversion\undefined{
}
\else{
  \it This paper appears in the Proceeding of the ACM Conference on Principles of Programming Languages (POPL) 2018.  This version includes an Appendix that gives an operational semantics for the core language, and proofs of the metatheoretical results stated in the paper.
}
\fi

\section{Introduction}
\label{sec:introduction}

Despite their obvious promise, and a huge research literature, linear
type systems have not made it into mainstream programming languages,
even though linearity has inspired uniqueness typing in Clean, and
ownership typing in Rust.  We take up this challenge by extending
Haskell with linear types.
Our design supports many applications for linear types, but we focus on
two particular use-cases.  First, safe
update-in-place for mutable structures, such as arrays; and second,
enforcing access protocols for external \textsc{api}s, such as files,
sockets, channels and other resources.  Our particular contributions
are these:
\begin{itemize}
\item We describe an extension to Haskell for linear types, dubbed \HaskeLL, using
      two extended examples (\fref{sec:consumed}-\fref{sec:io-protocols}).
      The extension is \emph{non-invasive}:
      existing programs continue to typecheck,
      and existing datatypes can be used as-is even in linear parts
      of the program.
      The key to this non-invasiveness is that, in contrast to most other
      approaches, we focus on \emph{linearity on the function arrow}
      rather than \emph{linearity in the kinds} (\fref{sec:lin-arrow}).
\item Every function arrow can be declared linear, including those of
      constructor types. This results in datatypes which can store
      both linear values, in addition to unrestricted ones
      (\fref{sec:datatypes}-\ref{sec:non-linear-constructors}).
\item A benefit of linearity-on-the-arrow is that it naturally supports
      \emph{linearity polymorphism} (\fref{sec:lin-poly}).  This contributes
      to a smooth extension of Haskell by allowing many existing functions
      (map, compose, etc) to be given more general types, so they can
      work uniformly in both linear and non-linear code.
\item We formalise our system in a small, statically-typed core
      calculus that exhibits all these features (\fref{sec:calculus}).
      It enjoys the usual properties of progress and preservation.
\item We have implemented a prototype of the system as a modest extension to \textsc{ghc}
      (\fref{sec:impl}), which substantiates our claim of non-invasiveness.
      We use this prototype to implement case-study applications (\fref{sec:applications}).
      Our prototype performs linearity \emph{inference}, but a systematic
      treatment of type inference for linearity in our system remains open.
\end{itemize}
Retrofits often involve compromise and ad-hoc choices, but in fact we
have found that, as well as fitting into Haskell, our design holds
together in its own right.  We hope that it may perhaps serve as a
template for similar work in other languages.  There is a rich
literature on linear type systems, as we discuss in a long related
work section (\fref{sec:related}).

\section{Motivation and intuitions}
\label{sec:programming-intro}

Informally, \emph{a function is ``linear'' if it consumes its argument exactly once}.
(It is ``affine'' if it consumes it at most once.)  A linear type system
gives a static guarantee that a claimed linear function really is linear.
There are many motivations for linear type systems, but here we focus on two of them:
\begin{itemize}
\item \emph{Is it safe to update this value in-place} (\fref{sec:freezing-arrays})?
That depends on whether there
are aliases to the value; update-in-place is \textsc{ok} if there are no other pointers to it.
Linearity supports a more efficient implementation, by $O(1)$ update rather than $O(n)$ copying.
\item \emph{Am I obeying the usage protocol of this external resource}
(\fref{sec:io-protocols})?
For example, an open file should be closed, and should not be used after it it has been closed;
a socket should be opened, then bound, and only then used for reading; a malloc'd memory
block should be freed, and should not be used after that.
Here, linearity does not affect efficiency, but rather eliminates many bugs.
\end{itemize}
We introduce our extension to Haskell, which we call \HaskeLL{}, by
focusing on these two use-cases.  In doing so, we introduce a number
of ideas that we flesh out in subsequent subsections.

\subsection{Operational intuitions}
\label{sec:consumed}

We have said informally that \emph{``a linear function consumes its argument
exactly once''}. But what exactly does that mean?

\begin{quote}
\emph{Meaning of the linear arrow}:
\ensuremath{\Varid{f}\mathbin{::}\Varid{s}\mathbin{⊸}\Varid{t}} guarantees that if \ensuremath{(\Varid{f}\;\Varid{u})} is consumed exactly once,
then the argument \ensuremath{\Varid{u}} is consumed exactly once.
\end{quote}
To make sense of this statement we need to know what ``consumed exactly once'' means.
Our definition is based on the type of the value concerned:
\begin{definition}[Consume exactly once]~ \label{def:consume}
\begin{itemize}
\item To consume a value of atomic base type (like \ensuremath{\Conid{Int}} or \ensuremath{\Conid{Ptr}}) exactly once, just evaluate it.
\item To consume a function value exactly once, apply it to one argument, and consume its result exactly once.
\item To consume a pair exactly once, pattern-match on it, and consume each component exactly once.
\item In general, to consume a value of an algebraic datatype exactly once, pattern-match on it,
  and consume all its linear components exactly once
  (\fref{sec:non-linear-constructors})\footnote{You may deduce that pairs have linear components,
    and indeed they do, as we discuss in \fref{sec:non-linear-constructors}.}.
\end{itemize}
\end{definition}
\noindent
This definition is enough to allow programmers to reason about the
typing of their functions, and it drives the formal typing judgements in
\fref{sec:statics}.

Note that a linear arrow specifies \emph{how the function uses its argument}. It does not
restrict \emph{the arguments to which the function can be applied}.
In particular, a linear function cannot assume that it is given the
unique pointer to its argument.  For example, if \ensuremath{\Varid{f}\mathbin{::}\Varid{s}\mathbin{⊸}\Varid{t}}, then\
this is fine:
\begin{hscode}\SaveRestoreHook
\column{B}{@{}>{\hspre}l<{\hspost}@{}}%
\column{E}{@{}>{\hspre}l<{\hspost}@{}}%
\>[B]{}\Varid{g}\mathbin{::}\Varid{s}\to \Varid{t}{}\<[E]%
\\
\>[B]{}\Varid{g}\;\Varid{x}\mathrel{=}\Varid{f}\;\Varid{x}{}\<[E]%
\ColumnHook
\end{hscode}\resethooks
The type of \ensuremath{\Varid{g}} makes no particular guarantees about the way in which it uses \ensuremath{\Varid{x}};
in particular, \ensuremath{\Varid{g}} can pass that argument to \ensuremath{\Varid{f}}.

\subsection{Safe mutable arrays}
\label{sec:freezing-arrays}

The Haskell language provides immutable arrays, built with the function \ensuremath{\Varid{array}}\footnote{
Haskell actually generalises over the type of array indices, but for this
paper we will assume that the arrays are indexed, from 0, by \ensuremath{\Conid{Int}} indices.}:
\begin{hscode}\SaveRestoreHook
\column{B}{@{}>{\hspre}l<{\hspost}@{}}%
\column{E}{@{}>{\hspre}l<{\hspost}@{}}%
\>[B]{}\Varid{array}\mathbin{::}\Conid{Int}\to [\mskip1.5mu (\Conid{Int},\Varid{a})\mskip1.5mu]\to \Conid{Array}\;\Varid{a}{}\<[E]%
\ColumnHook
\end{hscode}\resethooks

\begin{wrapfigure}[7]{r}[0pt]{7.0cm} 
\vspace{-8mm}
\begin{hscode}\SaveRestoreHook
\column{B}{@{}>{\hspre}l<{\hspost}@{}}%
\column{3}{@{}>{\hspre}l<{\hspost}@{}}%
\column{9}{@{}>{\hspre}l<{\hspost}@{}}%
\column{13}{@{}>{\hspre}l<{\hspost}@{}}%
\column{E}{@{}>{\hspre}l<{\hspost}@{}}%
\>[3]{}\mathbf{type}\;\Conid{MArray}\;\Varid{s}\;\Varid{a}{}\<[E]%
\\
\>[3]{}\mathbf{type}\;\Conid{Array}\;\Varid{a}{}\<[E]%
\\[\blanklineskip]%
\>[3]{}\Varid{newMArray}\mathbin{::}\Conid{Int}\to \Conid{ST}\;\Varid{s}\;(\Conid{MArray}\;\Varid{s}\;\Varid{a}){}\<[E]%
\\
\>[3]{}\Varid{read}{}\<[9]%
\>[9]{}\mathbin{::}\Conid{MArray}\;\Varid{s}\;\Varid{a}\to \Conid{Int}\to \Conid{ST}\;\Varid{s}\;\Varid{a}{}\<[E]%
\\
\>[3]{}\Varid{write}\mathbin{::}\Conid{MArray}\;\Varid{s}\;\Varid{a}\to (\Conid{Int},\Varid{a})\to \Conid{ST}\;\Varid{s}\;(){}\<[E]%
\\
\>[3]{}\Varid{unsafeFreeze}\mathbin{::}\Conid{MArray}\;\Varid{s}\;\Varid{a}\to \Conid{ST}\;\Varid{s}\;(\Conid{Array}\;\Varid{a}){}\<[E]%
\\[\blanklineskip]%
\>[3]{}\varid{forM}\_\mathbin{::}{}\<[13]%
\>[13]{}\Conid{Monad}\;\Varid{m}\Rightarrow {}\<[E]%
\\
\>[13]{}[\mskip1.5mu \Varid{a}\mskip1.5mu]\to (\Varid{a}\to \Varid{m}\;())\to \Varid{m}\;(){}\<[E]%
\\
\>[3]{}\Varid{runST}\mathbin{::}(∀\Varid{s}. \Conid{ST}\;\Varid{s}\;\Varid{a})\to \Varid{a}{}\<[E]%
\ColumnHook
\end{hscode}\resethooks
\vspace{-4mm}
\caption{Signatures for array primitives (current \textsc{ghc})}
{\vspace{-2.5mm}\hfill}
\label{fig:array-sigs}
\end{wrapfigure}
But how is \ensuremath{\Varid{array}} implemented? A possible answer is ``it is built-in; don't ask''.
But in reality \textsc{ghc} implements \ensuremath{\Varid{array}} using more primitive pieces, so that library authors
can readily implement more complex variations --- and they certainly do: see for example \fref{sec:cursors}.  Here is the
definition of \ensuremath{\Varid{array}}, using library functions whose types are given
in \fref{fig:array-sigs}.

\begin{hscode}\SaveRestoreHook
\column{B}{@{}>{\hspre}l<{\hspost}@{}}%
\column{3}{@{}>{\hspre}l<{\hspost}@{}}%
\column{8}{@{}>{\hspre}l<{\hspost}@{}}%
\column{E}{@{}>{\hspre}l<{\hspost}@{}}%
\>[B]{}\Varid{array}\mathbin{::}\Conid{Int}\to [\mskip1.5mu (\Conid{Int},\Varid{a})\mskip1.5mu]\to \Conid{Array}\;\Varid{a}{}\<[E]%
\\
\>[B]{}\Varid{array}\;\Varid{size}\;\Varid{pairs}\mathrel{=}\Varid{runST}{}\<[E]%
\\
\>[B]{}\hsindent{3}{}\<[3]%
\>[3]{}(\mathbf{do}\;{}\<[8]%
\>[8]{}\{\mskip1.5mu \Varid{ma}\leftarrow \Varid{newMArray}\;\Varid{size}{}\<[E]%
\\
\>[8]{};\varid{forM}\_\;\Varid{pairs}\;(\Varid{write}\;\Varid{ma}){}\<[E]%
\\
\>[8]{};\Varid{unsafeFreeze}\;\Varid{ma}\mskip1.5mu\}){}\<[E]%
\ColumnHook
\end{hscode}\resethooks
In the first line we allocate a mutable array, of type \ensuremath{\Conid{MArray}\;\Varid{s}\;\Varid{a}}.
Then we iterate over the \ensuremath{\Varid{pairs}}, with \ensuremath{\varid{forM}\_}, updating the array in place
for each pair.  Finally, we freeze the mutable array, returning an immutable
array as required.  All this is done in the \ensuremath{\Conid{ST}} monad, using \ensuremath{\Varid{runST}} to
securely encapsulate an imperative algorithm in a purely-functional context,
as described in \cite{launchbury_st_1995}.

Why is \ensuremath{\Varid{unsafeFreeze}} unsafe?  The result of \ensuremath{(\Varid{unsafeFreeze}\;\Varid{ma})} is a new
immutable array, but to avoid an unnecessary copy,
it is actually \ensuremath{\Varid{ma}}.  The intention is, of course, that
that \ensuremath{\Varid{unsafeFreeze}} should be the last use of the mutable array; but
nothing stops us continuing to mutate it further, with quite undefined semantics.
The ``unsafe'' in the function name is a \textsc{ghc} convention meaning ``the programmer
has a proof obligation here that the compiler cannot check''.

The other unsatisfactory thing about the monadic approach to array
construction is that it is overly sequential. Suppose you had a pair of
mutable arrays, with some updates to perform to each; these updates could
be done in {\em parallel}, but the \ensuremath{\Conid{ST}} monad would serialise them.

Linear types allow a more secure and less sequential interface.  \HaskeLL{}
introduces a new kind of function type: the \emph{linear arrow} \ensuremath{\Varid{a}\mathbin{⊸}\Varid{b}}. A linear
function \ensuremath{\Varid{f}\mathbin{::}\Varid{a}\mathbin{⊸}\Varid{b}} must consume its argument \emph{exactly once}.  This new
arrow is used in a new array \textsc{api}, given in
\fref{fig:linear-array-sigs}.
\begin{figure}[h]
\hfill
\vspace{-2mm}
\begin{hscode}\SaveRestoreHook
\column{B}{@{}>{\hspre}l<{\hspost}@{}}%
\column{3}{@{}>{\hspre}l<{\hspost}@{}}%
\column{E}{@{}>{\hspre}l<{\hspost}@{}}%
\>[3]{}\mathbf{type}\;\Conid{MArray}\;\Varid{a}{}\<[E]%
\\
\>[3]{}\mathbf{type}\;\Conid{Array}\;\Varid{a}{}\<[E]%
\\[\blanklineskip]%
\>[3]{}\Varid{newMArray}\mathbin{::}\Conid{Int}\to (\Conid{MArray}\;\Varid{a}\mathbin{⊸}\Conid{Unrestricted}\;\Varid{b})\mathbin{⊸}\Varid{b}{}\<[E]%
\\
\>[3]{}\Varid{write}\mathbin{::}\Conid{MArray}\;\Varid{a}\mathbin{⊸}(\Conid{Int},\Varid{a})\to \Conid{MArray}\;\Varid{a}{}\<[E]%
\\
\>[3]{}\Varid{read}\mathbin{::}\Conid{MArray}\;\Varid{a}\mathbin{⊸}\Conid{Int}\to (\Conid{MArray}\;\Varid{a},\Conid{Unrestricted}\;\Varid{a}){}\<[E]%
\\
\>[3]{}\Varid{freeze}\mathbin{::}\Conid{MArray}\;\Varid{a}\mathbin{⊸}\Conid{Unrestricted}\;(\Conid{Array}\;\Varid{a}){}\<[E]%
\ColumnHook
\end{hscode}\resethooks
\vspace{-5mm}
\caption{Type signatures for array primitives (linear version), allowing
  in-place update.}
{\vspace{-2.5mm}\hfill\vspace{-2.5mm}}
\label{fig:linear-array-sigs}
\end{figure}

\noindent
Using this \textsc{api} we can define \ensuremath{\Varid{array}} thus:
\begin{hscode}\SaveRestoreHook
\column{B}{@{}>{\hspre}l<{\hspost}@{}}%
\column{E}{@{}>{\hspre}l<{\hspost}@{}}%
\>[B]{}\Varid{array}\mathbin{::}\Conid{Int}\to [\mskip1.5mu (\Conid{Int},\Varid{a})\mskip1.5mu]\to \Conid{Array}\;\Varid{a}{}\<[E]%
\\
\>[B]{}\Varid{array}\;\Varid{size}\;\Varid{pairs}\mathrel{=}\Varid{newMArray}\;\Varid{size}\;(\lambda \Varid{ma}\to \Varid{freeze}\;(\Varid{foldl}\;\Varid{write}\;\Varid{ma}\;\Varid{pairs})){}\<[E]%
\ColumnHook
\end{hscode}\resethooks
There are several things to note here:
\begin{itemize}

\item The function \ensuremath{\Varid{newMArray}} allocates a fresh, mutable array of the
  specified size, and passes it to the function supplied as the second
  argument to \ensuremath{\Varid{newMArray}}, as a {\em linear} value \ensuremath{\Varid{ma}}.

\item Even though linearity is a property of function arrows, not of
  types (\fref{sec:lin-arrow}), we still disinguish the type of
  mutable arrays \ensuremath{\Conid{MArray}} from that of immutable arrays \ensuremath{\Conid{Array}}, because in this {\sc api}
  only immutable arrays are {\em allowed} to be non-linear (unrestricted).  The
  way to say that results can be freely shared is to use
  \ensuremath{\Conid{Unrestricted}} (our version of linear logic's \ensuremath{\mathbin{!}} modality, see \fref{sec:datatypes}), as in the type of \ensuremath{\Varid{freeze}}.

\item Because \ensuremath{\Varid{freeze}} consumes its input, there is no danger of the same
  mutable array being subsequently written to, eliminating the problem with
  \ensuremath{\Varid{unsafeFreeze}}.
  
\item Since \ensuremath{\Varid{ma}} is linear, we can only use it once. Thus each call to
  \ensuremath{\Varid{write}} returns a (logically) new array, so that the array is single-threaded,
  by \ensuremath{\Varid{foldl}}, through the sequence of writes.
  
\item Above, \ensuremath{\Varid{foldl}} has the type \ensuremath{(\Varid{a}\mathbin{⊸}\Varid{b}\mathbin{⊸}\Varid{a})\to \Varid{a}\mathbin{⊸}[\mskip1.5mu \Varid{b}\mskip1.5mu]\mathbin{⊸}\Varid{a}},
which expresses that it consumes its second argument linearly
(the mutable array), while the function it is given as
its first argument (\ensuremath{\Varid{write}}) must be linear.
As we shall see in \fref{sec:lin-poly} this is not a new \ensuremath{\Varid{foldl}}, but
an instance of a more general, multiplicity-polymorphic version of
a single \ensuremath{\Varid{foldl}} (where ``multiplicity'' refers to how many times a function
consumes its input).
\end{itemize}

Three factors ensure that a unique \ensuremath{\Conid{MArray}} is needed
in any given application \ensuremath{\Varid{x}\mathrel{=}\Varid{newMArray}\;\Varid{k}}, and in turn that
update-in-place is safe. First, \ensuremath{\Varid{newMArray}} introduces {\em only}
a linear \ensuremath{\Varid{ma}\mathbin{::}\Conid{MArray}\;\Varid{a}}.  Second, no function that
consumes an \ensuremath{\Conid{MArray}\;\Varid{a}} returns more than a {\em single} pointer to it;
so \ensuremath{\Varid{k}} can never obtain two pointers to \ensuremath{\Varid{ma}}.  Third, \ensuremath{\Varid{k}} must wrap its
result in \ensuremath{\Conid{Unrestricted}}. This third point means that even if \ensuremath{\Varid{x}} is used in an
unrestricted way, it suffices to call \ensuremath{\Varid{k}} a single time to obtain the
result, and in turn no mutable pointer to \ensuremath{\Varid{ma}} can {\em escape} when
\ensuremath{\Varid{newArray}} returns (\ie{} when the \ensuremath{\Varid{b}} result of \ensuremath{\Varid{newArray}} is
evaluated).

With this mutable array \textsc{api},
the \ensuremath{\Conid{ST}} monad has disappeared altogether; it is the array \emph{itself}
that must be single threaded, not the operations of a monad. That removes
the unnecessary sequentialisation we mentioned earlier and opens the
possibility of exploiting more parallelism at runtime.

Compared to the \textit{status quo} (using \ensuremath{\Conid{ST}} and \ensuremath{\Varid{unsafeFreeze}}), the
other major benefit
is in shrinking the trusted code base, because more library code (and it
can be particularly gnarly code) is statically typechecked.  Clients who
use only {\em immutable} arrays do not see the inner workings of the library, and will
be unaffected. Of course, the functions of
\fref{fig:linear-array-sigs} are still part of the trusted code base,
in particular, they must not lie about linearity. Our second use-case has a much more direct impact on library clients.

\subsection{I/O protocols} \label{sec:io-protocols}

\begin{figure}
\begin{minipage}{0.40 \textwidth} 
\begin{hscode}\SaveRestoreHook
\column{B}{@{}>{\hspre}l<{\hspost}@{}}%
\column{3}{@{}>{\hspre}l<{\hspost}@{}}%
\column{E}{@{}>{\hspre}l<{\hspost}@{}}%
\>[3]{}\mathbf{type}\;\Conid{File}{}\<[E]%
\\
\>[3]{}\Varid{openFile}\mathbin{::}\Conid{FilePath}\to \Conid{IO}\;\Conid{File}{}\<[E]%
\\
\>[3]{}\Varid{readLine}\mathbin{::}\Conid{File}\to \Conid{IO}\;\Conid{ByteString}{}\<[E]%
\\
\>[3]{}\Varid{closeFile}\mathbin{::}\Conid{File}\to \Conid{IO}\;(){}\<[E]%
\ColumnHook
\end{hscode}\resethooks
\vspace{-7mm}
\caption{Types for traditional file IO} \label{fig:io-traditional}
\end{minipage}%
\begin{minipage}{0.60 \textwidth}
\begin{hscode}\SaveRestoreHook
\column{B}{@{}>{\hspre}l<{\hspost}@{}}%
\column{3}{@{}>{\hspre}l<{\hspost}@{}}%
\column{E}{@{}>{\hspre}l<{\hspost}@{}}%
\>[3]{}\mathbf{type}\;\Conid{File}{}\<[E]%
\\
\>[3]{}\Varid{openFile}\mathbin{::}\Conid{FilePath}\to \varid{IO}_{\varid{L}}\;\mathrm{1}\;\Conid{File}{}\<[E]%
\\
\>[3]{}\Varid{readLine}\mathbin{::}\Conid{File}\mathbin{⊸}\varid{IO}_{\varid{L}}\;\mathrm{1}\;(\Conid{File},\Conid{Unrestricted}\;\Conid{ByteString}){}\<[E]%
\\
\>[3]{}\Varid{closeFile}\mathbin{::}\Conid{File}\mathbin{⊸}\varid{IO}_{\varid{L}}\;\omega\;(){}\<[E]%
\ColumnHook
\end{hscode}\resethooks
\vspace{-7mm}
\caption{Types for linear file IO} \label{fig:io-linear}
\end{minipage}
\vspace{-2mm}
\hfill
\end{figure}

Consider the \textsc{api} for files in \fref{fig:io-traditional}, where a
\ensuremath{\Conid{File}} is a cursor in a physical file.
Each call
to \ensuremath{\Varid{readLine}} returns a \ensuremath{\Conid{ByteString}} (the line) and moves the cursor one line
forward.  But nothing stops us reading a file after we have closed it,
or forgetting to close it.
An alternative \textsc{api} using linear types is given in \fref{fig:io-linear}.
Using it we can write a simple file handling program, \ensuremath{\Varid{firstLine}}, shown here.
%
\begin{hscode}\SaveRestoreHook
\column{B}{@{}>{\hspre}l<{\hspost}@{}}%
\column{3}{@{}>{\hspre}l<{\hspost}@{}}%
\column{7}{@{}>{\hspre}l<{\hspost}@{}}%
\column{E}{@{}>{\hspre}l<{\hspost}@{}}%
\>[B]{}\Varid{firstLine}\mathbin{::}\Conid{FilePath}\to \varid{IO}_{\varid{L}}\;\omega\;\Conid{Bytestring}{}\<[E]%
\\
\>[B]{}\Varid{firstLine}\;\Varid{fp}\mathrel{=}{}\<[E]%
\\
\>[B]{}\hsindent{3}{}\<[3]%
\>[3]{}\mathbf{do}\;{}\<[7]%
\>[7]{}\{\mskip1.5mu \Varid{f}\leftarrow \Varid{openFile}\;\Varid{fp}{}\<[E]%
\\
\>[7]{};(\Varid{f},\Conid{Unrestricted}\;\Varid{bs})\leftarrow \Varid{readLine}\;\Varid{f}{}\<[E]%
\\
\>[7]{};\Varid{closeFile}\;\Varid{f}{}\<[E]%
\\
\>[7]{};\Varid{return}\;\Varid{bs}\mskip1.5mu\}{}\<[E]%
\ColumnHook
\end{hscode}\resethooks
%
Notice several things:
\begin{itemize}

\item Operations on files remain monadic, unlike the case with mutable arrays.
I/O operations affect the world, and hence must be sequenced.  It is not enough
to sequence operations on files individually, as it was for arrays.

\item We generalise the \ensuremath{\Conid{IO}} monad so that it expresses whether or not the
returned value is linear.  We add an extra {\em multiplicity} type parameter \ensuremath{\Varid{p}} to the monad \ensuremath{\varid{IO}_{\varid{L}}},
where \ensuremath{\Varid{p}} can be \ensuremath{\mathrm{1}} or \ensuremath{\omega}, indicating a linear or unrestricted result, respectively.
Now \ensuremath{\Varid{openFile}} returns \ensuremath{\varid{IO}_{\varid{L}}\;\mathrm{1}\;\Conid{File}},
the ``\ensuremath{\mathrm{1}}'' indicating that the returned \ensuremath{\Conid{File}} must be used linearly.
We will return to how \ensuremath{\varid{IO}_{\varid{L}}} is defined in \fref{sec:linear-io}.

\item As before, operations on linear values must consume their input
and return a new one; here \ensuremath{\Varid{readLine}} consumes the \ensuremath{\Conid{File}} and produces a new one.

\item Unlike the \ensuremath{\Conid{File}}, the \ensuremath{\Conid{ByteString}} returned by \ensuremath{\Varid{readLine}} is unrestricted,
and the type of \ensuremath{\Varid{readLine}} indicates this.

\end{itemize}
It may seem tiresome to have to thread the \ensuremath{\Conid{File}} as well as sequence
operations with the \ensuremath{\Conid{IO}} monad. But in fact it is often useful do
to do so, because we can use types
to witness the state of the resource, \eg, with separate
types for an open or closed \ensuremath{\Conid{File}}. We show applications in \fref{sec:cursors} and \fref{sec:sockets}.

\subsection{Linear datatypes}
\label{sec:linear-constructors}
\label{sec:datatypes}

With the above intutions in mind, what type should we assign to a data
constructor such as the pairing constructor \ensuremath{(,)}?  Here are two possibilities:
\begin{center}
\vspace{-1mm}
\begin{hscode}\SaveRestoreHook
\column{B}{@{}>{\hspre}l<{\hspost}@{}}%
\column{10}{@{}>{\hspre}l<{\hspost}@{}}%
\column{26}{@{}>{\hspre}l<{\hspost}@{}}%
\column{41}{@{}>{\hspre}l<{\hspost}@{}}%
\column{49}{@{}>{\hspre}l<{\hspost}@{}}%
\column{E}{@{}>{\hspre}l<{\hspost}@{}}%
\>[B]{}(,)\mathbin{::}{}\<[10]%
\>[10]{}\Varid{a}\mathbin{⊸}\Varid{b}\mathbin{⊸}(\Varid{a},\Varid{b})\;{}\<[26]%
\>[26]{}\hspace{2cm}\;{}\<[41]%
\>[41]{}(,)\mathbin{::}{}\<[49]%
\>[49]{}\Varid{a}\to \Varid{b}\to (\Varid{a},\Varid{b}){}\<[E]%
\ColumnHook
\end{hscode}\resethooks
\vspace{-4mm}
\end{center}
Using the definition in \fref{sec:consumed}, the former is clearly the correct
choice: if the result of \ensuremath{(,)\;\Varid{e1}\;\Varid{e2}} is consumed exactly once,
then (by \fref{def:consume}),
\ensuremath{\Varid{e1}} and \ensuremath{\Varid{e2}} are each consumed exactly once; and hence \ensuremath{(,)} is linear it its
arguments.

\begin{wrapfigure}[5]{r}[0pt]{5cm} 
\vspace{-6mm}
\begin{hscode}\SaveRestoreHook
\column{B}{@{}>{\hspre}l<{\hspost}@{}}%
\column{E}{@{}>{\hspre}l<{\hspost}@{}}%
\>[B]{}\Varid{f1}\mathbin{::}(\Conid{Int},\Conid{Int})\to (\Conid{Int},\Conid{Int}){}\<[E]%
\\
\>[B]{}\Varid{f1}\;\Varid{x}\mathrel{=}\mathbf{case}\;\Varid{x}\;\mathbf{of}\;(\Varid{a},\Varid{b})\to (\Varid{a},\Varid{a}){}\<[E]%
\\[\blanklineskip]%
\>[B]{}\Varid{f2}\mathbin{::}(\Conid{Int},\Conid{Int})\mathbin{⊸}(\Conid{Int},\Conid{Int}){}\<[E]%
\\
\>[B]{}\Varid{f2}\;\Varid{x}\mathrel{=}\mathbf{case}\;\Varid{x}\;\mathbf{of}\;(\Varid{a},\Varid{b})\to (\Varid{b},\Varid{a}){}\<[E]%
\ColumnHook
\end{hscode}\resethooks
\end{wrapfigure}
So much for construction; what about pattern matching?  Consider
\ensuremath{\Varid{f1}} and \ensuremath{\Varid{f2}} defined here;
\ensuremath{\Varid{f1}} is an ordinary Haskell function. Even though the data constructor \ensuremath{(,)} has
a linear type, that does \emph{not} imply that the pattern-bound variables \ensuremath{\Varid{a}} and \ensuremath{\Varid{b}}
must be consumed exactly once; and indeed they are not.
Therefore, \ensuremath{\Varid{f1}} does not have the linear type \ensuremath{(\Conid{Int},\Conid{Int})\mathbin{⊸}(\Conid{Int},\Conid{Int})}.
Why not?  If the result of \ensuremath{(\Varid{f1}\;\Varid{t})} is consumed once, is \ensuremath{\Varid{t}} guaranteed to be consumed
once?  No: \ensuremath{\Varid{t}} is guaranteed to be evaluated once, but its first component is then
consumed twice and its second component not at all, contradicting \Fref{def:consume}.
In contrast, \ensuremath{\Varid{f2}} \emph{does} have a linear type: if \ensuremath{(\Varid{f2}\;\Varid{t})} is consumed exactly once,
then indeed \ensuremath{\Varid{t}} is consumed exactly once.
The key point here is that \emph{the same pair constructor works in both functions;
we do not need a special non-linear pair}.

The same idea applies to all existing Haskell datatypes: in
\HaskeLL{} we treat all datatypes defined using legacy Haskell-98
(non-\textsc{gadt}) syntax as defining constructors with linear arrows.
For example here is a declaration of \HaskeLL{}'s list type,
whose constructor \ensuremath{(\mathbin{:})} uses linear arrows:

\begin{wrapfigure}[8]{r}[0pt]{4.5cm}\vspace{-2mm}
\begin{hscode}\SaveRestoreHook
\column{B}{@{}>{\hspre}l<{\hspost}@{}}%
\column{E}{@{}>{\hspre}l<{\hspost}@{}}%
\>[B]{}\mathbf{data}\;[\mskip1.5mu \Varid{a}\mskip1.5mu]\mathrel{=}[\mskip1.5mu \mskip1.5mu]\mid \Varid{a}\mathbin{:}[\mskip1.5mu \Varid{a}\mskip1.5mu]{}\<[E]%
\ColumnHook
\end{hscode}\resethooks
\begin{hscode}\SaveRestoreHook
\column{B}{@{}>{\hspre}l<{\hspost}@{}}%
\column{9}{@{}>{\hspre}l<{\hspost}@{}}%
\column{E}{@{}>{\hspre}l<{\hspost}@{}}%
\>[B]{}(\plus )\mathbin{::}[\mskip1.5mu \Varid{a}\mskip1.5mu]\mathbin{⊸}[\mskip1.5mu \Varid{a}\mskip1.5mu]\mathbin{⊸}[\mskip1.5mu \Varid{a}\mskip1.5mu]{}\<[E]%
\\
\>[B]{}[\mskip1.5mu \mskip1.5mu]{}\<[9]%
\>[9]{}\plus \Varid{ys}\mathrel{=}\Varid{ys}{}\<[E]%
\\
\>[B]{}(\Varid{x}\mathbin{:}\Varid{xs}){}\<[9]%
\>[9]{}\plus \Varid{ys}\mathrel{=}\Varid{x}\mathbin{:}(\Varid{xs}\plus \Varid{ys}){}\<[E]%
\ColumnHook
\end{hscode}\resethooks
\end{wrapfigure}
Just as with pairs, this is not a new, linear list type: this \emph{is}
\HaskeLL{}'s list type, and all existing Haskell functions will work
over it perfectly well.
Even better, many list-based functions are in fact linear, and
can be given a more precise type. For example we can write \ensuremath{(\plus )} as
follows:

This type says that if \ensuremath{(\Varid{xs}\plus \Varid{ys})} is consumed exactly once, then
\ensuremath{\Varid{xs}} is consumed exactly once, and so is \ensuremath{\Varid{ys}}, and indeed our type
system will accept this definition.

As before, giving a more precise type to \ensuremath{(\plus )} only {\em strengthens} the
contract that \ensuremath{(\plus )} offers to its callers; \emph{it does not restrict
  its usage}. For example:
\begin{hscode}\SaveRestoreHook
\column{B}{@{}>{\hspre}l<{\hspost}@{}}%
\column{3}{@{}>{\hspre}l<{\hspost}@{}}%
\column{E}{@{}>{\hspre}l<{\hspost}@{}}%
\>[3]{}\Varid{sum}\mathbin{::}[\mskip1.5mu \Conid{Int}\mskip1.5mu]\mathbin{⊸}\Conid{Int}{}\<[E]%
\\
\>[3]{}\Varid{f}\mathbin{::}[\mskip1.5mu \Conid{Int}\mskip1.5mu]\mathbin{⊸}[\mskip1.5mu \Conid{Int}\mskip1.5mu]\to \Conid{Int}{}\<[E]%
\\
\>[3]{}\Varid{f}\;\Varid{xs}\;\Varid{ys}\mathrel{=}\Varid{sum}\;(\Varid{xs}\plus \Varid{ys})\mathbin{+}\Varid{sum}\;\Varid{ys}{}\<[E]%
\ColumnHook
\end{hscode}\resethooks
Here the two arguments to \ensuremath{(\plus )} have different multiplicities, but
the function \ensuremath{\Varid{f}} guarantees that it will consume \ensuremath{\Varid{xs}} exactly once if
\ensuremath{(\Varid{f}\;\Varid{xs}\;\Varid{ys})} is consumed exactly once.

For an existing language, being able to strengthen \ensuremath{(\plus )}, and similar
functions, in a {\em backwards-compatible} way is a huge boon.  Of
course, not all functions are linear: a function may legitimately
demand unrestricted input.  For example, the function \ensuremath{\Varid{f}} above
consumes \ensuremath{\Varid{ys}} twice, and so
\ensuremath{\Varid{f}} needs an unrestricted arrow for that argument.
\label{sec:compatibility}

Finally, we can use the very same pairs and lists
types to contain linear values (such as mutable arrays) without
compromising safety.  For example:
\begin{hscode}\SaveRestoreHook
\column{B}{@{}>{\hspre}l<{\hspost}@{}}%
\column{17}{@{}>{\hspre}l<{\hspost}@{}}%
\column{30}{@{}>{\hspre}l<{\hspost}@{}}%
\column{E}{@{}>{\hspre}l<{\hspost}@{}}%
\>[B]{}\Varid{upd}\mathbin{::}(\Conid{MArray}\;\Conid{Char},\Conid{MArray}\;\Conid{Char})\mathbin{⊸}\Conid{Int}\to (\Conid{MArray}\;\Conid{Char},\Conid{MArray}\;\Conid{Char}){}\<[E]%
\\
\>[B]{}\Varid{upd}\;(\Varid{a1},\Varid{a2})\;\Varid{n}{}\<[17]%
\>[17]{}\mid \Varid{n}\geq \mathrm{10}{}\<[30]%
\>[30]{}\mathrel{=}(\Varid{write}\;\Varid{a1}\;\Varid{n}\;\text{\tt 'x'},\Varid{a2}){}\<[E]%
\\
\>[17]{}\mid \Varid{otherwise}{}\<[30]%
\>[30]{}\mathrel{=}(\Varid{write}\;\Varid{a2}\;\Varid{n}\;\text{\tt 'o'},\Varid{a1}){}\<[E]%
\ColumnHook
\end{hscode}\resethooks

\subsection{Unrestricted data constructors}
\label{sec:non-linear-constructors}

Suppose we want to pass a linear \ensuremath{\Conid{MArray}} and an unrestricted \ensuremath{\Conid{Int}} to a function \ensuremath{\Varid{f}}.
We could give \ensuremath{\Varid{f}} the signature \ensuremath{\Varid{f}\mathbin{::}\Conid{MArray}\;\Conid{Int}\mathbin{⊸}\Conid{Int}\to \Conid{MArray}\;\Conid{Int}}.  But suppose
we wanted to uncurry the function; we could then give it the type
\begin{hscode}\SaveRestoreHook
\column{B}{@{}>{\hspre}l<{\hspost}@{}}%
\column{3}{@{}>{\hspre}l<{\hspost}@{}}%
\column{29}{@{}>{\hspre}l<{\hspost}@{}}%
\column{E}{@{}>{\hspre}l<{\hspost}@{}}%
\>[3]{}\Varid{f}\mathbin{::}(\Conid{MArray}\;\Conid{Int},\Conid{Int})\mathbin{⊸}{}\<[29]%
\>[29]{}\Conid{MArray}\;\Conid{Int}{}\<[E]%
\ColumnHook
\end{hscode}\resethooks
But this is no good: now \ensuremath{\Varid{f}} is only allowed to use the \ensuremath{\Conid{Int}} linearly, but it
might actually use it many times.  For this reason it is extremely useful to be
able to declare data constructors with non-linear types, like this:
\begin{hscode}\SaveRestoreHook
\column{B}{@{}>{\hspre}l<{\hspost}@{}}%
\column{3}{@{}>{\hspre}l<{\hspost}@{}}%
\column{32}{@{}>{\hspre}l<{\hspost}@{}}%
\column{E}{@{}>{\hspre}l<{\hspost}@{}}%
\>[3]{}\mathbf{data}\;\Conid{PLU}\;\Varid{a}\;\Varid{b}\;\mathbf{where}\;\{\mskip1.5mu \Conid{PLU}\mathbin{::}\Varid{a}\mathbin{⊸}\Varid{b}\to \Conid{PLU}\;\Varid{a}\;\Varid{b}\mskip1.5mu\}{}\<[E]%
\\[\blanklineskip]%
\>[3]{}\Varid{f}\mathbin{::}\Conid{PLU}\;(\Conid{MArray}\;\Conid{Int})\;\Conid{Int}\mathbin{⊸}{}\<[32]%
\>[32]{}\Conid{MArray}\;\Conid{Int}{}\<[E]%
\ColumnHook
\end{hscode}\resethooks
Here we use \textsc{gadt}-style syntax to give an explicit type signature to the data
constructor \ensuremath{\Conid{PLU}}, with mixed linearity.
Now, when \emph{constructing} a \ensuremath{\Conid{PLU}} pair the type of the constructor means
that we must always supply an unrestricted second argument; and dually
when \emph{pattern-matching} on \ensuremath{\Conid{PLU}} we are therefore free to use the second argument
in an unrestricted way, even if the \ensuremath{\Conid{PLU}} value itself is linear.

Instead of defining a pair with mixed linearity, we can also write
\begin{hscode}\SaveRestoreHook
\column{B}{@{}>{\hspre}l<{\hspost}@{}}%
\column{3}{@{}>{\hspre}l<{\hspost}@{}}%
\column{42}{@{}>{\hspre}l<{\hspost}@{}}%
\column{E}{@{}>{\hspre}l<{\hspost}@{}}%
\>[3]{}\mathbf{data}\;\Conid{Unrestricted}\;\Varid{a}\;\mathbf{where}\;\{\mskip1.5mu \Conid{Unrestricted}\mathbin{::}\Varid{a}\mathbin{→}\Conid{Unrestricted}\;\Varid{a}\mskip1.5mu\}{}\<[E]%
\\[\blanklineskip]%
\>[3]{}\Varid{f}\mathbin{::}(\Conid{MArray}\;\Conid{Int},\Conid{Unrestricted}\;\Conid{Int})\mathbin{⊸}{}\<[42]%
\>[42]{}\Conid{MArray}\;\Conid{Int}{}\<[E]%
\ColumnHook
\end{hscode}\resethooks
The type \ensuremath{(\Conid{Unrestricted}\;\Varid{t})} is very much like ``\ensuremath{\mathbin{!}\Varid{t}}'' in linear
logic, but in our setting it is just an ordinary user-defined datatype.
We saw it used in \fref{fig:linear-array-sigs}, where the result of \ensuremath{\Varid{read}} was
a pair of a linear \ensuremath{\Conid{MArray}} and an unrestricted array element:
\begin{hscode}\SaveRestoreHook
\column{B}{@{}>{\hspre}l<{\hspost}@{}}%
\column{3}{@{}>{\hspre}l<{\hspost}@{}}%
\column{E}{@{}>{\hspre}l<{\hspost}@{}}%
\>[3]{}\Varid{read}\mathbin{::}\Conid{MArray}\;\Varid{a}\mathbin{⊸}\Conid{Int}\to (\Conid{MArray}\;\Varid{a},\Conid{Unrestricted}\;\Varid{a}){}\<[E]%
\ColumnHook
\end{hscode}\resethooks
Note that, according to the definition in \fref{sec:consumed},
if a value of type \ensuremath{(\Conid{Unrestricted}\;\Varid{t})} is consumed exactly once,
that tells us nothing about how the argument of the data constructor is consumed:
it may be consumed many times or not at all.

\subsection{Multiplicity polymorphism}
\label{sec:lin-poly}
A linear function provides more guarantees to its caller than
a non-linear one~---~it is more general.  But the higher-order
case thickens the plot. Consider the standard \ensuremath{\Varid{map}} function over
(linear) lists:
\begin{hscode}\SaveRestoreHook
\column{B}{@{}>{\hspre}l<{\hspost}@{}}%
\column{15}{@{}>{\hspre}l<{\hspost}@{}}%
\column{E}{@{}>{\hspre}l<{\hspost}@{}}%
\>[B]{}\Varid{map}\;\Varid{f}\;[\mskip1.5mu \mskip1.5mu]{}\<[15]%
\>[15]{}\mathrel{=}[\mskip1.5mu \mskip1.5mu]{}\<[E]%
\\
\>[B]{}\Varid{map}\;\Varid{f}\;(\Varid{x}\mathbin{:}\Varid{xs}){}\<[15]%
\>[15]{}\mathrel{=}\Varid{f}\;\Varid{x}\mathbin{:}\Varid{map}\;\Varid{f}\;\Varid{xs}{}\<[E]%
\ColumnHook
\end{hscode}\resethooks
It can be given the two following incomparable types:
  \ensuremath{(\Varid{a}\mathbin{⊸}\Varid{b})\to [\mskip1.5mu \Varid{a}\mskip1.5mu]\mathbin{⊸}[\mskip1.5mu \Varid{b}\mskip1.5mu]}  and
  \ensuremath{(\Varid{a}\to \Varid{b})\to [\mskip1.5mu \Varid{a}\mskip1.5mu]\to [\mskip1.5mu \Varid{b}\mskip1.5mu]}.
  Thus, \HaskeLL{} features quantification over multiplicities and
  parameterised arrows (\ensuremath{\Conid{A}\mathbin{→}_{\Varid{q}}\;\Conid{B}}).  Using these, \ensuremath{\Varid{map}} can be given
  the following more general type: \ensuremath{\mathbin{∀}\Varid{p}. (\Varid{a}\to _{\Varid{p}}\;\Varid{b})\to [\mskip1.5mu \Varid{a}\mskip1.5mu]\to _{\Varid{p}}\;[\mskip1.5mu \Varid{b}\mskip1.5mu]}.
Likewise, function composition and \ensuremath{\Varid{foldl}} (\textit{cf.} \Fref{sec:freezing-arrays})
can be given the following general types:
\begin{hscode}\SaveRestoreHook
\column{B}{@{}>{\hspre}l<{\hspost}@{}}%
\column{40}{@{}>{\hspre}l<{\hspost}@{}}%
\column{E}{@{}>{\hspre}l<{\hspost}@{}}%
\>[B]{}\Varid{foldl}\mathbin{::}∀\Varid{p}\;\Varid{q}. (\Varid{a}\mathbin{→}_{\Varid{p}}\;\Varid{b}\mathbin{→}_{\Varid{q}}\;{}\<[40]%
\>[40]{}\Varid{a})\to \Varid{a}\mathbin{→}_{\Varid{p}}\;[\mskip1.5mu \Varid{b}\mskip1.5mu]\mathbin{→}_{\Varid{q}}\;\Varid{a}{}\<[E]%
\\[\blanklineskip]%
\>[B]{}(\mathbin{∘})\mathbin{::}∀\Varid{p}\;\Varid{q}. (\Varid{b}\mathbin{→}_{\Varid{p}}\;\Varid{c})\mathbin{⊸}(\Varid{a}\mathbin{→}_{\Varid{q}}\;\Varid{b})\mathbin{→}_{\Varid{p}}\;\Varid{a}\mathbin{→}_{\Varid{p}\mathbin{·}\Varid{q}}\;\Varid{c}{}\<[E]%
\\
\>[B]{}(\Varid{f}\mathbin{∘}\Varid{g})\;\Varid{x}\mathrel{=}\Varid{f}\;(\Varid{g}\;\Varid{x}){}\<[E]%
\ColumnHook
\end{hscode}\resethooks
The type of \ensuremath{(\mathbin{∘})} says that two functions that accept arguments of arbitrary
multiplicities (\ensuremath{\Varid{p}} and \ensuremath{\Varid{q}} respectively) can be composed to form a
function accepting arguments of multiplicity \ensuremath{\Varid{p}\mathbin{·}\Varid{q}} (\ie the
product of \ensuremath{\Varid{p}} and \ensuremath{\Varid{q}} --- see \fref{def:equiv-multiplicity}).
Finally, from a backwards-compatibility perspective, all of these
subscripts and binders for multiplicity polymorphism can be
ignored. Indeed, in a context where client code does not use
linearity, all inputs will have unlimited multiplicity, $ω$, and transitively all
expressions can be promoted to $ω$. Thus in such a context the
compiler, or indeed documentation tools, can even altogether hide
linearity annotations from the programmer when this language
extension is not turned on.

\subsection{Linear input/output} \label{sec:linear-io}

In \fref{sec:io-protocols} we introduced the \ensuremath{\varid{IO}_{\varid{L}}}
monad.\footnote{\ensuremath{\varid{IO}_{\varid{L}}\;\Varid{p}} is not a monad in the strict sense, because \ensuremath{\Varid{p}} and
  \ensuremath{\Varid{q}} can be different in \ensuremath{\varid{bind}_{\varid{IO}_{\varid{L}}}}. However it is a relative
  monad~\cite{altenkirch_monads_2010}. The details, involving the
  functor \ensuremath{\mathbf{data}\;\Conid{Mult}\;\Varid{p}\;\Varid{a}\;\mathbf{where}\;\{\mskip1.5mu \Conid{Mult}\mathbin{::}\Varid{a}\to _{\Varid{p}}\;\Conid{Mult}\;\Varid{p}\;\Varid{a}\mskip1.5mu\}} and linear
  arrows, are left as an exercise to the reader.}
But how does it work?  \ensuremath{\varid{IO}_{\varid{L}}}
is just a generalisation of the \ensuremath{\Conid{IO}} monad, thus:
\begin{hscode}\SaveRestoreHook
\column{B}{@{}>{\hspre}l<{\hspost}@{}}%
\column{3}{@{}>{\hspre}l<{\hspost}@{}}%
\column{13}{@{}>{\hspre}l<{\hspost}@{}}%
\column{E}{@{}>{\hspre}l<{\hspost}@{}}%
\>[3]{}\mathbf{type}\;\varid{IO}_{\varid{L}}\;\Varid{p}\;\Varid{a}{}\<[E]%
\\
\>[3]{}\varid{return}_{\varid{IO}_{\varid{L}}}\mathbin{::}\Varid{a}\to _{\Varid{p}}\;\varid{IO}_{\varid{L}}\;\Varid{p}\;\Varid{a}{}\<[E]%
\\
\>[3]{}\varid{bind}_{\varid{IO}_{\varid{L}}}{}\<[13]%
\>[13]{}\mathbin{::}\varid{IO}_{\varid{L}}\;\Varid{p}\;\Varid{a}\mathbin{⊸}(\Varid{a}\to _{\Varid{p}}\;\varid{IO}_{\varid{L}}\;\Varid{q}\;\Varid{b})\mathbin{⊸}\varid{IO}_{\varid{L}}\;\Varid{q}\;\Varid{b}{}\<[E]%
\ColumnHook
\end{hscode}\resethooks
The idea is that if \ensuremath{\Varid{m}\mathbin{::}\varid{IO}_{\varid{L}}\;\mathrm{1}\;\Varid{t}}, then \ensuremath{\Varid{m}} is an input/output
computation that returns a linear value of type \ensuremath{\Varid{t}}.  But what does it mean to
``return a linear value'' in a world where linearity applies only to
function arrows?  Fortunately, in the world of monads each computation
has an explicit continuation, so we just need to control the linearity of
the continuation arrow.  More precisely, in an application \ensuremath{\Varid{m}~`\varid{bind}_{\varid{IO}_{\varid{L}}}\!`\,{}~\Varid{k}}
where \ensuremath{\Varid{m}\mathbin{::}\varid{IO}_{\varid{L}}\;\mathrm{1}\;\Varid{t}}, we need the continuation \ensuremath{\Varid{k}} to be linear, \ensuremath{\Varid{k}\mathbin{::}\Varid{t}\mathbin{⊸}\varid{IO}_{\varid{L}}\;\Varid{q}\;\Varid{t'}}.
And that is captured by the multiplicity-polymorphic type of \ensuremath{\varid{bind}_{\varid{IO}_{\varid{L}}}}.

Even though they have a different type than usual, the bind and return
combinators of \ensuremath{\varid{IO}_{\varid{L}}} can be used in the familiar way. The difference
with the usual monad is that multiplicities may be mixed, but this
poses no problem in practice.  Consider
\begin{hscode}\SaveRestoreHook
\column{B}{@{}>{\hspre}l<{\hspost}@{}}%
\column{3}{@{}>{\hspre}l<{\hspost}@{}}%
\column{5}{@{}>{\hspre}l<{\hspost}@{}}%
\column{8}{@{}>{\hspre}l<{\hspost}@{}}%
\column{17}{@{}>{\hspre}l<{\hspost}@{}}%
\column{49}{@{}>{\hspre}l<{\hspost}@{}}%
\column{E}{@{}>{\hspre}l<{\hspost}@{}}%
\>[3]{}\Varid{printHandle}\mathbin{::}\Conid{File}\mathbin{⊸}\Conid{IO}\;\omega\;(){}\<[E]%
\\
\>[3]{}\Varid{printHandle}\;\Varid{f}\mathrel{=}\mathbf{do}{}\<[E]%
\\
\>[3]{}\hsindent{2}{}\<[5]%
\>[5]{}\{\mskip1.5mu (\Varid{f},\Conid{Unrestricted}\;\Varid{b})\leftarrow \Varid{atEOF}\;\Varid{f}{}\<[49]%
\>[49]{}\mbox{\onelinecomment  \ensuremath{\Varid{atEOF}\mathbin{::}\Conid{File}\mathbin{⊸}\varid{IO}_{\varid{L}}\;\mathrm{1}\;(\Conid{File},\Conid{Unrestricted}\;\Conid{Bool})}}{}\<[E]%
\\
\>[3]{}\hsindent{2}{}\<[5]%
\>[5]{};{}\<[8]%
\>[8]{}\mathbf{if}\;\Varid{b}\;\mathbf{then}\;\Varid{closeFile}\;\Varid{f}{}\<[49]%
\>[49]{}\mbox{\onelinecomment  \ensuremath{\Varid{closeFile}\mathbin{::}\Conid{File}\mathbin{⊸}\varid{IO}_{\varid{L}}\;\omega\;()}}{}\<[E]%
\\
\>[8]{}\mathbf{else}\;\mathbf{do}\;{}\<[17]%
\>[17]{}\{\mskip1.5mu (\Varid{f},\Conid{Unrestricted}\;\Varid{c})\leftarrow \Varid{read}\;\Varid{f}{}\<[49]%
\>[49]{}\mbox{\onelinecomment  \ensuremath{\Varid{read}\mathbin{::}\Conid{File}\mathbin{⊸}\varid{IO}_{\varid{L}}\;\mathrm{1}\;(\Conid{File},\Conid{Unrestricted}\;\Conid{Char})}}{}\<[E]%
\\
\>[17]{};\Varid{putChar}\;\Varid{c}{}\<[49]%
\>[49]{}\mbox{\onelinecomment  \ensuremath{\Varid{putChar}\mathbin{::}\Conid{Char}\to \varid{IO}_{\varid{L}}\;\omega\;()}}{}\<[E]%
\\
\>[17]{};\Varid{printHandle}\;\Varid{f}\mskip1.5mu\}\mskip1.5mu\}{}\<[E]%
\ColumnHook
\end{hscode}\resethooks
Here \ensuremath{\Varid{atEOF}} and \ensuremath{\Varid{read}} return a linear \ensuremath{\Conid{File}} that should be closed,
but \ensuremath{\Varid{close}} and \ensuremath{\Varid{putChar}} return an ordinary non-linear \ensuremath{()}.  So this
sequence of operations has mixed linearity.  Nevertheless, we can
interpret the \ensuremath{\mathbf{do}}-notation with \ensuremath{\varid{bind}_{\varid{IO}_{\varid{L}}}} in the usual way:
\begin{hscode}\SaveRestoreHook
\column{B}{@{}>{\hspre}l<{\hspost}@{}}%
\column{3}{@{}>{\hspre}l<{\hspost}@{}}%
\column{14}{@{}>{\hspre}l<{\hspost}@{}}%
\column{E}{@{}>{\hspre}l<{\hspost}@{}}%
\>[3]{}\Varid{read}\;\Varid{f}{}\<[14]%
\>[14]{}~`\varid{bind}_{\varid{IO}_{\varid{L}}}\!`\,{}~\lambda (\Varid{f},\Conid{Unrestricted}\;\Varid{c})\to {}\<[E]%
\\
\>[3]{}\Varid{putChar}\;\Varid{c}{}\<[14]%
\>[14]{}~`\varid{bind}_{\varid{IO}_{\varid{L}}}\!`\,{}~\lambda \_\to \mathbin{…}{}\<[E]%
\ColumnHook
\end{hscode}\resethooks
Such an interpretation of the \ensuremath{\mathbf{do}}-notation requires Haskell's
\texttt{-XRebindableSyntax} extension, but if linear I/O becomes
commonplace it would be worth considering a more robust solution.

Internally, hidden from clients, \textsc{ghc} actually implements \ensuremath{\Conid{IO}} as a function,
and that implementation too is illuminated by linearity, like so:
\begin{hscode}\SaveRestoreHook
\column{B}{@{}>{\hspre}l<{\hspost}@{}}%
\column{3}{@{}>{\hspre}l<{\hspost}@{}}%
\column{11}{@{}>{\hspre}l<{\hspost}@{}}%
\column{34}{@{}>{\hspre}l<{\hspost}@{}}%
\column{36}{@{}>{\hspre}l<{\hspost}@{}}%
\column{E}{@{}>{\hspre}l<{\hspost}@{}}%
\>[B]{}\mathbf{data}\;\Conid{World}{}\<[E]%
\\
\>[B]{}\mathbf{newtype}\;\varid{IO}_{\varid{L}}\;\Varid{p}\;\Varid{a}\mathrel{=}\varid{IO}_{\varid{L}}\;\{\mskip1.5mu \varid{unIO}_{\varid{L}}\mathbin{::}\Conid{World}\mathbin{⊸}\Conid{IORes}\;\Varid{p}\;\Varid{a}\mskip1.5mu\}{}\<[E]%
\\
\>[B]{}\mathbf{data}\;\Conid{IORes}\;\Varid{p}\;\Varid{a}\;\mathbf{where}{}\<[E]%
\\
\>[B]{}\hsindent{3}{}\<[3]%
\>[3]{}\Conid{IOR}\mathbin{::}\Conid{World}\mathbin{⊸}\Varid{a}\to _{\Varid{p}}\;\Conid{IORes}\;\Varid{p}\;\Varid{a}{}\<[E]%
\\[\blanklineskip]%
\>[B]{}\varid{bind}_{\varid{IO}_{\varid{L}}}{}\<[11]%
\>[11]{}\mathbin{::}\varid{IO}_{\varid{L}}\;\Varid{p}\;\Varid{a}\mathbin{⊸}(\Varid{a}\to _{\Varid{p}}\;\varid{IO}_{\varid{L}}\;\Varid{q}\;\Varid{b})\mathbin{⊸}\varid{IO}_{\varid{L}}\;\Varid{q}\;\Varid{b}{}\<[E]%
\\
\>[B]{}\varid{bind}_{\varid{IO}_{\varid{L}}}\;(\varid{IO}_{\varid{L}}\;\Varid{m})\;\Varid{k}\mathrel{=}\varid{IO}_{\varid{L}}\;(\lambda \Varid{w}\to {}\<[34]%
\>[34]{}\mathbf{case}\;\Varid{m}\;\Varid{w}\;\mathbf{of}{}\<[E]%
\\
\>[34]{}\hsindent{2}{}\<[36]%
\>[36]{}\Conid{IOR}\;\Varid{w'}\;\Varid{r}\to \varid{unIO}_{\varid{L}}\;(\Varid{k}\;\Varid{r})\;\Varid{w'}){}\<[E]%
\ColumnHook
\end{hscode}\resethooks
A value of type \ensuremath{\Conid{World}} represents the state of the world, and is
threaded linearly through I/O computations.  The linearity of the
result of the computation is captured by the \ensuremath{\Varid{p}} parameter of \ensuremath{\varid{IO}_{\varid{L}}},
which is inherited by the specialised form of pair, \ensuremath{\Conid{IORes}} that an
\ensuremath{\varid{IO}_{\varid{L}}} computation returns.  All linearity checks are verified by the
compiler, further reducing the size of the trusted code base.
\subsection{Linearity and strictness}

It is tempting to assume that, since a linear function consumes its
argument exactly once, then it must also be strict.  But not so!
For example
\begin{hscode}\SaveRestoreHook
\column{B}{@{}>{\hspre}l<{\hspost}@{}}%
\column{E}{@{}>{\hspre}l<{\hspost}@{}}%
\>[B]{}\Varid{f}\mathbin{::}\Varid{a}\mathbin{⊸}(\Varid{a},\Conid{Bool}){}\<[E]%
\\
\>[B]{}\Varid{f}\;\Varid{x}\mathrel{=}(\Varid{x},\Conid{True}){}\<[E]%
\ColumnHook
\end{hscode}\resethooks
Here \ensuremath{\Varid{f}} is certainly linear according to \fref{sec:consumed}, and
given the type of \ensuremath{(,)} in \fref{sec:linear-constructors}. That is, if \ensuremath{(\Varid{f}\;\Varid{x})}
is consumed exactly once, then each component of its result pair is
consumed exactly once, and hence \ensuremath{\Varid{x}} is consumed exactly once.
But \ensuremath{\Varid{f}} is certainly not strict: \ensuremath{\Varid{f}\;\bot } is not \ensuremath{\bot }.

\section{\calc{}: a core calculus for \HaskeLL}
\label{sec:statics}
\label{sec:calculus}

We do not formalise all of \HaskeLL{}, but rather a core calculus,
\calc{} which exhibits all key features, including datatypes and
multiplicity polymorphism.  This way we make precise much of the
informal discussion above.

\subsection{Syntax}
\newcommand{\pip}{\kern 1.18em | }
\label{sec:syntax}

\begin{figure}
  \begin{minipage}{0.4 \textwidth} \centering
  \figuresection{Multiplicities}
  \begin{align*}
    π,μ &::= 1 ~|~ ω ~|~ p ~|~ π+μ ~|~ π·μ
  \end{align*}
  \end{minipage}
  \begin{minipage}{0.4 \textwidth} \centering
  \figuresection{Types}
  \begin{align*}
  A,B ::= A →_π B ~|~  ∀p. A ~|~ D~p_1~…~p_n
  \end{align*}
  \end{minipage}
\\[3mm]
  \begin{minipage}{0.3 \textwidth} \centering
  \figuresection{Contexts}
  \begin{align*}
    Γ,Δ & ::=  (x :_{μ} A), Γ ~|~ –
  \end{align*}
  \end{minipage}
  \begin{minipage}{0.6\linewidth} \centering
    \figuresection{Datatype declaration}
    \begin{align*}
      \data D~p_1~…~p_n~\mathsf{where} \left(c_k : A₁ →_{π₁} …    A_{n_k} →_{π_{n_k}} D\right)^m_{k=1}
    \end{align*}
  \end{minipage}

  \figuresection{Terms}
  \begin{align*}
    e,s,t,u & ::= x & \text{variable} \\
            & \pip λ_π (x{:}A). t & \text{abstraction} \\
            & \pip t s & \text{application} \\
            & \pip λp. t & \text{multiplicity abstraction} \\
            & \pip t π & \text{multiplicity application} \\
            & \pip c t₁ … t_n & \text{data construction} \\
            & \pip \case[π] t {c_k  x₁ … x_{n_k} → u_k}  & \text{case} \\
            & \pip \flet[π] x_1 : A₁ = t₁ … x_n : A_n = t_n \fin u & \text{let}
  \end{align*}

  \caption{Syntax of \calc{}}
  \label{fig:syntax}
  \label{fig:contexts}
\end{figure}

The term syntax of \calc{} is that of a type-annotated (\textit{à la}
Church) simply-typed $λ$-calculus with let-definitions
(\fref{fig:syntax}).  It includes multiplicity polymorphism, but to avoid clutter
we omit ordinary type polymorphism.

\calc{} is an explicitly-typed language: each binder is annotated with
its type and multiplicity; and multiplicity abstraction and application
are explicit.  \HaskeLL{} will use type inference to fill in
much of this information, but we do not address the challenges of type
inference here.

The types of \calc{} (see \fref{fig:syntax}) are simple types with
arrows (albeit multiplicity-annotated ones), datatypes, and
multiplicity polymorphism.
We use the following abbreviations:
\(A → B ≝  A →_ω B\) and
\(A ⊸ B ≝ A →_1 B\).
Datatype declarations (see \fref{fig:syntax}) are of the following form:
\begin{align*}
  \data D~p_1~…~p_n~\mathsf{where} \left(c_k : A₁ →_{π₁} ⋯    A_{n_k} →_{π_{n_k}} D\right)^m_{k=1}
\end{align*}
The above declaration means that \(D\) is parameterized over \(n\) multiplicities $p_i$ and has \(m\) constructors \(c_k\),
each with \(n_k\) arguments. Arguments of
constructors have a multiplicity, just like arguments of functions: an
argument of multiplicity $ω$ means that consuming the data constructor once
makes no claim on how often that argument is consumed (\fref{def:consume}).
All the variables in the multiplicities $π_i$ must be among
$p_1…p_n$; we write $π[π_1…π_n]$ for the substitution of $p_i$ by
$π_i$ in $π$.

\subsection{Static semantics}
\label{sec:typing-contexts}

\newcommand{\apprule}{\inferrule{Γ ⊢ t :  A →_π B  \\   Δ ⊢ u : A}{Γ+πΔ ⊢ t u  :  B}\text{app}}
\newcommand{\varrule}{\inferrule{ }{ωΓ + x :_1 A ⊢ x : A}\text{var}}
\newcommand{\caserule}{\inferrule{Γ   ⊢  t  : D~π_1~…~π_n \\ Δ, x₁:_{πμ_i[π_1…π_n]} A_i, …,
      x_{n_k}:_{πμ_{n_k}[π_1…π_n]} A_{n_k} ⊢ u_k : C \\
      \text{for each $c_k : A_1 →_{μ_1} … →_{μ_{n_k-1}} A_{n_k} →_{μ_{n_k}} D~p_1~…~p_n$}}
    {πΓ+Δ ⊢ \case[π] t {c_k  x₁ … x_{n_k} → u_k} : C}\text{case}}

\begin{figure}
  \begin{mathpar}
    \varrule

    \inferrule{Γ, x :_{π} A  ⊢   t : B}
    {Γ ⊢ λ_π (x{:}A). t  :  A  →_q  B}\text{abs}

    \apprule

    \inferrule{Δ_i ⊢ t_i : A_i \\ \text {$c_k : A_1 →_{μ_1} … →_{μ_{n-1}}
        A_n →_{μ_n} D~p_1~…~p_n$ constructor}}
    {ωΓ+\sum_i μ_i[π₁…π_n]Δ_i ⊢ c_k  t₁ … t_n : D~π₁~…~π_n}\text{con}

    \caserule

    \inferrule{Γ_i   ⊢  t_i  : A_i  \\ Δ, x₁:_{π} A₁ …  x_n:_{π} A_n ⊢ u : C }
    { Δ+π\sum_i Γ_i ⊢ \flet[π] x_1 : A_1 = t₁  …  x_n : A_n = t_n  \fin u : C}\text{let}

    \inferrule{Γ ⊢  t : A \\ \text {$p$ fresh for $Γ$}}
    {Γ ⊢ λp. t : ∀p. A}\text{m.abs}

    \inferrule{Γ ⊢ t :  ∀p. A}
    {Γ ⊢ t π  :  A[π/p]}\text{m.app}
  \end{mathpar}

  \caption{Typing rules}
  \label{fig:typing}
\end{figure}

The static semantics of \calc{} is given in \fref{fig:typing}.  Each
binding in $Γ$, of form \(x :_π A\), includes a multiplicity $π$
(\fref{fig:syntax}).  The familiar judgement \(Γ ⊢ t : A\) should
be read as follows
\begin{quote}
 \(Γ ⊢ t : A\) asserts that consuming the term $t : A$ exactly once will
  consume each binding $(x :_{π} A)$ in $Γ$ with its multiplicity $π$.
\end{quote}
One may want to think of the \emph{types} in $Γ$ as
inputs of the judgement, and the \emph{multiplicities} as outputs.

The rule (abs) for lambda abstraction adds $(x :_{π} A)$ to the
environment $Γ$ before checking the body \ensuremath{\Varid{t}} of the abstraction.
Notice that in \calc{}, the lambda abstraction  $λ_π(x{:}A). t$
is explicitly annotated with multiplicity $π$.  Remember, this
is an explicitly-typed intermediate language; in \HaskeLL{}
this multiplicity is inferred.

The dual application rule (app) is more interesting:
$$\apprule$$
To consume \ensuremath{(\Varid{t}\;\Varid{u})} once, we consume \ensuremath{\Varid{t}} once, yielding the
multiplicities in $Γ$, and \ensuremath{\Varid{u}} once, yielding the multiplicies in
$\Delta$.  But if the multiplicity $π$ on \ensuremath{\Varid{u}}'s function arrow is $ω$,
then the function consumes its argument not once but $ω$ times, so all
\ensuremath{\Varid{u}}'s free variables must also be used with multiplicity $ω$. We
express this by {\em scaling} the multiplicities in $\Delta$ by $π$.
Finally we need to add together all the
multiplicities in $Γ$ and $π\Delta$; hence the context $Γ+πΔ$ in the
conclusion of the rule.

In writing this rule we needed to ``scale'' a context by
a multiplicity, and ``add'' two contexts.  We pause to define these operations.
\begin{definition}[Context addition]~
  \begin{align*}
    (x :_π A,Γ) + (x :_{μ} A,Δ) &= x :_{π+μ} A, (Γ+Δ)\\
    (x :_π A,Γ) + Δ &= x :_π A, Γ+Δ & (x ∉ Δ)\\
    () + Δ &= Δ
  \end{align*}
\end{definition}
\noindent
Context addition is total: if a variable occurs in both operands the
first rule applies (with possible re-ordering of bindings in $Δ$), if
not the second or third rule applies.

\begin{definition}[Context scaling]
  \begin{displaymath}
    π(x :_{μ} A, Γ) =  x :_{πμ} A, πΓ
  \end{displaymath}
\end{definition}

\begin{lemma}[Contexts form a module]
  The following laws hold:
  \begin{align*}
    Γ + Δ &= Δ + Γ &
    π (Γ+Δ) &= π Γ + π Δ\\
    (π+μ) Γ &= π Γ+ μ Γ \\
    (πμ) Γ &= π (μ Γ) &
    1 Γ &= Γ
  \end{align*}
\end{lemma}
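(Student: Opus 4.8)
The plan is to reduce all five identities to the algebraic laws satisfied by \emph{multiplicities}, which form a commutative semiring (\fref{def:equiv-multiplicity}), since context addition and scaling are defined entirely in terms of multiplicity addition and multiplication. The essential preliminary step is to fix the intended reading of contexts and of equality between them: a context $Γ$ is really a finite partial map sending each variable $x$ in its domain to a pair $(π_x, A_x)$ of a multiplicity and a type, and ``$=$'' is equality of such maps, \ie{} up to reordering of bindings --- exactly the caveat recorded after the definition of context addition. Under this reading I would first prove the pointwise characterizations that drive everything else: scaling satisfies $(πΓ)_x = π·π_x$ with the types unchanged, and addition satisfies $(Γ+Δ)_x = π_x + μ_x$ when $x$ lies in both domains (where it is tacitly assumed that $A_x$ agrees, as the first clause of the definition requires) and $(Γ+Δ)_x = π_x$ or $μ_x$ when $x$ lies in only one.

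Establishing the pointwise description of $+$ is the main obstacle, because the recursive definition is asymmetric: it recurses on the left operand and branches on whether the head variable also occurs on the right. I would prove the characterization by induction on the left operand, checking in each of the three defining clauses that the domain of the result is the union of the two domains and that each variable receives the multiplicity claimed above. This single lemma simultaneously discharges the reordering subtlety, because the pointwise statement never refers to the order in which bindings are listed, so any two contexts with the same underlying map are identified.

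With the pointwise characterizations in hand, each law becomes a single multiplicity identity applied at every variable of the relevant domain, and the proofs are routine inductions whose inductive step invokes exactly one semiring axiom. Commutativity $Γ + Δ = Δ + Γ$ reduces to commutativity of multiplicity addition, $π_x + μ_x = μ_x + π_x$, together with the symmetry of domain union. The two distributivity laws $π(Γ+Δ) = πΓ + πΔ$ and $(π+μ)Γ = πΓ + μΓ$ reduce respectively to left distributivity $π·(a+b) = π·a + π·b$ and right distributivity $(π+μ)·a = π·a + μ·a$. The associativity law $(πμ)Γ = π(μΓ)$ reduces to associativity of multiplication $(π·μ)·a = π·(μ·a)$, and the unit law $1Γ = Γ$ to $1·a = a$. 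I would write out commutativity and one distributivity case in full and note that the remaining cases are identical in shape, differing only in which multiplicity axiom is cited.
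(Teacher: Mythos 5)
Your proof is correct. The paper states this lemma without proof, treating it as routine, and your argument --- characterizing $+$ and scaling pointwise on the underlying variable-to-multiplicity map (modulo the reordering caveat in the definition of context addition), then reducing each law to the corresponding semiring axiom from the definition of multiplicity equivalence --- is exactly the standard argument the paper implicitly relies on. You also correctly isolate the only real subtlety, namely that the recursive, left-biased definition of context addition must first be shown to compute the pointwise sum over the union of the domains.
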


These operations depend, in turn, on addition and multiplication of multiplicities.
The syntax of multiplicities is given in \fref{fig:syntax}.
We need the concrete multiplicities $1$ and $ω$ and, to support polymorphism,
multiplicity variables (ranged over by the metasyntactic
variables \(p\) and \(q\)) as well as formal sums and products of multiplicities.
Multiplicity expressions are quotiented by the following equivalence
relation:
\begin{definition}[equivalence of multiplicities]
  \label{def:equiv-multiplicity}
  The equivalence of multiplicities is the smallest transitive and
  reflexive relation, which obeys the following laws:
\begin{itemize}
\item $+$ and $·$ are associative and commutative
\item $1$ is the unit of $·$
\item $·$ distributes over $+$
\item $ω · ω = ω$
\item $1 + 1 = 1 + ω = ω + ω = ω$
\end{itemize}
\end{definition}
Thus, multiplicities form a semi-ring (without a zero), which extends
to a module structure on typing contexts. We may want to have a
stronger notion of equivalence for multiplicities, as we discuss in
\fref{sec:design-choices}.

Returning to the typing rules in \fref{fig:typing}, the rule (let) is like
a combination of (abs) and (app).  Again, each $\flet$ binding is
explicitly annotated with its multiplicity.
The variable rule (var) uses a standard idiom:
\vspace{-3mm}
$$\varrule$$
This rule allows us to ignore variables with
multiplicity $ω$ (usually called weakening),
so that, for example $x :_1 A, y :_ω B ⊢ x : A$ holds
\footnote{Pushing weakening to the variable rule is
  classic in many $λ$-calculi, and in the case of linear logic,
  dates back at least to Andreoli's work on
  focusing~\cite{andreoli_logic_1992}.}. Note that the judgement
$x :_ω A ⊢ x : A$ is an instance of the variable rule, because
$(x :_ω A)+(x :_1 A) = x:_ω A$.

Finally, abstraction and application for multiplicity polymorphism
are handled straightforwardly by (m.abs) and (m.app).

\subsection{Data constructors and case expressions}
\label{sec:typing-rules}

The handling of data constructors and case expressions is a
distinctive aspect of our design.  For constructor applications, the rule
(con), everything is straightforward: we treat the data constructor in
precisely the same way as an application of a function with that data constructor's type.
This includes weakening via the $ωΓ$ context in the conclusion.
The (case) rule is more interesting:
$$\caserule$$
First, notice that the $\mathsf{case}$ keyword is annotated with a
multiplicity $π$; this is analogous to the explicit
multiplicity on a $\mathsf{let}$ binding.  It says how often the scrutinee (or,
for a $\mathsf{let}$, the right hand side) will be consumed.  Just as
for $\mathsf{let}$, we expect $π$ to be inferred from an un-annotated $\mathsf{case}$ in
\HaskeLL{}.

The scrutinee $t$ is consumed $π$ times, which accounts for the $πΓ$ in
the conclusion.  Now consider the bindings $(x_i :_{πμ_i[π_1…π_n]} A_i)$ in the
environment for typechecking $u_k$.  That binding will be linear only if
\emph{both} $π$ \emph{and} $π_i$ are linear; that is, only if we specify
that the scrutinee is consumed once, and the $i$'th field of the data constructor $c_k$
specifies that is it consumed once if the constructor is (\fref{def:consume}).
To put it another way, suppose one of the linear
fields\footnote{
Recall \fref{sec:non-linear-constructors}, which described
how each constructor can have a mixture of linear and non-linear fields.}
of $c_k$ is used non-linearly in $u_k$.  Then, $μ_i=1$ (it is a linear field),
so $π$ must be $ω$, so that $πμ_i=ω$.  In short, using a linear field non-linearly
forces the scrutinee to be used non-linearly, which is just what we want.
Here are some concrete examples:
\begin{center}\vspace{-3mm}  
\begin{hscode}\SaveRestoreHook
\column{B}{@{}>{\hspre}l<{\hspost}@{}}%
\column{3}{@{}>{\hspre}l<{\hspost}@{}}%
\column{8}{@{}>{\hspre}c<{\hspost}@{}}%
\column{8E}{@{}l@{}}%
\column{12}{@{}>{\hspre}l<{\hspost}@{}}%
\column{19}{@{}>{\hspre}c<{\hspost}@{}}%
\column{19E}{@{}l@{}}%
\column{21}{@{}>{\hspre}l<{\hspost}@{}}%
\column{22}{@{}>{\hspre}l<{\hspost}@{}}%
\column{27}{@{}>{\hspre}l<{\hspost}@{}}%
\column{39}{@{}>{\hspre}l<{\hspost}@{}}%
\column{48}{@{}>{\hspre}l<{\hspost}@{}}%
\column{57}{@{}>{\hspre}l<{\hspost}@{}}%
\column{E}{@{}>{\hspre}l<{\hspost}@{}}%
\>[3]{}\Varid{fst}{}\<[8]%
\>[8]{}\mathbin{::}{}\<[8E]%
\>[12]{}(\Varid{a},\Varid{b})\mathbin{→}{}\<[21]%
\>[21]{}\Varid{a}\;{}\<[27]%
\>[27]{}\hspace{2cm}\;{}\<[39]%
\>[39]{}\Varid{swap}\mathbin{::}{}\<[48]%
\>[48]{}(\Varid{a},\Varid{b})\mathbin{⊸}(\Varid{b},\Varid{a}){}\<[E]%
\\
\>[3]{}\Varid{fst}\;{}\<[12]%
\>[12]{}(\Varid{a},\Varid{b}){}\<[19]%
\>[19]{}\mathrel{=}{}\<[19E]%
\>[22]{}\Varid{a}\;{}\<[39]%
\>[39]{}\Varid{swap}\;{}\<[48]%
\>[48]{}(\Varid{a},\Varid{b})\mathrel{=}{}\<[57]%
\>[57]{}(\Varid{b},\Varid{a}){}\<[E]%
\ColumnHook
\end{hscode}\resethooks
\end{center}
Recall that both fields of a pair are linear (\fref{sec:linear-constructors}).
In \ensuremath{\Varid{fst}}, the second component of the pair is used non-linearly (by being
discarded) which forces the use of $\mathsf{case}_ω$, and hence a non-linear type
for \ensuremath{\Varid{fst}}.  But \ensuremath{\Varid{swap}} uses the components linearly, so we can use $\mathsf{case}_1$, giving
\ensuremath{\Varid{swap}} a linear type.

\subsection{Metatheory}
\label{sec:metatheory}

In order to prove that our type system meets its stated goals, we
introduce an operational semantics.
\ifx\longversion\undefined{
  The details can be found in the extended version of this
  article~\cite{extended_version}.
}
\else{
  The details are deferred to \fref{appendix:dynamics}.
}
\fi

\paragraph{Of consuming exactly once}
The operational semantics is a big-step operational semantics for
lazy evaluation in the style of \citet{launchbury_natural_1993}.
Following
\citet{gunter_partial-big-step_1993}, starting from a big-step evaluation
relation $a⇓b$, we define \emph{partial derivations} and from there a
\emph{partial evaluation} relation
\ifx\longversion\undefined{$a⇓^*b$. }
\else{$a⇓^*b$ (see \fref{sec:partial-derivations}). }
\fi
Progress is expressed as the
fact that a derivation of $a⇓^*b$ can always be extended.

The operational semantics differs from
\citeauthor{launchbury_natural_1993}'s in two major respects:
\begin{itemize}
\item The reduction states are heavily annotated with type
  information. These type annotations are used for the proofs.
\item Reduction is indexed by whether we intend to consume the term
  under consideration exactly once or an arbitrary number of times
\item Variables in the environments are annotated by a multiplicity
  ($1$ or $ω$), $ω$-variables are ordinary variables. When forced, an
  $ω$-variable is replaced by its value (to model lazy sharing), but
  $1$-variables \emph{must be consumed exactly once}: when forced,
  they are removed from the environment. Reduction gets stuck if a
  1-variable is used more than once.
\end{itemize}
Because the operational semantics gets stuck if a 1-variable is used
more than once, the progress theorem
(\fref{thm:progress-denotational}) shows that linear functions do
indeed consume their argument at most once if their result is consumed
exactly once. The 1-variables are in fact used exactly once: it is a
consequence of type preservation that evaluation of a closed term of a
basic type (say \ensuremath{\Conid{Bool}}) returns an environment with no 1-variables.

Our preservation and progress theorems
\ifx\longversion\undefined{(proved in the extended
  version~\cite{extended_version})}
\else {(proved in \fref{sec:denotational})}
\fi
read as follows:

\begin{theorem}[Type preservation]\label{thm:type-safety}
  If $a$ is well typed, and $a⇓b$, or $a⇓^*b$ then $b$ is well-typed.
\end{theorem}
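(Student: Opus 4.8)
The plan is to prove subject reduction by induction on the derivation of the evaluation relation, handling $a⇓b$ and $a⇓^*b$ uniformly: a partial derivation witnessing $a⇓^*b$ is assembled from the very same inference rules as a complete derivation, so the same case analysis on the last rule applies to both. The first and most important task is to pin down what ``well typed'' means for a \emph{reduction state}, not just for a term. Because the semantics is Launchbury-style, a state pairs a multiplicity-annotated heap $Γ$ (mapping variables, each tagged $1$ or $ω$, to the thunks that compute them) with the term $e$ under evaluation and the mode $ρ∈\{1,ω\}$ that records whether $e$ is to be consumed once or $ω$ times. I would define a configuration-typing judgement that threads the ordinary term typing of \fref{fig:typing} through the heap, folding together the typing of each bound thunk and the typing of $e$ (scaled by $ρ$) using context addition and scaling. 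The crucial side-condition is that the linear resource accounting must \emph{balance}: every $1$-binding in the heap must be accounted for exactly once, with no leftover linear resource, exactly as \fref{def:consume} demands.

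Next I would establish the auxiliary lemmas the induction needs. Unlike a substitution-based calculus, there is no term-substitution lemma here; instead evaluation proceeds by heap extension and variable lookup, so I would prove (i) a \emph{weakening} lemma allowing $ω$-annotated bindings to be added freely, which falls straight out of the $ωΓ$ in the (var) and (con) rules; (ii) a \emph{splitting} lemma mirroring context addition, so that when (app) or (let) decomposes $Γ+πΔ$ the heap's resources split correspondingly between the evaluated pieces; and (iii) a lemma showing that the multiplicity substitution $π[π_1…π_n]$ used by (con) and (case) commutes with the configuration-level accounting. The ``Contexts form a module'' lemma is the workhorse throughout, since it is what justifies rearranging $Γ+πΔ$, $(π+μ)Γ$, and $(πμ)Γ$ when reassembling the output configuration.

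The induction proper then walks through the evaluation rules. Variable lookup is the pivotal case and splits on the annotation: for an $ω$-variable the thunk is forced, evaluated, and its binding updated in place (lazy sharing), and typing is preserved precisely because $ω$-bindings may be duplicated; for a $1$-variable the binding is \emph{removed} from the heap once forced, so the balance condition must be re-established after deleting a linear resource, and this is exactly where the module laws do the bookkeeping. Application, $\mathsf{let}$, and $\mathsf{case}$ are discharged by combining the splitting lemma with the (app), (let), and (case) rules respectively, while constructor evaluation and $\mathsf{case}$ reduction additionally use (con)/(case) together with the scaling by $μ_i[π_1…π_n]$ to show the fields are handed out with the correct multiplicities.

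I expect the main obstacle to be the \emph{design of the configuration-typing invariant itself}, rather than any individual rule case. The invariant must simultaneously be implied by the initial well-typedness of $a$, be preserved by every rule, and be strong enough to track linear consumption through in-place heap update and heap deletion. Three points are delicate: recursive or mutually-referential $\mathsf{let}$ bindings force the heap typings to be ordered in a well-founded (topological) way; updating an $ω$-variable with its computed value must preserve typing even though the stored term is now a value bound in a possibly larger heap; and the deletion of a $1$-variable must be shown to keep the accounting balanced. Getting the invariant exactly tight enough to survive all three while still following from plain term typing is the crux, and once it is right the individual cases should reduce to mechanical applications of the splitting lemma and the module laws.
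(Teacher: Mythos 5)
Your overall strategy matches the paper's: induction on the (complete or partial) evaluation derivation, a well-typedness invariant for whole reduction states rather than bare terms, and the observation that the delicate case is the forcing of a $1$-variable, where the binding must be deleted from the heap and the accounting re-balanced. The paper's proof is exactly that induction, and its one explicit remark is about that same linear-variable case. Your instinct that the real work is in designing the state invariant is also right.

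However, the specific mechanism you propose for the compound rules is the wrong one, and this is a genuine gap. You suggest a \emph{splitting} lemma that divides the heap's resources between the sub-evaluations of an application or case. In a Launchbury-style semantics the heap cannot be split: it is a single shared store, threaded through the premises in sequence, and the heap $Δ$ returned by evaluating the function part is the very heap in which the argument and body are later evaluated. What the invariant must instead record is the collection of \emph{pending} terms whose consumption is still owed when a sub-evaluation runs. The paper does this by including a typed stack $Σ$ in the state judgement $Ξ ⊢ (Γ|t :_ρ A),Σ$ and defining well-typedness as ordinary term typing of $\flet Γ \fin (t,\termsOf{Σ})$ against a weighted-pair type $A~{}_ρ\!⊗\multiplicatedTypes{Σ}$: the app rule pushes the argument $x:_{πρ}A$ onto $Σ$ while the function is evaluated, and the case rule pushes the branch body. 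This stack is also precisely what makes your key case go through: a $1$-binding being forced can, by the typing of the state, occur only in the term under focus and not in $Σ$ or elsewhere in the heap, so deleting it preserves typing. Without $Σ$ (or an equivalent device) the induction hypothesis cannot even be stated for the first premise of the app rule, and the splitting lemma you would reach for instead is false for a shared, destructively updated heap. One further mismatch: this theorem, in the paper, is about the \emph{denotational} (pure) semantics; preservation for the mutating semantics is a separate theorem obtained by bisimulation, so in-place array mutation never enters this particular proof.
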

\begin{theorem}[Progress]\label{thm:progress-denotational}
  Evaluation does not block. That is, for any partial evaluation
  $a⇓^*b$, where $a$ is well-typed, the derivation can be extended.
\end{theorem}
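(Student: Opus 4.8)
The plan is to follow the partial--big-step recipe of \citet{gunter_partial-big-step_1993}: a partial derivation of $a \Downarrow^* b$ is a finite proof tree in which some premises are left as unexplored ``holes,'' and \emph{extending} the derivation means selecting one such hole and applying an evaluation rule to it (possibly spawning further holes). Progress then reduces to a \emph{local} statement: every hole reachable in a partial derivation rooted at a well-typed $a$ is either already an answer (a value in the appropriate environment, needing no further work) or matches the left-hand side of some evaluation rule, so the tree can grow. Because \fref{thm:type-safety} guarantees that each configuration appearing at a hole is again well-typed, it suffices to prove this local progress statement for a single well-typed machine state; the global theorem then follows by induction on the extension process.

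First I would fix the notion of a well-typed \emph{state}: a Launchbury-style heap of multiplicity-annotated bindings together with the term under focus and the intended multiplicity $\rho \in \{1,\omega\}$, lifting the judgement $\Gamma \vdash t : A$ of \fref{fig:typing} to configurations. The initial state built from $a$ is well-typed by hypothesis, and preservation maintains the invariant along every extension. I would then do a case analysis on the shape of the focused term. For an abstraction, a multiplicity abstraction, or a saturated constructor application, the state is already an answer. For an application $t\,s$, a multiplicity application $t\,\pi$, a $\mathsf{case}$, or a $\flet$, the corresponding rule is always applicable: it merely pushes the needed sub-computation (evaluate the function, evaluate the scrutinee, bind the let-definitions into the heap) onto a fresh hole. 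The only delicate former is a \emph{variable} $x$: here progress requires that $x$ actually occur in the heap with a multiplicity compatible with $\rho$, with an $\omega$-binding available for lazy sharing and a $1$-binding still present (i.e.\ not already forced and removed).

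The crux of the argument, and the step I expect to be the main obstacle, is exactly this variable case together with the state invariant that makes it go through. One must show that a well-typed state can never demand a $1$-variable that has already been consumed and deleted from the heap. I would encode this as a linear-accounting (conservation) property built into state typing: the multiplicities recorded on the heap bindings, scaled and summed as in the context operations used by the rules (app), (case), and (let), relative to the multiplicity semi-ring of \fref{def:equiv-multiplicity}, must exactly account for every demand that the remainder of the derivation will place on them. Concretely, a $1$-binding is present in the heap if and only if its single permitted use has not yet occurred, and the state typing charges that use to the focused term (or to a pending hole) precisely once. Proving that this balance is invariant --- preserved when a rule consumes a binding, splits the context across the premises of (app), or duplicates demands under an $\omega$-scaled sub-derivation --- is where the linearity bookkeeping concentrates, and it is the part that genuinely exploits the semi-ring structure rather than generic Launchbury reasoning.

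Finally, I would assemble the pieces: local progress plus preservation show that any hole in a partial derivation over a well-typed $a$ can be filled, which is exactly the claim that $a \Downarrow^* b$ can always be extended. As a corollary one recovers the remark preceding the theorem --- that evaluation of a closed term of basic type leaves a heap with no surviving $1$-variables --- since the conservation invariant forces every linear binding introduced along the way to have been consumed exactly once by the time an answer is reached, in accordance with \fref{def:consume}.
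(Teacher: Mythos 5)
Your plan matches the paper's proof: both reduce progress to preservation (\fref{thm:type-safety}) plus a local case analysis in which every rule except the linear-variable rule is trivially applicable, and both then rule out the two blocking scenarios for that rule (a $1$-binding demanded at multiplicity $\omega$, or a $1$-binding already removed from the heap) by showing such states are ill-typed. The only difference is presentational: the ``linear-accounting invariant'' you propose to prove separately is, in the paper, baked into the very definition of a well-typed annotated state (typeability of $\flet Γ \fin (t,\termsOf{Σ})$ under the static rules), so no additional conservation lemma is needed beyond type preservation.
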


\paragraph{In-place update \& typestate}
Furthermore, linear types can be
used to implement some operations as in-place updates, and
typestates (like whether an array is mutable or frozen) are actually
enforced by the type system.

To prove this, we introduce a second, distinct, semantics. It is also
a Launchbury-style semantics. It differs from \citet{launchbury_natural_1993} in the following ways:
\begin{itemize}
\item Environments are enriched with mutable references (for the sake
  of concreteness, they are all references to arrays but they could be
  anything).
\item Typestates are implemented by mutating the type of such
  references, functions can block if the type of the references is not
  correct: that is, we track typestates dynamically.
\end{itemize}
The idea behind the latter is that progress will show that we are
never blocked by typestates. In other words, they are enforced
statically and can be erased at runtime.

It is hard to reason on a lazy language with mutation. But what we
show is that we are using mutation carefully enough so that
they behave as pure data. To formalise this, we relate this
semantics with mutation to our pure semantics above. Specifically, we
show that they are \emph{bisimilar}.
\citet{amani_cogent_2016} use a similar technique for a language with linear types
and both a pure and imperative semantics.

Bisimilarity allows us to lift the type-preservation and
progress from the pure semantics. That is, writing $σ,τ$ for states of
this evaluation with mutation:

\begin{theorem}[Type preservation]\label{thm:type-preservation}
  For any well-typed $σ$, if $σ⇓τ$ or $σ⇓^*τ$, then $τ$ is
  well-typed.
\end{theorem}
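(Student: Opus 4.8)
The plan is to avoid reasoning directly about the mutation semantics — which is awkward precisely because it combines laziness with destructive update — and instead to \emph{lift} the result from the pure semantics through the bisimulation already asserted in this section. Since the surrounding text establishes that the two semantics are bisimilar, I would treat that bisimulation as given and reduce the theorem to a short transport argument. Concretely, fix the bisimulation relation $\mathcal{R}$ relating a mutation-semantics state $σ$ to a pure-semantics state $a$, and organise the proof around a single reduction step, afterwards closing under the transitive and partial cases.

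First I would establish a \emph{type-correspondence lemma}: whenever $σ \mathrel{\mathcal{R}} a$, the state $σ$ is well-typed in the mutation semantics if and only if $a$ is well-typed in the pure semantics. The content here is that a mutable reference cell of $σ$, carrying its dynamically-tracked typestate (mutable \ensuremath{\Conid{MArray}} versus frozen \ensuremath{\Conid{Array}}), is related to the corresponding thunk or value of $a$ carrying the matching static type, and that the context scaling and addition bookkeeping of \fref{sec:statics} lines up on both sides. This lemma is what lets well-typedness cross the bisimulation in both directions.

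Second, I would invoke the bisimulation transport: given $σ \mathrel{\mathcal{R}} a$ and a step $σ ⇓ τ$, bisimilarity supplies a matching pure step $a ⇓ b$ with $τ \mathrel{\mathcal{R}} b$ (and symmetrically for the partial relation $⇓^*$, where the extensibility of derivations is preserved). Then, starting from a well-typed $σ$, I pick a related well-typed $a$ by the correspondence lemma, transport the reduction to obtain the pure reduction $a ⇓ b$ from a well-typed source, apply the already-proven pure type preservation (\Fref{thm:type-safety}) to conclude that $b$ is well-typed, and finally use the correspondence lemma again, now right-to-left, to conclude that $τ$, being related to the well-typed $b$, is itself well-typed. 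The $σ ⇓^* τ$ case is identical, using the partial evaluation relation throughout.

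The hard part is not this lifting skeleton but the delicacy of the correspondence lemma at the destructive-update and freezing steps, which is exactly where the invariants underlying the assumed bisimulation are cashed out. When the mutation semantics performs a \ensuremath{\Varid{write}} in place while the pure semantics produces a logically fresh updated array, the correspondence can only be maintained because the linearity discipline guarantees that no stale pointer to the pre-update contents survives; and freezing must be shown to transport the typestate transition from mutable to immutable into a mere change of static type. Making these cases precise means carrying, as part of $\mathcal{R}$, the invariant that every live reference has a unique owner and that the dynamically-tracked typestates never block a well-typed step — so that the static guarantees of \fref{sec:statics} justify treating in-place mutation as observationally pure. I expect the write and freeze cases to be the crux, with the remaining congruence cases routine.
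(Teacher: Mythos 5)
Your proposal matches the paper's proof: the paper likewise transports type preservation from the denotational (pure) semantics through the simulation with the mutation semantics (its Safety lemma), with exactly the write/freeze invariants you identify being carried by the substitution conditions on \ensuremath{\Conid{MArray}} pointers in the denotation-assignment relation. The only difference is bookkeeping — the paper \emph{defines} well-typedness of a mutation-semantics state as the existence of a denotation assignment to a well-typed annotated state, so your two-way type-correspondence lemma holds by definition and the theorem is literally a restatement of the simulation lemma.
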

\begin{theorem}[Progress]\label{thm:progress}
  Evaluation does not block. That is, for any partial evaluation
  $σ⇓^*τ$, for $σ$ well-typed, the evaluation can be
  extended.
  In particular, typestates need not be checked dynamically.
\end{theorem}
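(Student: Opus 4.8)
The plan is to obtain Progress for the mutating semantics \emph{entirely} by transport along the bisimulation with the pure semantics, rather than re-proving it from scratch. The heart of the argument is a relation $\mathcal{R}$ pairing each state $\sigma$ of the mutating evaluation with a configuration $a$ of the pure Launchbury-style semantics; intuitively, $\sigma \mathrel{\mathcal{R}} a$ holds when the two states agree on all non-reference data and when every mutable reference of $\sigma$, together with its currently-tracked typestate, corresponds to the pure value recorded at the matching location of $a$ (e.g.\ an $\mathsf{MArray}$ for the mutable typestate, or a frozen $\mathsf{Array}$ wrapped in $\mathsf{Unrestricted}$ once it has been frozen). The linearity discipline is exactly what makes this relation stable: since a reference in mutable typestate is held linearly, there are no aliases, so an in-place write in $\sigma$ touches precisely the single pure binding it is related to, and the two semantics stay in lockstep.

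First I would show that $\mathcal{R}$ is a bisimulation on single evaluation steps, i.e.\ that for $\sigma \mathrel{\mathcal{R}} a$ every step $\sigma \Downarrow \tau$ is matched by a pure step $a \Downarrow b$ with $\tau \mathrel{\mathcal{R}} b$, and conversely. The constructor, application, $\mathsf{case}$, and variable-forcing steps match essentially syntactically, so the work concentrates on the reference primitives --- allocation, $\mathsf{read}$, $\mathsf{write}$, and $\mathsf{freeze}$ --- where a mutating step that performs a typestate transition must be shown to correspond to the pure step that rebuilds the data functionally. I would check each primitive against the $\mathcal{R}$-invariant, verifying that the pure side reconstructs exactly the value the mutated reference now denotes.

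The delicate direction is the backward one: given a pure step, I must exhibit a matching mutating step, and the mutating semantics carries an extra blocking condition that the pure semantics lacks, namely the dynamic typestate guard. Here the $\mathcal{R}$-invariant does the work, since it encodes that the tracked typestate always agrees with the static type; this agreement is maintained inductively and is consistent with Type preservation for the mutating semantics (\Fref{thm:type-preservation}), so whenever the pure side is able to perform a reference operation the corresponding guard in $\sigma$ is satisfied and the step goes through. With the bisimulation in hand, Progress itself is a short diagram chase: for a partial evaluation $\sigma \Downarrow^* \tau$ with $\sigma$ well-typed, I pick any pure configuration $a$ with $\sigma \mathrel{\mathcal{R}} a$ (well-typedness of $\sigma$ yields well-typedness of $a$, as $\mathcal{R}$ respects the typing invariant), lift the whole partial derivation to $a \Downarrow^* b$, apply Progress for the pure semantics (\Fref{thm:progress-denotational}) to extend it by a step $b \Downarrow b'$, and transport that step back across $\mathcal{R}$ to extend $\sigma \Downarrow^* \tau$.

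I expect the backward direction of the bisimulation --- establishing that the typestate guards never fire --- to be the main obstacle, since it is precisely the place where the two semantics are permitted to differ and where the benefit of static enforcement is cashed out. Once it is in place, the concluding sentence of the theorem is immediate: because no reachable step is ever blocked by a typestate guard, the guards are dynamically redundant, so typestates need not be checked at runtime and can be erased.
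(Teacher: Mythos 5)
Your proposal follows essentially the same route as the paper: the paper defines a ``denotation assignment'' relation between mutating states and well-typed annotated pure states, proves it is a bisimulation (the Safety and Liveness lemmas, by induction on the big-step derivations, with the array/typestate primitives as the only non-routine cases), and then derives this Progress theorem by transporting \Fref{thm:progress-denotational} back along the liveness direction --- exactly your diagram chase, including the observation that the typestate guards are discharged by the invariant carried by the relation. The only cosmetic difference is that you phrase the bisimulation over single steps while the paper works with partial big-step derivations in the sense of \fref{sec:partial-derivations}.
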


Just as importantly, we can prove that, indeed, we cannot observe
mutations. More precisely, we prove that the pure semantics and the
semantics with mutation are observationally equivalent: any observation, which we
reduce to a boolean test, is identical in either semantics.
\begin{theorem}[Observational equivalence]\label{thm:obs-equiv}
  The semantics with in-place mutation is observationally equivalent
  to the pure semantics.

  That is, for any closed term of type $\varid{Bool}$, if $e$ evaluates
  to the value $z$ with the pure semantics, and to the value $z'$ with
  the semantics with mutation, then $z=z'$.
\end{theorem}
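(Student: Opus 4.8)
The plan is to exploit the bisimulation between the pure semantics and the semantics with mutation that the previous two theorems already rest on. Observational equivalence is then a corollary: once the two semantics are known to be bisimilar, a closed term of type $\varid{Bool}$ cannot witness any difference between them, because a $\varid{Bool}$ value contains neither arrays nor mutable references, so the two final values must literally coincide.

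First I would pin down the bisimulation relation $R$. A configuration of the mutation semantics carries a heap $h$ sending mutable references to their current array contents, whereas a pure configuration stores ordinary immutable arrays inline. Two configurations are $R$-related when they share the same term-and-environment skeleton and, whenever the mutation side exposes a reference $\ell$, dereferencing $\ell$ through $h$ yields exactly the immutable array sitting at the matching position on the pure side. Crucially, $R$ places no constraint on references that are no longer live: the pure semantics keeps logically-superseded array copies around as dead data, while the mutation side has overwritten them in place, and $R$ simply ignores those.

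Next I would show that $R$ is a bisimulation for the partial-evaluation relation $\Downarrow^*$, by case analysis on the rules. The purely functional steps (application, $\mathsf{case}$, $\mathsf{let}$, multiplicity abstraction and application) are routine, since they touch neither $h$ nor array contents. The interesting cases are the four array primitives: $\varid{newMArray}$ extends both sides with a fresh array, allocating a fresh reference in $h$; $\varid{read}$ and $\varid{freeze}$ read back contents that agree by the invariant of $R$; and $\varid{write}$ is the crux (below). I would then check that, for a given closed $e$, the two initial configurations (empty heap, empty environment) lie in $R$, and that any two $R$-related values of type $\varid{Bool}$ are equal --- immediate, since at type $\varid{Bool}$ there are no reference positions over which $R$ could differ. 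Combining this with determinism of each semantics gives that the matched outcome on the mutation side is exactly the $z'$ of the statement, whence $z = z'$.

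The main obstacle is the $\varid{write}$ case. In the pure semantics $\varid{write}$ yields a logically new array and leaves the old one intact, whereas in the mutation semantics it destructively overwrites the reference in $h$. To preserve $R$, I must show that the reference being written is \emph{not aliased} anywhere in the live part of the configuration, so that overwriting it cannot break the invariant for any other live reference. This is precisely where linearity earns its keep: because $\varid{write}$ consumes its $\varid{MArray}$ argument linearly, the linearity invariant enforced by the type system --- the same discipline that forces $1$-variables to be consumed exactly once in the pure semantics --- rules out a second live occurrence of that reference. Formulating this aliasing-freedom as an invariant of reachable configurations and proving it is maintained by every rule, in particular under laziness where an array operation may sit suspended inside a thunk, is the technically delicate heart of the argument and must be established before the bisimulation cases can be discharged.
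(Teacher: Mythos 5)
Your proposal takes essentially the same route as the paper: the paper establishes the bisimulation separately (as its Safety and Liveness lemmas, via a ``denotation assignment'' relation that, like your $R$, forbids sharing of live \ensuremath{\Conid{MArray}} pointers) and then derives observational equivalence at type \ensuremath{\Conid{Bool}} as a direct consequence of bisimilarity plus determinism of both semantics. Your identification of the \ensuremath{\Varid{write}} case and the no-aliasing invariant as the crux matches exactly where the paper's relation does its work.
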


\subsection{Design choices \& trade-offs}
\label{sec:design-choices}

We could as well have picked different points in the design space for
\calc{}. We review some of the choices we made in this section.

\paragraph{Case rule}
Thanks to $\varid{case}_ω$, we can use linear arrows on all data
types. Indeed we can write the following and have \ensuremath{\Varid{fst}} typecheck:
\begin{hscode}\SaveRestoreHook
\column{B}{@{}>{\hspre}l<{\hspost}@{}}%
\column{3}{@{}>{\hspre}l<{\hspost}@{}}%
\column{E}{@{}>{\hspre}l<{\hspost}@{}}%
\>[B]{}\mathbf{data}\;\Conid{Pair}\;\Varid{a}\;\Varid{b}\;\mathbf{where}{}\<[E]%
\\
\>[B]{}\hsindent{3}{}\<[3]%
\>[3]{}\Conid{Pair}\mathbin{::}\Varid{a}\mathbin{⊸}\Varid{b}\mathbin{⊸}\Conid{Pair}\;\Varid{a}\;\Varid{b}{}\<[E]%
\\[\blanklineskip]%
\>[B]{}\Varid{fst}\mathbin{::}\Conid{Pair}\;\Varid{a}\;\Varid{b}\to \Varid{a}{}\<[E]%
\\
\>[B]{}\Varid{fst}\;\Varid{x}\mathrel{=}\mathbf{case}\;\!\!_{\omega}\;\Varid{x}\;\mathbf{of}\;\Conid{Pair}\;\Varid{a}\;\Varid{b}\to \Varid{a}{}\<[E]%
\ColumnHook
\end{hscode}\resethooks
It is possible to do without $\varid{case}_ω$, and have only $\varid{case}_1$.
Consider \ensuremath{\Varid{fst}} again.  We could instead have
\begin{hscode}\SaveRestoreHook
\column{B}{@{}>{\hspre}l<{\hspost}@{}}%
\column{3}{@{}>{\hspre}l<{\hspost}@{}}%
\column{E}{@{}>{\hspre}l<{\hspost}@{}}%
\>[B]{}\mathbf{data}\;\Conid{Pair}\;\Varid{p}\;\Varid{q}\;\Varid{a}\;\Varid{b}\;\mathbf{where}{}\<[E]%
\\
\>[B]{}\hsindent{3}{}\<[3]%
\>[3]{}\Conid{Pair}\mathbin{::}\Varid{a}\mathbin{→}_{\Varid{p}}\;\Varid{b}\mathbin{→}_{\Varid{q}}\;\Conid{Pair}\;\Varid{p}\;\Varid{q}\;\Varid{a}\;\Varid{b}{}\<[E]%
\\[\blanklineskip]%
\>[B]{}\Varid{fst}\mathbin{::}\Conid{Pair}\;\mathrm{1}\;\omega\;\Varid{a}\;\Varid{b}\mathbin{⊸}\Varid{a}{}\<[E]%
\\
\>[B]{}\Varid{fst}\;\Varid{x}\mathrel{=}\mathbf{case}\;\!\!_{\mathrm{1}}\;\Varid{x}\;\mathbf{of}\;\Conid{Pair}\;\Varid{a}\;\Varid{b}\to \Varid{a}{}\<[E]%
\ColumnHook
\end{hscode}\resethooks
But now multiplicity polymorphism infects all basic datatypes (such
as pairs), with knock-on consequences.  Moreover, \ensuremath{\mathbf{let}} is annotated so it seems
reasonable to annotate \ensuremath{\mathbf{case}} in the same way.

To put it another way, $\varid{case}_ω$ allows us to meaningfully inhabit
\ensuremath{∀\Varid{a}\;\Varid{b}. \Conid{Unrestricted}\;(\Varid{a},\Varid{b})\mathbin{⊸}(\Conid{Unrestricted}\;\Varid{a},\Conid{Unrestricted}\;\Varid{b})}, while linear logic
does not.

\paragraph{Subtyping}
Because the type $A⊸B$ only strengthens the contract of its elements
compared to $A→B$, one might expect the type $A⊸B$ to be a subtype of $A→B$.
But while \calc{} has \emph{polymorphism}, it does not have \emph{subtyping}.
For example, if
\begin{hscode}\SaveRestoreHook
\column{B}{@{}>{\hspre}l<{\hspost}@{}}%
\column{3}{@{}>{\hspre}l<{\hspost}@{}}%
\column{E}{@{}>{\hspre}l<{\hspost}@{}}%
\>[3]{}\Varid{f}\mathbin{::}\Conid{Int}\mathbin{⊸}\Conid{Int}{}\<[E]%
\\
\>[3]{}\Varid{g}\mathbin{::}(\Conid{Int}\to \Conid{Int})\to \Conid{Bool}{}\<[E]%
\ColumnHook
\end{hscode}\resethooks
then the call \ensuremath{(\Varid{g}\;\Varid{f})} is ill-typed, even though \ensuremath{\Varid{f}} provides more
guarantees than \ensuremath{\Varid{g}} requires.
On the other hand, \ensuremath{\Varid{g}} might well be multiplicity-polymorphic, with type
\ensuremath{∀\Varid{p}. (\Conid{Int}\to _{\Varid{p}}\;\Conid{Int})\to \Conid{Bool}}; in which case \ensuremath{(\Varid{g}\;\Varid{f})} is, indeed, typeable.

The lack of subtyping is a deliberate choice in our design: it is well
known that Hindley-Milner-style type inference does not mesh well with
subtyping (see, for example, the extensive exposition by
\citet{pottier_subtyping_1998}, but also \citet{dolan_mlsub_2017} for
a counterpoint).

However, while \ensuremath{(\Varid{g}\;\Varid{f})} is ill-typed in \calc{}, it is accepted in \HaskeLL{}.
The reason is that the $η$-expansion \ensuremath{\Varid{g}\;(\lambda \Varid{x}\to \Varid{f}\;\Varid{x})} \emph{is}
typeable, and \HaskeLL{} perform this expansions during type
inference. Such an $η$-expansion is not completely
semantics-preserving as \ensuremath{\lambda \Varid{x}\to \Varid{g}\;\Varid{x}} is always a well-defined value,
even if \ensuremath{\Varid{g}} loops: this difference can be observed with Haskell's
\ensuremath{\Varid{seq}} operator. Nevertheless, such $η$-expansions are already standard
practice in \textsc{ghc}: a similar situation arises when using rank-2
types. For example, the core language of \textsc{ghc} does not accept
\ensuremath{\Varid{g}\;\Varid{f}} for
\begin{hscode}\SaveRestoreHook
\column{B}{@{}>{\hspre}l<{\hspost}@{}}%
\column{E}{@{}>{\hspre}l<{\hspost}@{}}%
\>[B]{}\Varid{g}\mathbin{::}(∀\Varid{a}. (\Conid{Eq}\;\Varid{a},\Conid{Show}\;\Varid{a})\Rightarrow \Varid{a}\to \Conid{Int})\to \Conid{Char}{}\<[E]%
\\
\>[B]{}\Varid{f}\mathbin{::}∀\Varid{a}. (\Conid{Show}\;\Varid{a},\Conid{Eq}\;\Varid{a})\Rightarrow \Varid{a}\to \Conid{Int}{}\<[E]%
\ColumnHook
\end{hscode}\resethooks
The surface language, again, accepts \ensuremath{\Varid{g}\;\Varid{f}}, and elaborates it into \ensuremath{\Varid{g}\;(\lambda \Varid{a}\;(\Varid{d1}\mathbin{::}\Conid{Eq}\;\Varid{a})\;(\Varid{d2}\mathbin{::}\Conid{Show}\;\Varid{a})\to \Varid{f}\;\Varid{a}\;\Varid{d2}\;\Varid{d1})}. We simply extend this
mechanism to linear types.

\paragraph{Polymorphism \& multiplicities} Consider the definition: ``\ensuremath{\Varid{id}\;\Varid{x}\mathrel{=}\Varid{x}}''.
Our typing rules would validate both \ensuremath{\Varid{id}\mathbin{::}\Conid{Int}\mathbin{→}\Conid{Int}} and \ensuremath{\Varid{id}\mathbin{::}\Conid{Int}\mathbin{⊸}\Conid{Int}}.
So, since we think of multiplicities ranging over $\{1,ω\}$, surely we should
also have \ensuremath{\Varid{id}\mathbin{::}∀\Varid{p}. \Conid{Int}\mathbin{→}_{\Varid{p}}\;\Conid{Int}}?  But as it stands, our rules do
not accept it. To do so we would need $x :_p Int ⊢ x : Int$.  Looking
at the (var) rule in \fref{fig:typing}, we can prove that premise by case analysis,
trying $p=1$ and $p=ω$.
But if we had a domain of multiplicities which includes
$0$ (see \fref{sec:extending-multiplicities}), we would not be able to prove $x :_p Int ⊢ x : Int$, and rightly
so because it is not the case that \ensuremath{\Varid{id}\mathbin{::}\Conid{Int}\mathbin{→}_{\mathrm{0}}\;\Conid{Int}}.

More generally, we could type more programs if we added more laws
relating variables in \fref{def:equiv-multiplicity}, such as $p+q=ω$,
but this would prevent potential extensions to the set of
multiplicities. For now, we accept the more conservative rules of
\fref{def:equiv-multiplicity}.

\paragraph{Divergence.}  Consider this definition\footnote{
\ensuremath{(\Varid{repeat}\;\Varid{x})} returns the infinite list \ensuremath{[\mskip1.5mu \Varid{x},\Varid{x},\mathbin{...}\mskip1.5mu]}.  The function
\ensuremath{(\plus )} appends two lists, and has type \ensuremath{[\mskip1.5mu \Varid{a}\mskip1.5mu]\mathbin{⊸}[\mskip1.5mu \Varid{a}\mskip1.5mu]\mathbin{⊸}[\mskip1.5mu \Varid{a}\mskip1.5mu]}; we have not
given a formal typing rule for recursive definitions, but its form is
entirely standard.}:
\begin{hscode}\SaveRestoreHook
\column{B}{@{}>{\hspre}l<{\hspost}@{}}%
\column{3}{@{}>{\hspre}l<{\hspost}@{}}%
\column{E}{@{}>{\hspre}l<{\hspost}@{}}%
\>[3]{}\Varid{f}\mathbin{::}[\mskip1.5mu \Conid{Int}\mskip1.5mu]\mathbin{⊸}[\mskip1.5mu \Conid{Int}\mskip1.5mu]{}\<[E]%
\\
\>[3]{}\Varid{f}\;\Varid{xs}\mathrel{=}\Varid{repeat}\;\mathrm{1}\plus \Varid{xs}{}\<[E]%
\ColumnHook
\end{hscode}\resethooks
But wait!  Does \ensuremath{\Varid{f}} \emph{really} consume its argument \ensuremath{\Varid{xs}} exactly
once? After all, \ensuremath{(\Varid{repeat}\;\mathrm{1})} is infinite so \ensuremath{\Varid{f}} will never evaluate \ensuremath{\Varid{xs}}
at all!

In corner cases like this we look to metatheory.  Yes, the typing rules give
the specified types for \ensuremath{(\plus )} and \ensuremath{\Varid{f}}.
Yes, the operational claims guaranteed by the metatheory remain valid.
Intuitively you may imagine it like this: linearity claims that \emph{if} you were consume
the result of \ensuremath{\Varid{f}} completely, exactly once, then you would consume its argument once; but
since the result of \ensuremath{\Varid{f}} is infinite we cannot consume it completely exactly once, so
the claim holds vacuously.

\section{Implementing \HaskeLL}
\label{sec:implementation}
\label{sec:impl}

We implement \HaskeLL{} on top of the leading Haskell compiler,
\textsc{ghc}, version 8.2\footnote{https://github.com/tweag/ghc/tree/linear-types}.
The implementation modifies type inference and
type-checking in the compiler. Neither the intermediate language~\cite{sulzmann_fc_2007}
nor the run-time system are affected.
Our implementation of multiplicity polymorphism is incomplete, but the current
prototype is sufficient for the examples and case studies presented in
in this paper (see \fref{sec:evaluation}).

In order to implement the linear arrow, we added a multiplicity
annotation to function arrows as in \calc{}.
The constructor for arrow types is
constructed and destructed frequently in \textsc{ghc}'s type checker, and this
accounts for most of the modifications to existing code.

As suggested in \fref{sec:typing-contexts}, the multiplicities are an
output of the type inference algorithm. In order to infer the
multiplicities coming out of a \ensuremath{\mathbf{case}} expression we need a way to
aggregate the multiplicities coming out of the individual branches. To
this effect, we compute, for every variable, the join of its
multiplicity in each branch.

Implementing \HaskeLL{}
affects 1,152 lines of \textsc{ghc} (in subsystems of the compiler
that together amount to more than 100k lines of code), including 444
net extra lines. These figures support our claim that \HaskeLL{} is
easy to integrate into an existing implementation: despite
\textsc{ghc} being 25 years old, we implement a first version of
\HaskeLL{} with reasonable effort.

\section{Evaluation and Case Studies}
\label{sec:evaluation}
\label{sec:applications}

While many linear type systems have been proposed,
a {\em retrofitted} linear type system for a mature language like Haskell offers
the opportunity to implement non-trivial applications mixing linear and
non-linear code, I/O, etc., and observe how linear code interacts with existing
libraries and the optimiser of a sophisticated compiler.

Our first method for evaluating the implementation is to simply compile a large
existing code base together with the following changes: (1) all
(non-\textsc{gadt}) data constructors are
linear by default, as implied by the new type system; and (2) we update standard
list functions to have linear types (\ensuremath{\plus }, \ensuremath{\Varid{concat}}, \ensuremath{\Varid{uncons}}).
Under these conditions, we verified that the base \textsc{ghc} libraries and the nofib
benchmark suites compile successfully: 195K lines of Haskell, providing
preliminary evidence of backwards compatibility.

In the remainder of section, we describe case-studies implementated with the modified
\ghc  of \fref{sec:implementation}.
In \fref{sec:industry}, we propose further applications for \HaskeLL, which we
have not yet implemented, but which motivate this work.

\subsection{Computing directly with serialised data}
\label{sec:cursors}

While \fref{sec:freezing-arrays} covered simple mutable arrays, we now
turn to a related but more complicated application: operating directly on binary,
serialised representations of algebraic datatypes
(like \citet{vollmer_gibbon_2017} do).
The motivation is that programs are increasingly decoupled into separate (cloud)
services that communicate via serialised data in text or binary formats, carried
by remote procedure calls.
The standard approach is to deserialise data into an in-heap, pointer-based representation,
process it, and then serialise the result for transmission.
This process is inefficient, but nevertheless tolerated, because the alternative
--- computing directly with serialised data --- is far too difficult to program.
Nevertheless, the potential performance gain of working directly with serialised
data has motivated small steps in this direction:
libraries like ``Cap'N Proto''~\footnote{{\url{https://capnproto.org/}}} enable unifying
in-memory and on-the-wire formats for simple product types (protobufs).

Here is an unusual case where {advanced types can yield {\em performance}} by
making it practical to code in a previously infeasible style: accessing
serialised data at a fine grain without copying it.

\begin{wrapfigure}[8]{r}[34pt]{8.5cm}
\vspace{-4mm}
\begin{hscode}\SaveRestoreHook
\column{B}{@{}>{\hspre}l<{\hspost}@{}}%
\column{9}{@{}>{\hspre}l<{\hspost}@{}}%
\column{14}{@{}>{\hspre}l<{\hspost}@{}}%
\column{E}{@{}>{\hspre}l<{\hspost}@{}}%
\>[B]{}\mathbf{data}\;\Conid{Tree}\mathrel{=}\Conid{Leaf}\;\Conid{Int}\mid \Conid{Branch}\;\Conid{Tree}\;\Conid{Tree}{}\<[E]%
\\
\>[B]{}\Varid{pack}{}\<[9]%
\>[9]{}\mathbin{::}\Conid{Tree}\mathbin{⊸}\Conid{Packed}\mathop{{\kern 1pt}'}[\mskip1.5mu \Conid{Tree}\mskip1.5mu]{}\<[E]%
\\
\>[B]{}\Varid{unpack}{}\<[9]%
\>[9]{}\mathbin{::}\Conid{Packed}\mathop{{\kern 1pt}'}[\mskip1.5mu \Conid{Tree}\mskip1.5mu]\mathbin{⊸}\Conid{Tree}{}\<[E]%
\\
\>[B]{}\Varid{caseTree}\mathbin{::}{}\<[14]%
\>[14]{}\Conid{Packed}\;(\Conid{Tree}\mathop{{\kern 1pt}'\!\!:}\Varid{r})\to _{\Varid{p}}{}\<[E]%
\\
\>[14]{}(\Conid{Packed}\;(\Conid{Int}\mathop{{\kern 1pt}'\!\!:}\Varid{r})\to _{\Varid{p}}\;\Varid{a})\to {}\<[E]%
\\
\>[14]{}(\Conid{Packed}\;(\Conid{Tree}\mathop{{\kern 1pt}'\!\!:}\Conid{Tree}\mathop{{\kern 1pt}'\!\!:}\Varid{r})\to _{\Varid{p}}\;\Varid{a})\to \Varid{a}{}\<[E]%
\ColumnHook
\end{hscode}\resethooks
\end{wrapfigure}
The interface on the right gives an example of type-safe, {\em read-only}
access to serialised data for a particular datatype\footnote{This
  interface uses type-level lists as can be found in Haskell's
  \textsf{DataKind} extension}.
A \ensuremath{\Conid{Packed}} value is a pointer to raw bits (a bytestring), indexed by the
types of the values contained within.  We define a {\em type-safe} serialisation
layer as one which {\em reads} byte-ranges only at the type and size they were
originally {\em written}.  This is a small extension of the memory safety we
already expect of Haskell's heap --- extended to include the contents of
bytestrings containing serialised data\footnote{The additional safety ensured
  here is lower-stakes than typical memory-safety, as, even it is violated, the
  serialised values do not contain pointers and cannot segfault the program
  reading them.}.
To preserve this type safety, the \ensuremath{\Conid{Packed}} type {\em must} be abstract.
Consequently, a client of the module defining \ensuremath{\Conid{Tree}} need not be privy to the
memory layout of its serialisation.

If we cannot muck about with the bits inside a \ensuremath{\Conid{Packed}} directly, then we can
still retrieve data with \ensuremath{\Varid{unpack}}, \ie, the traditional, {\em copying},
approach to deserialisation.  Better still is to read the data {\em without}
copying.  We can manage this feat with \ensuremath{\Varid{caseTree}}, which is analogous to the
expression ``\ensuremath{\mathbf{case}\;\Varid{e}\;\mathbf{of}\;\{\mskip1.5mu \Conid{Leaf}\mathbin{...};\Conid{Branch}\mathbin{...}\mskip1.5mu\}}''.
Lacking built-in syntax, \ensuremath{(\Varid{caseTree}\;\Varid{p}\;\Varid{k1}\;\Varid{k2})} takes two continuations
corresponding to the two branches of the case expression.
Unlike the case expression, \ensuremath{\Varid{caseTree}} operates on the packed byte stream, reads
a tag byte, advances the pointer past it, and returns a type-safe pointer to the
fields (\eg \ensuremath{\Conid{Packed}\mathop{{\kern 1pt}'}[\mskip1.5mu \Conid{Int}\mskip1.5mu]} in the case of a leaf).

It is precisely to access multiple, consecutive fields that \ensuremath{\Conid{Packed}} is indexed
by a {\em list} of types as its phantom type parameter.  Individual atomic
values (\ensuremath{\Conid{Int}}, \ensuremath{\Conid{Double}}, etc) can be read one at a time with a lower-level
\ensuremath{\Varid{read}} primitive, which can efficiently read out scalars and store them in
registers:
\begin{hscode}\SaveRestoreHook
\column{B}{@{}>{\hspre}l<{\hspost}@{}}%
\column{3}{@{}>{\hspre}l<{\hspost}@{}}%
\column{E}{@{}>{\hspre}l<{\hspost}@{}}%
\>[3]{}\Varid{read}\mathbin{::}\Conid{Storable}\;\Varid{a}\Rightarrow \Conid{Packed}\;(\Varid{a}\mathop{{\kern 1pt}'\!\!:}\Varid{r})\mathbin{⊸}(\Varid{a},\Conid{Packed}\;\Varid{r}){}\<[E]%
\ColumnHook
\end{hscode}\resethooks

\ifx\longversion\undefined{
  \pagebreak 
}
\else{}
\fi

\begin{wrapfigure}[8]{r}[0pt]{7.0cm} 
\ifx\longversion\undefined
{\vspace{-2mm}}
\else
{\vspace{-5mm}}
\fi
\begin{hscode}\SaveRestoreHook
\column{B}{@{}>{\hspre}l<{\hspost}@{}}%
\column{3}{@{}>{\hspre}l<{\hspost}@{}}%
\column{13}{@{}>{\hspre}l<{\hspost}@{}}%
\column{18}{@{}>{\hspre}l<{\hspost}@{}}%
\column{22}{@{}>{\hspre}l<{\hspost}@{}}%
\column{27}{@{}>{\hspre}l<{\hspost}@{}}%
\column{E}{@{}>{\hspre}l<{\hspost}@{}}%
\>[B]{}\Varid{sumLeaves}\mathbin{::}\Conid{Packed}\mathop{{\kern 1pt}'}[\mskip1.5mu \Conid{Tree}\mskip1.5mu]\to \Conid{Int}{}\<[E]%
\\
\>[B]{}\Varid{sumLeaves}\;\Varid{p}\mathrel{=}\Varid{fst}\;(\Varid{go}\;\Varid{p}){}\<[E]%
\\
\>[B]{}\hsindent{3}{}\<[3]%
\>[3]{}\mathbf{where}\;\Varid{go}\;{}\<[13]%
\>[13]{}\Varid{p}\mathrel{=}{}\<[18]%
\>[18]{}\Varid{caseTree}\;\Varid{p}{}\<[E]%
\\
\>[13]{}\Varid{read}\mbox{\onelinecomment  Leaf case}{}\<[E]%
\\
\>[13]{}(\lambda \Varid{p2}\to {}\<[22]%
\>[22]{}\mathbf{let}\;{}\<[27]%
\>[27]{}(\Varid{n},\Varid{p3})\mathrel{=}\Varid{go}\;\Varid{p2}{}\<[E]%
\\
\>[27]{}(\Varid{m},\Varid{p4})\mathrel{=}\Varid{go}\;\Varid{p3}{}\<[E]%
\\
\>[22]{}\mathbf{in}\;(\Varid{n}\mathbin{+}\Varid{m},\Varid{p4})){}\<[E]%
\ColumnHook
\end{hscode}\resethooks
\end{wrapfigure}
Putting it together, we can write a function that consumes serialised data, such
as \ensuremath{\Varid{sumLeaves}}, shown on the right.
Indeed, we can even use \ensuremath{\Varid{caseTree}} to implement \ensuremath{\Varid{unpack}}, turning it into safe
``client code'' -- sitting outside the module that defines \ensuremath{\Conid{Tree}} and the
trusted code establishing its memory representation.

In this read-only example, linearity was not essential, only phantom types.
Next we consider an \textsc{API} for writing \ensuremath{\Conid{Packed}\mathop{{\kern 1pt}'}[\mskip1.5mu \Conid{Tree}\mskip1.5mu]} values bit by bit, where linearity is key.
In particular, can we also implement \ensuremath{\Varid{pack}} using a public interface?

\subsubsection{Writing serialised data}

To create a serialised data constructor, we must write a tag, followed by the
fields.
A {\em linear} write pointer can ensure all fields are initialised, in order.
We use a type ``\ensuremath{\Conid{Needs}}'' for write pointers, parameterised by (1) a
list of remaining things to be written, and (2) the type of the final value
which will be initialised once those writes are performed.  For example, after
we write the tag of a \ensuremath{\Conid{Leaf}} we are left with:
``\ensuremath{\Conid{Needs}\mathop{{\kern 1pt}'}[\mskip1.5mu \Conid{Int}\mskip1.5mu]\;\Conid{Tree}}'' ---
an {\em obligation} to write the \ensuremath{\Conid{Int}} field, and a {\em promise} to receive
a \ensuremath{\Conid{Tree}} value at the end (albeit a packed one).

To write an individal number, we provide a primitive that shaves one
element off the type-level list of obligations (a counterpart to \ensuremath{\Varid{read}}, above):
As with mutable arrays, this \ensuremath{\Varid{write}} operates in-place on the buffer, in spite
being a pure function.
\begin{hscode}\SaveRestoreHook
\column{B}{@{}>{\hspre}l<{\hspost}@{}}%
\column{3}{@{}>{\hspre}l<{\hspost}@{}}%
\column{E}{@{}>{\hspre}l<{\hspost}@{}}%
\>[3]{}\Varid{write}\mathbin{::}\Conid{Storable}\;\Varid{a}\Rightarrow \Varid{a}\mathbin{⊸}\Conid{Needs}\;(\Varid{a}\mathop{{\kern 1pt}'\!\!:}\Varid{r})\;\Varid{t}\mathbin{⊸}\Conid{Needs}\;\Varid{r}\;\Varid{t}{}\<[E]%
\ColumnHook
\end{hscode}\resethooks
When the list of outstanding writes is empty, we can retrive a readable packed buffer.
Just as when we froze arrays (\fref{sec:freezing-arrays}), the immutable value
is {\em unrestricted}, and can be used multiple times:
\begin{hscode}\SaveRestoreHook
\column{B}{@{}>{\hspre}l<{\hspost}@{}}%
\column{3}{@{}>{\hspre}l<{\hspost}@{}}%
\column{E}{@{}>{\hspre}l<{\hspost}@{}}%
\>[3]{}\Varid{finish}\mathbin{::}\Conid{Needs}\mathop{{\kern 1pt}'}[\mskip1.5mu \mskip1.5mu]\;\Varid{t}\mathbin{⊸}\Conid{Unrestricted}\;(\Conid{Packed}\mathop{{\kern 1pt}'}[\mskip1.5mu \Varid{t}\mskip1.5mu]){}\<[E]%
\ColumnHook
\end{hscode}\resethooks
Finalizing written values with \ensuremath{\Varid{finish}} works hand in hand with allocating new
buffers in which to write data (similar to \ensuremath{\Varid{newMArray}} from \fref{sec:freezing-arrays}):
\begin{hscode}\SaveRestoreHook
\column{B}{@{}>{\hspre}l<{\hspost}@{}}%
\column{3}{@{}>{\hspre}l<{\hspost}@{}}%
\column{E}{@{}>{\hspre}l<{\hspost}@{}}%
\>[3]{}\Varid{newBuffer}\mathbin{::}(\Conid{Needs}\mathop{{\kern 1pt}'}[\mskip1.5mu \Varid{a}\mskip1.5mu]\;\Varid{a}\mathbin{⊸}\Conid{Unrestricted}\;\Varid{b})\mathbin{⊸}\Varid{b}{}\<[E]%
\ColumnHook
\end{hscode}\resethooks
We also need to explicitly let go of linear input buffers we've exhausted.
\begin{hscode}\SaveRestoreHook
\column{B}{@{}>{\hspre}l<{\hspost}@{}}%
\column{3}{@{}>{\hspre}l<{\hspost}@{}}%
\column{E}{@{}>{\hspre}l<{\hspost}@{}}%
\>[3]{}\Varid{done}\mathbin{::}\Conid{Packed}\mathop{{\kern 1pt}'}[\mskip1.5mu \mskip1.5mu]\mathbin{⊸}(){}\<[E]%
\ColumnHook
\end{hscode}\resethooks

The primitives \ensuremath{\Varid{write}}, \ensuremath{\Varid{read}}, \ensuremath{\Varid{newBuffer}}, \ensuremath{\Varid{done}}, and \ensuremath{\Varid{finish}} are {\em general}
operations for serialised data, whereas \ensuremath{\Varid{caseTree}} is datatype-specific.
Further, the module that defines \ensuremath{\Conid{Tree}} exports a datatype-specific way to 
{\em write} each serialised data constructor:
%
\begin{hscode}\SaveRestoreHook
\column{B}{@{}>{\hspre}l<{\hspost}@{}}%
\column{14}{@{}>{\hspre}l<{\hspost}@{}}%
\column{E}{@{}>{\hspre}l<{\hspost}@{}}%
\>[B]{}\Varid{startLeaf}{}\<[14]%
\>[14]{}\mathbin{::}\Conid{Needs}\;(\Conid{Tree}\mathop{{\kern 1pt}'\!\!:}\Varid{r})\;\Varid{t}\mathbin{⊸}\Conid{Needs}\;(\Conid{Int}\mathop{{\kern 1pt}'\!\!:}\Varid{r})\;\Varid{t}{}\<[E]%
\\
\>[B]{}\Varid{startBranch}{}\<[14]%
\>[14]{}\mathbin{::}\Conid{Needs}\;(\Conid{Tree}\mathop{{\kern 1pt}'\!\!:}\Varid{r})\;\Varid{t}\mathbin{⊸}\Conid{Needs}\;(\Conid{Tree}\mathop{{\kern 1pt}'\!\!:}\Conid{Tree}\mathop{{\kern 1pt}'\!\!:}\Varid{r})\;\Varid{t}{}\<[E]%
\ColumnHook
\end{hscode}\resethooks
Operationally, \ensuremath{\Varid{start}\mathbin{*}} functions write only the tag, hiding the exact
tag-encoding from the client, and leaving field-writes as future obligations.
With these building blocks, we can move \ensuremath{\Varid{pack}} and \ensuremath{\Varid{unpack}} outside of the
private code that defines \ensuremath{\Conid{Tree}}s, which has this minimal interface:
\begin{hscode}\SaveRestoreHook
\column{B}{@{}>{\hspre}l<{\hspost}@{}}%
\column{E}{@{}>{\hspre}l<{\hspost}@{}}%
\>[B]{}\mathbf{module}\;\Conid{TreePrivate}\;(\Conid{Tree}\;(\mathinner{\ldotp\ldotp}),\Varid{caseTree},\Varid{startLeaf},\Varid{startBranch}){}\<[E]%
\\
\>[B]{}\mathbf{module}\;\Conid{Data.Packed}\;(\Conid{Packed},\Conid{Needs},\Varid{read},\Varid{write},\Varid{newBuffer},\Varid{finish},\Varid{done}){}\<[E]%
\ColumnHook
\end{hscode}\resethooks
On top of the safe interface, we can of course define higher-level construction
routines, such as for writing a complete \ensuremath{\Conid{Leaf}}:
\begin{hscode}\SaveRestoreHook
\column{B}{@{}>{\hspre}l<{\hspost}@{}}%
\column{3}{@{}>{\hspre}l<{\hspost}@{}}%
\column{E}{@{}>{\hspre}l<{\hspost}@{}}%
\>[3]{}\Varid{writeLeaf}\;\Varid{n}\mathrel{=}\Varid{write}\;\Varid{n}\mathbin{∘}\Varid{startLeaf}{}\<[E]%
\ColumnHook
\end{hscode}\resethooks
Now we can allocate and initialize a complete tree --- equivalent to \ensuremath{\Conid{Branch}\;(\Conid{Leaf}\;\mathrm{3})\;(\Conid{Leaf}\;\mathrm{4})}, but without ever creating the non-serialised values --- as
follows:
\begin{hscode}\SaveRestoreHook
\column{B}{@{}>{\hspre}l<{\hspost}@{}}%
\column{E}{@{}>{\hspre}l<{\hspost}@{}}%
\>[B]{}\Varid{newBuffer}\;(\Varid{finish}\mathbin{∘}\Varid{writeLeaf}\;\mathrm{4}\mathbin{∘}\Varid{writeLeaf}\;\mathrm{3}\mathbin{∘}\Varid{startBranch})\mathbin{::}\Conid{Packed}\mathop{{\kern 1pt}'}[\mskip1.5mu \Conid{Tree}\mskip1.5mu]{}\<[E]%
\ColumnHook
\end{hscode}\resethooks

Finally, we have what we need to build a map function that logically operates
on the leaves of a tree, but reads serialised input and writes serialised
output.
Indeed, in our current \HaskeLL{} implementation ``\ensuremath{\Varid{mapLeaves}\;(\mathbin{+}\mathrm{1})\;\Varid{tree}}'' touches {\em
  only} packed buffers --- it performs zero Haskell heap allocation!
We will return to this map example and benchmark it in \fref{sec:cursor-benchmark}.
With the safe interface to serialised data, functions like \ensuremath{\Varid{sumLeaves}} and
\ensuremath{\Varid{mapLeaves}} are not burdensome to program.  The code for \ensuremath{\Varid{mapLeaves}} is shown
below.

\begin{hscode}\SaveRestoreHook
\column{B}{@{}>{\hspre}l<{\hspost}@{}}%
\column{3}{@{}>{\hspre}l<{\hspost}@{}}%
\column{5}{@{}>{\hspre}l<{\hspost}@{}}%
\column{24}{@{}>{\hspre}l<{\hspost}@{}}%
\column{34}{@{}>{\hspre}l<{\hspost}@{}}%
\column{56}{@{}>{\hspre}l<{\hspost}@{}}%
\column{E}{@{}>{\hspre}l<{\hspost}@{}}%
\>[B]{}\Varid{mapLeaves}\mathbin{::}(\Conid{Int}\to \Conid{Int})\to \Conid{Packed}\mathop{{\kern 1pt}'}[\mskip1.5mu \Conid{Tree}\mskip1.5mu]\mathbin{⊸}\Conid{Packed}\mathop{{\kern 1pt}'}[\mskip1.5mu \Conid{Tree}\mskip1.5mu]{}\<[E]%
\\
\>[B]{}\Varid{mapLeaves}\;\Varid{fn}\;\Varid{pt}\mathrel{=}\Varid{newBuffer}\;(\Varid{extract}\mathbin{∘}\Varid{go}\;\Varid{pt}){}\<[E]%
\\
\>[B]{}\hsindent{3}{}\<[3]%
\>[3]{}\mathbf{where}{}\<[E]%
\\
\>[3]{}\hsindent{2}{}\<[5]%
\>[5]{}\Varid{extract}\;(\Varid{inp},\Varid{outp})\mathrel{=}\mathbf{case}\;\Varid{done}\;\Varid{inp}\;\mathbf{of}\;()\to \Varid{finish}\;\Varid{outp}{}\<[E]%
\\
\>[3]{}\hsindent{2}{}\<[5]%
\>[5]{}\Varid{go}\mathbin{::}\Conid{Packed}\;(\Conid{Tree}\mathop{{\kern 1pt}'\!\!:}\Varid{r})\mathbin{⊸}\Conid{Needs}\;(\Conid{Tree}\mathop{{\kern 1pt}'\!\!:}\Varid{r})\;\Varid{t}\mathbin{⊸}(\Conid{Packed}\;\Varid{r},\Conid{Needs}\;\Varid{r}\;\Varid{t}){}\<[E]%
\\
\>[3]{}\hsindent{2}{}\<[5]%
\>[5]{}\Varid{go}\;\Varid{p}\mathrel{=}\Varid{caseTree}\;\Varid{p}\;{}\<[24]%
\>[24]{}(\lambda \Varid{p}\;\Varid{o}\to {}\<[34]%
\>[34]{}\mathbf{let}\;(\Varid{x},\Varid{p'})\mathrel{=}\Varid{read}\;\Varid{p}\;{}\<[56]%
\>[56]{}\mathbf{in}\;(\Varid{p'},\Varid{writeLeaf}\;(\Varid{fn}\;\Varid{x})\;\Varid{o}))\;{}\<[E]%
\\
\>[24]{}(\lambda \Varid{p}\;\Varid{o}\to {}\<[34]%
\>[34]{}\mathbf{let}\;(\Varid{p'},\Varid{o'})\mathrel{=}\Varid{go}\;\Varid{p}\;(\Varid{writeBranch}\;\Varid{o})\;\mathbf{in}\;\Varid{go}\;\Varid{p'}\;\Varid{o'}){}\<[E]%
\ColumnHook
\end{hscode}\resethooks

\subsubsection{A version without linear types}\label{sec:st-cursors}

How would we build the same thing in Haskell without linear types?  It may appear
that the ST monad is a suitable choice:

\begin{hscode}\SaveRestoreHook
\column{B}{@{}>{\hspre}l<{\hspost}@{}}%
\column{E}{@{}>{\hspre}l<{\hspost}@{}}%
\>[B]{}\Varid{writeST}\mathbin{::}\Conid{Storable}\;\Varid{a}\Rightarrow \Varid{a}\to \Conid{Needs'}\;\Varid{s}\;(\Varid{a}\mathop{{\kern 1pt}'\!\!:}\Varid{r})\;\Varid{t}\to \Conid{ST}\;\Varid{s}\;(\Conid{Needs'}\;\Varid{s}\;\Varid{r}\;\Varid{t}){}\<[E]%
\ColumnHook
\end{hscode}\resethooks
Here we use the same typestate associated with a \ensuremath{\Conid{Needs}} pointer, while also
associating its mutable state with the \ensuremath{\Conid{ST}} session indexed by \ensuremath{\Varid{s}}.
Unfortunately, not only do we have the same trouble with freezing in the absence
of linearity (\ensuremath{\Varid{unsafeFreeze}}, \fref{sec:freezing-arrays}), we also
have an {\em additional} problem not present with arrays:
namely, a non-linear use of a \ensuremath{\Conid{Needs}} pointer can ruin our type-safe
deserialisation guarantee!
For example, we can write a \ensuremath{\Conid{Leaf}} and a \ensuremath{\Conid{Branch}} to the same pointer in an
interleaved fashion.  Both will place a tag at byte 0; but the leaf will place
an integer in bytes 1-9, while the branch will place another tag at byte 1.
We can receive a corrupted 8-byte integer, clobbered by a tag from an
interleaved ``alternate future''.

Fixing this problem would require switching to an indexed monad with additional
type-indices that model the typestate of all accessible pointers, which would in
turn need to have static, type-level identifiers.  That is, it would require
{\em encoding} linearity after all, but in a way which would become very
cumbersome as soon as several buffers are involved.

\subsubsection{Benchmarking compiler optimisations}
\label{sec:cursor-benchmark}

\begin{figure}
  \begin{minipage}{\textwidth}
  \includegraphics[width=0.49 \textwidth]{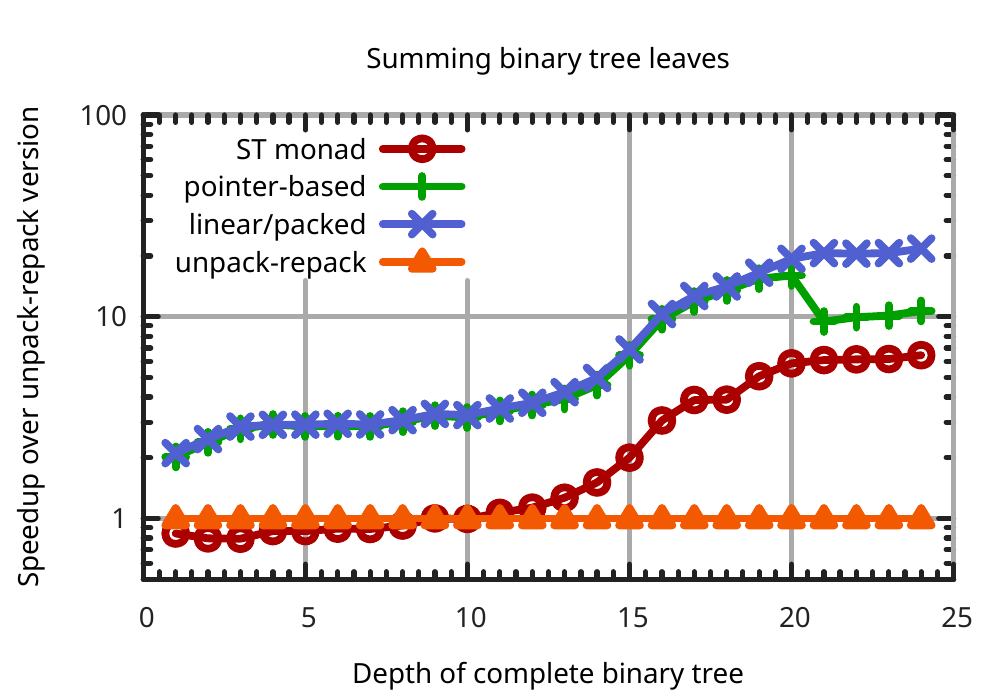}
  \includegraphics[width=0.49 \textwidth]{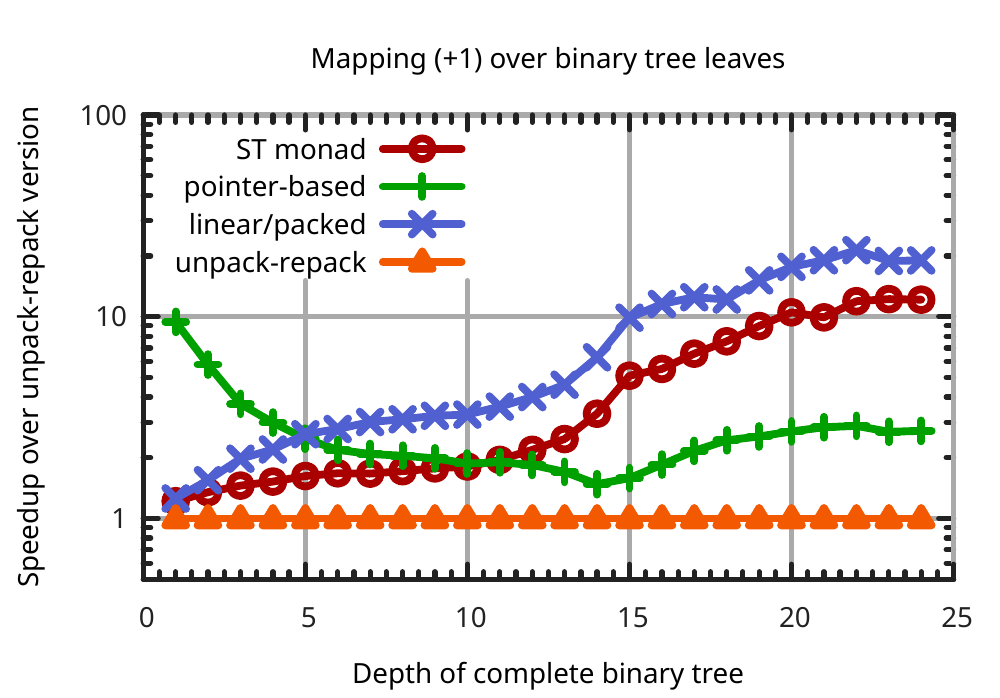}
  \end{minipage}
\caption{Speedup of operating directly on serialised data, either using
  linear-types or the ST monad, as compared to fully unpacking, processing,
  and repacking the data.  For reference, a ``pointer-based'' version is also
  included, which doesn't operate on serialised data at all, but instead normal
  heap objects --- it represents the hypothetical performance of ``unpack-repack''
  if (de)serialisation were instantaneous.}
\label{fig:pack-bench}
\end{figure}

Finally, as shown in \fref{fig:pack-bench}, there are some unexpected
performance consequences from using a linear versus a monadic, ST style in
\ghc.
Achieving allocation-free loops in \ghc{} is always a challenge --- 
tuple types and numeric types are lazy and ``boxed'' as heap objects by default.
As we saw in the \ensuremath{\Varid{sumLeaves}} and \ensuremath{\Varid{mapLeaves}} examples, each recursive call returned a tuple
of a result and a new pointer.  In a monadic formulation, an expression of type
\ensuremath{\Varid{m}\;\Varid{a}}, for \ensuremath{\Conid{Monad}\;\Varid{m}}, implies that the ``result'' type \ensuremath{\Varid{a}}, of kind \ensuremath{\mathbin{*}}, must be a
{\em lifted} type.  Nevertheless, in some situations, for some monads, the
optimiser is able to deforest data constructors returned by monadic actions.
In the particular case of \ensuremath{\Varid{fold}} and \ensuremath{\Varid{map}} operations over serialised trees,
unfortunately, we are currently unable to eliminate all allocation from
\ensuremath{\Conid{ST}}-based implementations of the algorithms.

For the linearly-typed code, however, we have more options.  \ghc{} has the ability to
directly express unboxed values such as a tuple \ensuremath{(\mathbin{\#}\Conid{Int}\mathbin{\#},\Conid{Double}\mathbin{\#}\mathbin{\#})}, which
fills two registers and inhabits an unboxed kind distinct from \ensuremath{\mathbin{*}}.
In fact, the type of a combinator like \ensuremath{\Varid{caseTree}} is a good fit for the recent
``levity polymorphism'' addition to
\ghc{}~\cite{eisenberg_levity_2017}.  
Thus we permit the branches of the \ensuremath{\mathbf{case}} to return unlifted, unboxed types, and
give \ensuremath{\Varid{caseTree}} a more general type:
{
\def\mathindent{0pt}
\begin{hscode}\SaveRestoreHook
\column{B}{@{}>{\hspre}l<{\hspost}@{}}%
\column{3}{@{}>{\hspre}l<{\hspost}@{}}%
\column{E}{@{}>{\hspre}l<{\hspost}@{}}%
\>[B]{}\Varid{caseTree}\mathbin{::}∀(\Varid{rep}\mathbin{::}\Conid{RuntimeRep})\;(\Varid{res}\mathbin{::}\Conid{TYPE}\;\Varid{rep})\;\Varid{b}. {}\<[E]%
\\
\>[B]{}\hsindent{3}{}\<[3]%
\>[3]{}\Conid{Packed}\;(\Conid{Tree}\mathop{{\kern 1pt}'\!\!:}\Varid{b})\to _{\Varid{p}}\;(\Conid{Packed}\;(\Conid{Int}\mathop{{\kern 1pt}'\!\!:}\Varid{b})\to _{\Varid{p}}\;\Varid{res})\to (\Conid{Packed}\;(\Conid{Tree}\mathop{{\kern 1pt}'\!\!:}\Conid{Tree}\mathop{{\kern 1pt}'\!\!:}\Varid{b})\to _{\Varid{p}}\;\Varid{res})\to \Varid{res}{}\<[E]%
\ColumnHook
\end{hscode}\resethooks
}
This works because we do not need to {\em call} a function with \ensuremath{\Varid{res}} as
argument (and thus unknown calling conventions) only return it.
Using this approach, we were able to ensure by construction that the
``linear/packed'' implementations in \fref{fig:pack-bench} were completely
non-allocating, rather than depending on the optimiser.
This results in better performance for the linear, compared to monadic version
of the serialised-data transformations.

The basic premise of \fref{fig:pack-bench} is that a machine in the network
receives, processes, and transmits serialized data (trees).
We consider two simple benchmarks: \ensuremath{\Varid{sumLeaves}} and \ensuremath{\Varid{mapTree}\;(\mathbin{+}\mathrm{1})}.
The baseline is the traditional approach: deserialise, transform, and reserialise,
the ``unpack-repack'' line in the plots.  Compared to this baseline,
{\em processing the data directly in its serialised form results in speedups of over
20$\times$ on large trees}.
What linear types makes safe, is also efficient.

The experiment was conducted on a Xeon E5-2699 CPU (2.30GHz, 64GB memory) using
our modified version of \ghc{} 8.2 (\fref{sec:impl}).  Each
data point was measured by performing many trials and taking a linear regression
of iteration count against time\footnote{using the criterion library~\cite{osullivan_criterion_2013}}.
This process allows for accurate measurements of both small and large times.  The
baseline unpack-repack tree-summing times vary from 25ns to 1.9 seconds at
depths 1 and 24 respectively.  Likewise, the baseline mapping times vary from
215ns to 2.93 seconds.
We use a simple contiguous implementation of buffers for
serialisation\footnote{A full, practical implementation should include growable
  or doubling buffers.}.
At depth 20, one copy of the tree takes around 10MB, and towards the right half
of the plot we see tree size exceeding cache size.

\subsection{Sockets with type-level state}
\label{sec:sockets}

The \textsc{bsd} socket \textsc{api} is a standard, if not \emph{the}
standard, through which computers connect over networks. It involves a
series of actions which must be performed in order: on the
server-side, a freshly created socket must be \emph{bound} to an
address, then start \emph{listening} incoming traffic, then
\emph{accept} connection requests; said connection is returned as a
new socket, this new socket can now \emph{receive} traffic. One reason
for having that many steps is that the precise sequence of
actions is protocol-dependent. For \textsc{tcp} traffic you would do as
described, but for \textsc{udp}, which does not need connections, you
would not accept a connection but receive messages directly.

The \texttt{socket} library for Haskell, exposes precisely this sequence of
actions.
Programming with it is exactly as clumsy as socket libraries for other
languages: after each action, the state of the socket changes, as do the
permissible actions, but these states are invisible in the types.
Better is to track the state of sockets in the type, akin to a typestate
analysis~\cite{strom_typestate_1983}.
In the \ensuremath{\Conid{File}} \textsc{api} of \fref{sec:io-protocols}, we made files
safer to use at the cost of having to thread a file handle
explicitely: each function consumes a file handle and returns a fresh
one. We can make this cost into an opportunity: we have the option of
returning a handle \emph{with a different type} from that of the
handle we consumed! So by adjoining a phantom type to sockets to track
their state, we can effectively encode the proper sequencing of socket
actions.

As an illustration, we implemented wrapper around the \textsc{api} of the
\texttt{socket} library. For concision, this wrapper is
specialised for the case of \textsc{tcp}.

\begin{hscode}\SaveRestoreHook
\column{B}{@{}>{\hspre}l<{\hspost}@{}}%
\column{10}{@{}>{\hspre}l<{\hspost}@{}}%
\column{12}{@{}>{\hspre}l<{\hspost}@{}}%
\column{13}{@{}>{\hspre}l<{\hspost}@{}}%
\column{E}{@{}>{\hspre}l<{\hspost}@{}}%
\>[B]{}\mathbf{data}\;\Conid{State}\mathrel{=}\Conid{Unbound}\mid \Conid{Bound}\mid \Conid{Listening}\mid \Conid{Connected}{}\<[E]%
\\
\>[B]{}\mathbf{data}\;\Conid{Socket}\;(\Varid{s}\mathbin{::}\Conid{State}){}\<[E]%
\\
\>[B]{}\mathbf{data}\;\Conid{SocketAddress}{}\<[E]%
\\[\blanklineskip]%
\>[B]{}\Varid{socket}\mathbin{::}{}\<[12]%
\>[12]{}\varid{IO}_{\varid{L}}\;\mathrm{1}\;(\Conid{Socket}\;\Conid{Unbound}){}\<[E]%
\\
\>[B]{}\Varid{bind}\mathbin{::}{}\<[10]%
\>[10]{}\Conid{Socket}\;\Conid{Unbound}\mathbin{⊸}\Conid{SocketAddress}\to \varid{IO}_{\varid{L}}\;\mathrm{1}\;(\Conid{Socket}\;\Conid{Bound}){}\<[E]%
\\
\>[B]{}\Varid{listen}\mathbin{::}\Conid{Socket}\;\Conid{Bound}\mathbin{⊸}\varid{IO}_{\varid{L}}\;\mathrm{1}\;(\Conid{Socket}\;\Conid{Listening}){}\<[E]%
\\
\>[B]{}\Varid{accept}\mathbin{::}{}\<[12]%
\>[12]{}\Conid{Socket}\;\Conid{Listening}\mathbin{⊸}\varid{IO}_{\varid{L}}\;\mathrm{1}\;(\Conid{Socket}\;\Conid{Listening},\Conid{Socket}\;\Conid{Connected}){}\<[E]%
\\
\>[B]{}\Varid{connect}\mathbin{::}{}\<[13]%
\>[13]{}\Conid{Socket}\;\Conid{Unbound}\mathbin{⊸}\Conid{SocketAddress}\to \varid{IO}_{\varid{L}}\;\mathrm{1}\;(\Conid{Socket}\;\Conid{Connected}){}\<[E]%
\\
\>[B]{}\Varid{send}\mathbin{::}\Conid{Socket}\;\Conid{Connected}\mathbin{⊸}\Conid{ByteString}\to \varid{IO}_{\varid{L}}\;\mathrm{1}\;(\Conid{Socket}\;\Conid{Connected},\Conid{Unrestricted}\;\Conid{Int}){}\<[E]%
\\
\>[B]{}\Varid{receive}\mathbin{::}\Conid{Socket}\;\Conid{Connected}\mathbin{⊸}\varid{IO}_{\varid{L}}\;\mathrm{1}\;(\Conid{Socket}\;\Conid{Connected},\Conid{Unrestricted}\;\Conid{ByteString}){}\<[E]%
\\
\>[B]{}\Varid{close}\mathbin{::}∀\Varid{s}. \Conid{Socket}\;\Varid{s}\to \varid{IO}_{\varid{L}}\;\omega\;(){}\<[E]%
\ColumnHook
\end{hscode}\resethooks

This linear socket \textsc{api} is very similar to that of files: we use the
\ensuremath{\varid{IO}_{\varid{L}}} monad in order to enforce linear use of sockets. The difference
is the argument to \ensuremath{\Conid{Socket}}, which represents the current state of the
socket and is used to limit the functions which apply to a socket
at a given time.

\paragraph{Implementing the linear socket \textsc{api}}
Our socket \textsc{api} has been tested by writing a small
echo-server. The \textsc{api} is implemented as a wrapper around the
\texttt{socket} library. Each function wrapped takes half-a-dozen
lines of code, of type annotation and coercions between \ensuremath{\Conid{IO}} and
\ensuremath{\varid{IO}_{\varid{L}}}\footnote{Since our implementation of \HaskeLL{} does not yet
  have multiplicity-polymorphism, we had to fake it with type families
  and \textsc{gadts}}. There is no computational behaviour besides
error recovery.

It would have been too restrictive to limit the typestate to enforce
the usage protocol of \textsc{tcp}. We do not intend for a new set of
wrapper functions to be implemented for each protocol. Instead the
wrappers are implemented with a generic type-state evolving according
to the rules of a deterministic automaton. Each protocol can define
it's own automaton, which is represented as a set of instances of
a type class.

\subsection{Pure bindings to impure \textsc{api}s}
\label{sec:spritekit}

In Haskell SpriteKit, \citet{chakravarty_spritekit_2017} have a different kind
of problem. They build a pure interface for graphics, in the same style
as the Elm programming language~\cite{czaplicki_elm_2012}, but implement it in terms
of an existing imperative graphical interface engine.

Basically, the pure interface takes an update function \ensuremath{\Varid{u}\mathbin{:}\Conid{Scene}\to \Conid{Scene}} which is
tasked with returning the next state that the screen will display.
The scene is first converted to a
pure tree where each node keeps, along with the pure data, a pointer
to its imperative counterpart when it applies, or \ensuremath{\Conid{Nothing}} for new
nodes.
\begin{hscode}\SaveRestoreHook
\column{B}{@{}>{\hspre}l<{\hspost}@{}}%
\column{E}{@{}>{\hspre}l<{\hspost}@{}}%
\>[B]{}\mathbf{data}\;\Conid{Node}\mathrel{=}\Conid{Node}\;\{\mskip1.5mu \Varid{payload}\mathbin{::}\Conid{Int},\Varid{ref}\mathbin{::}\Conid{Maybe}\;(\Conid{IORef}\;\Conid{ImperativeNode}),\Varid{children}\mathbin{::}[\mskip1.5mu \Conid{Node}\mskip1.5mu]\mskip1.5mu\}{}\<[E]%
\ColumnHook
\end{hscode}\resethooks

On each frame, SpriteKit applies \ensuremath{\Varid{u}} to the current scene, and checks
if a node \ensuremath{\Varid{n}} was updated. If it was, it applies the update directly
onto \ensuremath{\Varid{ref}\;\Varid{n}} or creates a new imperative node.

Things can go wrong though: if the update function {\em duplicates}
any proxy node, one gets the situation where two nodes \ensuremath{\Varid{n}} and
\ensuremath{\Varid{n'}} can point to the same imperative source \ensuremath{\Varid{ref}\;\Varid{n}\mathrel{=}\Varid{ref}\;\Varid{n'}}, but
have different payloads. In this situation the \ensuremath{\Conid{Scene}} has become
inconsistent and the behaviour of SpriteKit is unpredictable.

In the {\sc api} of \citet{chakravarty_spritekit_2017}, the burden of checking
non-duplication is on the programmer.  Using linear types, we can
switch that burden to the compiler: we change the update function to
type \ensuremath{\Conid{Scene}\mathbin{⊸}\Conid{Scene}}, and the \ensuremath{\Varid{ref}} field is made linear too.  Thanks
to linearity, no reference can be duplicated: if a node is copied, the
programmer must choose which one will correspond to the old imperative
counterpart and which will be new.

We implemented such an \textsc{api} in our implementation of
\HaskeLL{}. The library-side code does not use any linear code, the
\ensuremath{\Conid{Node}}s are actually used unrestrictedly. Linearity is only imposed on
the user of the interface, in order to enforced the above restriction.

\section{Related work}
\label{sec:related}
\subsection{Linearity via arrows vs. linearity via kinds}
\label{sec:lin-arrow}

There are two possible choices to indicate the distinction between
linear and unrestricted objects.  Our choice is to use the arrow
type. That is, we have both a linear arrow to introduce linear objects
in the environment, and an unrestricted arrow to introduce
unrestricted objects. This choice is featured in the work of
\citet{mcbride_rig_2016} and \citet{ghica_bounded_2014} and is
ultimately inspired by Girard's presentation of linear logic, which
features only linear arrows, and where the unrestricted arrow $A → B$
is encoded as $!A ⊸ B$.

Another popular choice
\cite{wadler_linear_1990,mazurak_lightweight_2010,morris_best_2016,tov_practical_2011}
is to separate types into two kinds: a linear kind and an unrestricted
kind. Values with a type whose kind is linear are linear, and the
others are unrestricted. (Thus in particular such systems feature
``linear arrows'', but they have a completely different interpretation
from ours.) While this does not match linear logic (there is no such
thing as a linear proposition), it is attractive on the surface
because, intuitively, some types are inherently linear (file handles,
updateable arrays, etc.) and some types are inherently unrestricted
(\ensuremath{\Conid{Int}}, \ensuremath{\Conid{Bool}}, etc.).  However, after scratching the surface we have
discovered that ``linearity via arrows'' has an edge over ``linearity
via kinds''.

\paragraph{Better code reuse}  When retrofitting linear types
  in an existing language, it is important to share as much code as
  possible between linear and non-linear code. In a system with
  linearity on arrows, the subsumption relation (linear arrows subsume
  unrestricted arrows) and the scaling of context in the application
  rule mean that much linear code can be used as-is from unrestricted
  code, and be properly promoted. Indeed, assuming lists as defined in
  \fref{sec:compatibility} and:
  \begin{hscode}\SaveRestoreHook
\column{B}{@{}>{\hspre}l<{\hspost}@{}}%
\column{5}{@{}>{\hspre}l<{\hspost}@{}}%
\column{31}{@{}>{\hspre}l<{\hspost}@{}}%
\column{E}{@{}>{\hspre}l<{\hspost}@{}}%
\>[5]{}(\plus )\mathbin{::}[\mskip1.5mu \Varid{a}\mskip1.5mu]\mathbin{⊸}[\mskip1.5mu \Varid{a}\mskip1.5mu]\mathbin{⊸}[\mskip1.5mu \Varid{a}\mskip1.5mu]{}\<[31]%
\>[31]{}\mbox{\onelinecomment  Append two lists}{}\<[E]%
\\
\>[5]{}\Varid{cycle}\mathbin{::}[\mskip1.5mu \Varid{a}\mskip1.5mu]\mathbin{→}[\mskip1.5mu \Varid{a}\mskip1.5mu]{}\<[31]%
\>[31]{}\mbox{\onelinecomment  Repeat a list, infinitely}{}\<[E]%
\ColumnHook
\end{hscode}\resethooks
  The following definition type-checks, even though \ensuremath{\plus } is applied to
  unrestricted values and used in an unrestricted context.
  \begin{hscode}\SaveRestoreHook
\column{B}{@{}>{\hspre}l<{\hspost}@{}}%
\column{5}{@{}>{\hspre}l<{\hspost}@{}}%
\column{E}{@{}>{\hspre}l<{\hspost}@{}}%
\>[5]{}\Varid{f}\mathbin{::}[\mskip1.5mu \Varid{a}\mskip1.5mu]\mathbin{→}[\mskip1.5mu \Varid{a}\mskip1.5mu]\mathbin{→}[\mskip1.5mu \Varid{a}\mskip1.5mu]{}\<[E]%
\\
\>[5]{}\Varid{f}\;\Varid{xs}\;\Varid{ys}\mathrel{=}\Varid{cycle}\;(\Varid{xs}\plus \Varid{ys}){}\<[E]%
\ColumnHook
\end{hscode}\resethooks
  In contrast, in a two-kind system, a function must declare the
  \emph{exact} linearity of its return value. Consequently, to make a
  function promotable from linear to unrestriced, its declaration must
  use polymorphism over kinds.

  As seen in \fref{sec:programming-intro}, in \HaskeLL{} reuse of linear
  code extends to datatypes: the usual parametric datatypes (lists,
  pairs, etc.) work both with linear and unrestricted values. On the
  contrary, if linearity depends on the kind, then if a linear value
  is contained in a type, the container type must be linear
  too. (Indeed, an unrestricted container could be discarded or
  duplicated, and its contents with it.) Consequently, sharing data
  types also requires polymorphism.  For example, in a two-kinds
  system, the \ensuremath{\Conid{List}} type may look like so, if one assumes a that
  \ensuremath{\Conid{Type}\;\mathrm{1}} is the kind of linear types and \ensuremath{\Conid{Type}\;\omega} is the kind of
  unrestricted types.
  \begin{hscode}\SaveRestoreHook
\column{B}{@{}>{\hspre}l<{\hspost}@{}}%
\column{5}{@{}>{\hspre}l<{\hspost}@{}}%
\column{E}{@{}>{\hspre}l<{\hspost}@{}}%
\>[5]{}\mathbf{data}\;\Conid{List}\;(\Varid{p}\mathbin{::}\Conid{Multiplicity})\;(\Varid{a}\mathbin{::}\Conid{Type}\;\Varid{p})\mathbin{::}\Conid{Type}\;\Varid{p}\mathrel{=}[\mskip1.5mu \mskip1.5mu]\mid \Varid{a}\mathbin{:}(\Conid{List}\;\Varid{p}\;\Varid{m}\;\Varid{a}){}\<[E]%
\ColumnHook
\end{hscode}\resethooks
  The above declaration ensures that the linearity of the list
  inherits the linearity of the contents. A linearity-polymorphic
  \ensuremath{(\plus )} function could have the definition:
  \begin{hscode}\SaveRestoreHook
\column{B}{@{}>{\hspre}l<{\hspost}@{}}%
\column{5}{@{}>{\hspre}l<{\hspost}@{}}%
\column{E}{@{}>{\hspre}l<{\hspost}@{}}%
\>[5]{}(\plus )\mathbin{::}\Conid{List}\;\Varid{p}\;\Varid{a}\to \Conid{List}\;\Varid{p}\;\Varid{a}\to \Conid{List}\;\Varid{p}\;\Varid{a}{}\<[E]%
\\
\>[5]{}[\mskip1.5mu \mskip1.5mu]\plus \Varid{xs}\mathrel{=}\Varid{xs}{}\<[E]%
\\
\>[5]{}(\Varid{x}\mathbin{:}\Varid{xs})\plus \Varid{ys}\mathrel{=}\Varid{x}\mathbin{:}(\Varid{xs}\plus \Varid{ys}){}\<[E]%
\ColumnHook
\end{hscode}\resethooks
  Compared to our append function, the type of the above requires
  multiplicity polymorphism \ensuremath{\Varid{p}}. Additionnally, the above function
  cannot (and should not) mix linear and unrestricted
  lists. Indeed because multiplicity is attached to the type of
  elements it must be the same for both arguments and the returned
  value.

  Note that, in the above, we parameterize over multiplicities instead
  of parameterizing over kinds directly, as is customary in the
  literature. We do so because it fits better \ghc{}, whose kinds are
  already parameterized over a so-called
  levity~\cite{eisenberg_levity_2017}.

  \paragraph{\textsc{Ats}} The \textsc{ats} language has a unique take
  on linear types \cite{zhu_linear-views_2005}, which can be
  classified as linearity via kinds and does not have
  polymorphism.\textsc{Ats} has a notion of \emph{stateful views}: views are
  linear values without run-time representation and are meant to
  track the state of pointers (\emph{e.g.} whether they are
  initialised). Pointers themselves remain unrestricted values: only
  views are linear.

  \paragraph{Dependent types}
  Linearity on the arrow meshes better with dependent types (see
  \fref{sec:extending-multiplicities}).  Indeed, consider a typical
  predicate over files (\ensuremath{\Conid{P}\mathbin{:}\Conid{File}\mathbin{→}\mathbin{∗}}). It may need to mention its
  argument several times to relate several possible sequences of
  operations on the file. While this is not a problem in our system,
  the function \ensuremath{\Conid{P}} is not expressible if \ensuremath{\Conid{File}} is intrinsically
  linear. Leaving the door open to dependent types is crucial to us,
  as this is currently explored as a possible extension to
  \ghc{}~\cite{weirich_dependent-haskell_2017}.

\paragraph{Linear values}
Yet, an advantage of ``linearity via kinds'' is the possibility to directly
declare the linearity of values returned by a function -- not just that
of the argument of a function. In contrast, in our system if one wants
to indicate that a returned value is linear, we have to
use a double-negation trick. That is, given $f : A → (B ⊸ !r) ⊸ r$,
then $B$ can be used a single time in the (single) continuation, and
effectively $f$ ``returns'' a single $B$. One can obviously declare a
type for linear values \ensuremath{\Conid{Linear}\;\Varid{a}\mathrel{=}(\Varid{a}\mathbin{⊸}\mathbin{!}\Varid{r})\mathbin{⊸}\Varid{r}} and chain
\ensuremath{\Conid{Linear}}-returning functions with appropriate combinators.  In fact,
as explained in \fref{sec:linear-io}, the cost of the double negation
almost entirely vanishes in the presence of an ambient monad.

\subsection{Other variants of ``linearity on the arrow''}
\label{sec:related-type-systems}

The \calc{} type system is heavily inspired from the work of
\citet{ghica_bounded_2014} and \citet{mcbride_rig_2016}. Both of them
present a type system where arrows are annotated with the multiplicty
of the the argument that they require, and where the multiplicities
form a semi-ring.

In contrast with \calc, \citeauthor{mcbride_rig_2016} uses a
multiplicity-annotated type judgement $Γ ⊢_ρ t : A$, where $ρ$
represents the multiplicity of $t$. So, in
\citeauthor{mcbride_rig_2016}'s system, when an unrestricted value is
required, instead of computing $ωΓ$, it is enough to check that
$ρ=ω$. The problem is that this check is arguably too coarse, and
results in the judgement $⊢_ω λx. (x,x) : A ⊸ (A,A)$ being derivable.
This derivation is not desirable: it implies that there cannot be
reusable definitions of linear functions. In terms of linear logic~\cite{girard_linear_1987},
\citeauthor{mcbride_rig_2016} makes the natural function of type $!(A⊸B) ⟹ !A⊸!B$
into an isomorphism.
In that respect, our system is closer to
\citeauthor{ghica_bounded_2014}'s.

The essential differences between our system and that of
\citeauthor{ghica_bounded_2014} is that we support
multiplicity-polymorphism and datatypes. In particular our \ensuremath{\mathbf{case}} rule
is novel.

The literature on so-called
coeffects~\cite{petricek_coeffects_2013,brunel_coeffect_core_2014}
uses type systems similar to \citeauthor{ghica_bounded_2014}, but
with a linear arrow and multiplicities carried by the exponential
modality instead. \Citet{brunel_coeffect_core_2014}, in particular,
develops a Krivine-style realisability model for such a calculus. We are not
aware of an account of Krivine realisability for lazy languages, hence
this work is not directly applicable to \calc.

\subsection{Uniqueness and ownership typing}
\label{sec:uniqueness}

The literature
contains many proposals for uniqueness (or ownership) types (in contrast with
linear types).
Prominent representative languages with uniqueness types include
Clean~\cite{barendsen_uniqueness_1993} and
Rust~\cite{matsakis_rust_2014}. \HaskeLL, on the other hand, is
designed around linear types based on linear
logic~\cite{girard_linear_1987}.

Idris \cite{brady_idris_2013} features uniqueness types, which have
been used, in particular, to enforce communication
protocols~\cite{brady_uniqueness_2017}. Uniqueness types, in Idris,
are being replaced by linear types based on
\textsc{qtt}~\cite{atkey_qtt_2017}, a variant of
\citet{mcbride_rig_2016}.

Linear types and uniqueness types are, at their core, dual: whereas a linear type is
a contract that a function uses its argument exactly once
even if the call's context can share a linear argument as many times as it
pleases, a uniqueness type ensures that the argument of a function is
not used anywhere else in the expression's context even if the callee
can work with the argument as it pleases.
Seen as a system of constraints, uniqueness typing is a {\em non-aliasing
analysis} while linear typing provides a {\em cardinality analysis}. The
former aims at in-place updates and related optimisations, the latter
at inlining and fusion. Rust and Clean largely explore the
consequences of uniqueness on in-place update; an in-depth exploration
of linear types in relation with fusion can be found
in~\citet{bernardy_composable_2015};
see also the discussion in \fref{sec:fusion}.

Because of this weak duality, we could have
retrofitted uniqueness types to Haskell. But several points
guided our choice of designing \HaskeLL{} around linear
logic instead: (a) functional languages have more use
for fusion than in-place update (if the fact that \textsc{ghc} has a cardinality
analysis but no non-aliasing analysis is any indication); (b) there is a
wealth of literature detailing the applications of linear
logic — see \fref{sec:applications}; (c) and decisively, linear type systems are
conceptually simpler than uniqueness type systems, giving a
clearer path to implementation in \textsc{ghc}.

\paragraph{Rust \& Borrowing}

In \HaskeLL{} we need to thread linear variables throughout the
program.  Even
though this burden could be alleviated using syntactic sugar, Rust uses
instead a type-system feature for this purpose: \emph{borrowing}.
Borrowed values differ from owned values in that they can
be used in an unrestricted fashion, albeit in a \emph{delimited
  scope}.

Borrowing does not come without a cost, however: if a
function \texttt{f} borrows a value \texttt{v} of type \texttt{T},
then the caller of the function \emph{must} retain \texttt{v} alive
until \texttt{f} has returned; the consequence is that Rust cannot, in
general, perform tail-call elimination, crucial to the operation
behaviour of many functional programs, as some resources must be
released \emph{after} \texttt{f} has returned.

The reason that Rust programs depend so much on borrowing is that
unique values are the default. \HaskeLL{} aims to hit a different
point in the design space where regular non-linear expressions are the
norm, yet gracefully scaling up investing extra effort to enforce
linearity invariants is possible.
Nevertheless, we discuss in \fref{sec:extending-multiplicities} how to
extend \HaskeLL{} with borrowing.

\subsection{Linearity via monads}

\citet{launchbury_st_1995} taught us
a conceptually simple approach to lifetimes: the \ensuremath{\Conid{ST}} monad. It has
a phantom type parameter \ensuremath{\Varid{s}} (the \emph{region}) that is instantiated once at the
beginning of the computation by a \ensuremath{\Varid{runST}} function of type:
\begin{hscode}\SaveRestoreHook
\column{B}{@{}>{\hspre}l<{\hspost}@{}}%
\column{3}{@{}>{\hspre}l<{\hspost}@{}}%
\column{E}{@{}>{\hspre}l<{\hspost}@{}}%
\>[3]{}\Varid{runST}\mathbin{::}∀\Varid{a}. (∀\Varid{s}. \Conid{ST}\;\Varid{s}\;\Varid{a})\to \Varid{a}{}\<[E]%
\ColumnHook
\end{hscode}\resethooks
This way, resources that are allocated during the computation, such
as mutable cell references, cannot escape the dynamic scope of the call
to \ensuremath{\Varid{runST}} because they are themselves tagged with the same phantom
type parameter.

\paragraph{Region-types}

With region-types such as \ensuremath{\Conid{ST}}, we cannot express typestates, but this
is sufficient to offer a safe \textsc{api} for freezing array or
ensuring that files are eventually closed.
This simplicity (one only needs rank-2 polymorphism)
comes at a cost: we've already mentionned in
\fref{sec:freezing-arrays} that it forces operations to be more
sequentialised than need be, but more importantly, it does not
support \emph{prima facie} the interaction of nested regions.

\citet{kiselyov_regions_2008} show that it is possible to promote
resources in parent regions to resources in a subregion. But this is
an explicit and monadic operation, forcing an unnatural imperative
style of programming where order of evaluation is explicit.
The HaskellR project~\cite{boespflug_project_2014} uses monadic
regions in the style of \citeauthor{kiselyov_regions_2008} to safely
synchronise values shared between two different garbage collectors for
two different languages. \Citeauthor{boespflug_project_2014} report
that custom monads make writing code at an interactive prompt
difficult, compromises code reuse, forces otherwise pure functions to
be written monadically and rules out useful syntactic facilities like
view patterns.

In contrast, with linear types, values in two regions hence can safely
be mixed: elements can be moved from one data structure (or heap) to
another, linearly, with responsibility for deallocation transferred
along.

\paragraph{Idris's dependent indexed monad}
To go beyond simple regions, Idris \cite{brady_idris_2013} introduces  a generic way to add typestate on top of a monad, the
\ensuremath{\Conid{ST}} indexed monad transformer\footnote{See \eg
  \url{http://docs.idris-lang.org/en/latest/st/index.html}. Where you
  will also discover that \ensuremath{\Conid{ST}} is actually defined in terms of a more
  primitive \ensuremath{\Conid{STrans}}}. The basic
idea is that everything which must be single-threaded~--~and that we
would track with linearity~--~become part of the state of the
monad. For instance, coming back to the sockets of \fref{sec:sockets},
the type of \ensuremath{\Varid{bind}} would be as follows:
\begin{hscode}\SaveRestoreHook
\column{B}{@{}>{\hspre}l<{\hspost}@{}}%
\column{3}{@{}>{\hspre}l<{\hspost}@{}}%
\column{E}{@{}>{\hspre}l<{\hspost}@{}}%
\>[3]{}\Varid{bind}\mathbin{::}(\Varid{sock}\mathbin{::}\Conid{Var})\to \Conid{SocketAddress}\to \Conid{ST}\;\Conid{IO}\;()\;[\mskip1.5mu \Varid{sock}\mathbin{:::}\Conid{Socket}\;\Conid{Unbound}:↦\Conid{Socket}\;\Conid{Bound}\mskip1.5mu]{}\<[E]%
\ColumnHook
\end{hscode}\resethooks
Where \ensuremath{\Varid{sock}} is a reference into the monads's state, and \ensuremath{\Conid{Socket}\;\Conid{Unbound}} is
the type of \ensuremath{\Varid{sock}} before \ensuremath{\Varid{bind}}, and \ensuremath{\Conid{Socket}\;\Conid{Bound}}, the type of
\ensuremath{\Varid{sock}} after \ensuremath{\Varid{bind}}.

Idris uses its dependent types to associate a state to the value of
its first argument. Dependent types are put to even greater use for
error management where the state of the socket depends on whether
\ensuremath{\Varid{bind}} succeeded or not:
\begin{hscode}\SaveRestoreHook
\column{B}{@{}>{\hspre}l<{\hspost}@{}}%
\column{3}{@{}>{\hspre}l<{\hspost}@{}}%
\column{12}{@{}>{\hspre}l<{\hspost}@{}}%
\column{E}{@{}>{\hspre}l<{\hspost}@{}}%
\>[3]{}\mbox{\onelinecomment  In Idris, \ensuremath{\Varid{bind}} uses a type-level function (\ensuremath{\Varid{or}}) to handle errors}{}\<[E]%
\\
\>[3]{}\Varid{bind}\mathbin{::}{}\<[12]%
\>[12]{}(\Varid{sock}\mathbin{::}\Conid{Var})\to \Conid{SocketAddress}\to {}\<[E]%
\\
\>[12]{}\Conid{ST}\;\Conid{IO}\;(\Conid{Either}\;()\;())\;[\mskip1.5mu \Varid{sock}\mathbin{:::}\Conid{Socket}\;\Conid{Unbound}:↦(\Conid{Socket}\;\Conid{Bound}\mathbin{`\Varid{or}`}\Conid{Socket}\;\Conid{Unbound})\mskip1.5mu]{}\<[E]%
\\
\>[3]{}\mbox{\onelinecomment  In \HaskeLL{}, by contrast, the typestate is part of the return type}{}\<[E]%
\\
\>[3]{}\Varid{bind}\mathbin{::}\Conid{Socket}\;\Conid{Unbound}\mathbin{⊸}\Conid{SocketAddress}\to \Conid{Either}\;(\Conid{Socket}\;\Conid{Bound})\;(\Conid{Socket}\;\Conid{Unbound}){}\<[E]%
\ColumnHook
\end{hscode}\resethooks
The support for dependent types in \textsc{ghc} is not as
comprehensive as Idris's. But it is conceivable to implement such an
indexed monad transformer in Haskell. However, this is not an easy task,
and we can anticipate that the error messages would be hard to stomach.

\section{Future work}

\subsection{Controlling program optimisations}
\label{sec:fusion}
Inlining is a cornerstone of program optimisation, exposing
opportunities for many program transformations. Yet not every function can
be inlined without negative effects on performance: inlining a
function with more than one use sites of the argument may result in
duplicating a computation. For example one should avoid the following
reduction: \ensuremath{(\lambda \Varid{x}\to \Varid{x}\plus \Varid{x})\;\Varid{expensive}\mathbin{⟶}\Varid{expensive}\plus \Varid{expensive}}.

Many compilers can discover safe inlining opportunities by analysing
source code and determine how many times functions use their
arguments.  (In \textsc{ghc} it is called the cardinality
analysis~\cite{sergey_cardinality_2014}). A limitation of such an
analysis is that it is necessarily heuristic (the problem is
undecidable for Haskell). Because inlining is crucial to efficiency, programmers
find themselves in the uncomfortable position of relying on a
heuristic to obtain efficient programs. Consequently, a small, seemingly
innocuous change can prevent a critical inlining opportunity and have
rippling catastrophic effects throughout the program.
Such unpredictable behaviour justifies the folklore that high-level languages should be
abandoned to gain precise control over program efficiency.

A remedy is to use the multiplicity annotations of \calc{} as
cardinality \emph{declarations}. Formalising and implementing the
integration of multiplicities in the cardinality analysis is
left as future work.

\subsection{Extending multiplicities}
\label{sec:extending-multiplicities}

For the sake of this article, we use only multiplicities
$1$ and $ω$.  But in fact \calc{} can readily be extended to
more, following \citet{ghica_bounded_2014} and
\citet{mcbride_rig_2016}. The general setting for \calc{} is an
ordered-semiring of multiplicities (with a join operation for type
inference).  In particular, in order to support dependent types,
\citeauthor{mcbride_rig_2016} needs a $0$ multiplicity.   We may also want to add a
multiplicity for affine arguments (\ie  arguments which can be
used \emph{at most once}).

The typing rules are mostly unchanged with the \emph{caveat} that
$\mathsf{case}_π$ must exclude $π=0$ (in particular we see that we
cannot substitute multiplicity variables by $0$). The variable rule becomes:
$$
\inferrule{ x :_1 A \leqslant Γ }{Γ ⊢ x : A}
$$
Where the order on contexts is the point-wise extension of the order
on multiplicities.

In \fref{sec:uniqueness}, we have considered the notion of
\emph{borrowing}: delimiting life-time without restricting to linear
usage. This seems to be a useful pattern, and we believe it can be
encoded as an additional multiplicity as follows: let $β$ be an
additional multiplicity with the following characteristics:
\begin{itemize}
\item $1 < β < ω$
\item $β+β = 1+β = 0+β = 1+1 = β$
\end{itemize}
That is, $β$ supports contraction and weakening but is smaller than
$ω$. We can then introduce a value with an explicit lifetime with the
following pattern
\begin{hscode}\SaveRestoreHook
\column{B}{@{}>{\hspre}l<{\hspost}@{}}%
\column{3}{@{}>{\hspre}l<{\hspost}@{}}%
\column{E}{@{}>{\hspre}l<{\hspost}@{}}%
\>[3]{}\Varid{borrow}\mathbin{::}(\Conid{T}\to _{\Varid{β}}\;\Conid{Unrestricted}\;\Varid{a})\to _{\Varid{β}}\;\Conid{Unrestricted}\;\Varid{a}{}\<[E]%
\ColumnHook
\end{hscode}\resethooks
The \ensuremath{\Varid{borrow}} function makes the life-time manifest in the structure of
the program. In particular, it is clear that calls within the argument
of \ensuremath{\Varid{borrow}} are not tail: a shortcoming of borrowing that we mentioned
in \fref{sec:uniqueness}.

\subsection{Future industrial applications}
\label{sec:industry}
Our own work in an industrial context triggered our efforts to add
linear types to \textsc{ghc}. We were originally motivated by
precisely typed protocols for complex interactions and by taming
\textsc{gc} latencies in distributed systems. But we have since
noticed other potential applications of linearity in a variety of
other industrial projects.

\begin{description}

\item[Streaming I/O] Program inputs and outputs are frequently much
  larger than the available \textsc{ram} on any single node. Rather than
  building complex pipelines with brittle explicit loops copying data
  piecemeal to spare our precious \textsc{ram}, one approach is to compose
  combinators that transform, split and merge data wholemeal but in
  a streaming fashion. These combinators manipulate first-class {\em
    streams} and guarantee bounded memory usage, as in the below
  infinitely running echo service:
  \begin{hscode}\SaveRestoreHook
\column{B}{@{}>{\hspre}l<{\hspost}@{}}%
\column{5}{@{}>{\hspre}l<{\hspost}@{}}%
\column{E}{@{}>{\hspre}l<{\hspost}@{}}%
\>[5]{}\Varid{receive}\mathbin{::}\Conid{Socket}\to \Conid{IOStream}\;\Conid{Msg}{}\<[E]%
\\
\>[5]{}\Varid{send}\mathbin{::}\Conid{Socket}\to \Conid{IOStream}\;\Conid{Msg}\to \Conid{IO}\;(){}\<[E]%
\\[\blanklineskip]%
\>[5]{}\Varid{echo}\;\Varid{isock}\;\Varid{osock}\mathrel{=}\Varid{send}\;\Varid{osock}\;(\Varid{receive}\;\Varid{isock}){}\<[E]%
\ColumnHook
\end{hscode}\resethooks
  However, reifying sequences of \ensuremath{\Conid{IO}} actions (socket reads) in this
  way runs the risk that effects might be duplicated inadvertently. In
  the above example, we wouldn't want to inadvertently hand over the
  receive stream to multiple consumers, or the abstraction of
  wholemeal I/O programming would be broken (like in \citet[Section
    2.2]{lippmeier_parallel_2016}), because neither consumer would
  ultimately see the same values from the stream. If say one consumer
  reads in the stream first, the second consumer would see an
  empty stream --- not what the first consumer saw.
  We have seen this very error several times in industrial projects,
  where the symptoms are bugs whose root cause are painful to track
  down. A linear type discipline would prevent such bugs.

\item[Programming foreign heaps] Complex projects with large teams
  invariably involve a mix of programming languages. Reusing legacy
  code is often much cheaper than reimplementing it. A key to
  successful interoperation between languages is performance. If all
  code lives in the same address space, then data need not be copied
  as it flows from function to function implemented in multiple
  programming languages. Trouble is, language A needs to tell language
  B what objects in language A's heap still have live references in
  the call stack of language B to avoid too eager garbage collection.

  For instance, users of \texttt{inline-java} call the \textsc{jvm}
  from Haskell via the \textsc{jni}. The \textsc{jvm} implicitly
  creates so-called \emph{local references} any time we request a Java
  object from the \textsc{jvm}. The references count as \textsc{gc}
  roots that prevent eager garbage collection. For performance, local
  references have a restricted scope: they are purely thread-local and
  never survive the call frame in which they were created. Both
  restrictions to their use can be enforced with linear types.

\item[Remote direct memory access]
  \Fref{sec:cursors} is an example of an \textsc{api} requiring
  destination-passing style. This style often appears
  in performance-sensitive contexts. One notable example from our
  industrial experience is \textsc{rdma} (Remote Direct Memory Access),
  which enables machines in high-performance clusters to copy data
  from the address space in one process to that of a remote process
  directly, bypassing the kernel and even the
  \textsc{cpu}, thereby avoiding any unneeded copy in the process.

  One could treat a remote memory location as a low-level resource, to
  be accessed using an imperative {\sc api}. Using linear types, one
  can instead treat it as a high-level value which can be written to
  directly (but exactly once). Using linear types the compiler can
  ensure that, as soon as the writing operation is complete, the
  destination computer is notified.
\end{description}

\section{Conclusion}

This article demonstrates how an existing lazy language, such
as Haskell, can be extended with linear types, without compromising
the language, in the sense that:
\begin{itemize}
\item existing programs are valid in the extended language
  \emph{without modification},
\item such programs retain the same operational semantics, and in particular
\item the performance of existing programs is not affected,
\item yet existing library functions can be reused to serve the
  objectives of resource-sensitive programs with simple changes to
  their types, and no code duplication.
\end{itemize}
In other words: regular Haskell comes first. Additionally, first-order
linearly typed functions and data structures are usable directly from
regular Haskell code. In such a setting their semantics is that of
the same code with linearity erased.

\HaskeLL{} was engineered as an unintrusive design, making it tractable
to integrate to an existing, mature compiler with a large ecosystem.
We have developed a prototype implementation extending
\textsc{ghc} with multiplicities. As we hoped, this
design integrates well in \textsc{ghc}.

Even though we change only \ghc's type system, we found that the
compiler and runtime already had the features necessary for unboxed,
off-heap, and in-place data structures.  That is, \ghc{} has the
low-level compiler primitives and \textsc{ffi} support to implement,
for example, mutable arrays, mutable cursors into serialised data, or
off-heap foreign data structures without garbage collection.  These
features could be used {\em before} this work, but their correct use
put some burden-of-proof on the programmers. Linearity unlocks these
capabilities for safe, compiler-checked use, within pure code.

\begin{acks}                            
  This work has received funding from the \grantsponsor{ERC}{European
    Commission}{https://erc.europa.eu/} through the \textsc{sage} project
  (Grant agreement
  no. \grantnum[http://www.sagestorage.eu]{ERC}{671500}), as well as
  from \grantsponsor{VR}{Swedish Research Council}{https://www.vr.se/}
  via the establishment of the Centre for Linguistic Theory and
  Studies in Probability (\textsc{clasp}) at the University of Gothenburg.
  We thank Manuel Chakravarty, Stephen Dolan and Peter Thiemann for their valuable
  feedback on early versions of this paper.
\end{acks}

\bibliography{../PaperTools/bibtex/jp,../local}{}

\ifx\longversion\undefined{}\else{
\clearpage
\appendix
\section{Semantics and soundness of \calc{}}
\label{appendix:dynamics}

In accordance with our stated goals in \fref{sec:introduction}, we are
interested in two key properties of our system: 1. that we can implement
a linear \textsc{api} with mutation under the hood, while exposing a
pure interface, and 2. that typestates are indeed enforced. This
appendix establishes these results in accordance to
\fref{sec:metatheory} where they were presented.

We introduce two dynamic semantics for \calc{}: a semantics with
mutation which models the implementation but blocks on incorrect
type-states, and a pure semantics, which we dub \emph{denotational} as
it represents the intended meaning of the program. We consider here array
primitives in the style of \fref{sec:freezing-arrays}, but we could extend to
any number of other examples such as the files of
\fref{sec:io-protocols}.

We prove the two semantics bisimilar, so that type-safety and progress
can be transported from the denotational semantics to the ordinary
semantics with mutation. The bisimilarity itself ensures that the
mutations are not observable and that the semantics is correct in exposing
a pure semantics. The progress result proves that typestates need not be
tracked dynamically.

\subsection{Preliminaries}
\label{sec:partial-derivations}

Our operational semantics are big-step semantics with laziness in the
style of \citet{launchbury_natural_1993}. In such semantics, sharing
is expressed by mutating the environment assigning value to
variables. Terms are transformed ahead of execution in order to
reflect the amount of sharing that a language like Haskell would
offer. In particular the arguments of an application are always
variables.

Following \citet{gunter_partial-big-step_1993}, however, we consider
not only standard big-step derivations but also partial derivations.
The reason to consider partial derivation is that they make it
possible to express properties such as \emph{progress}: all partial
derivations can be extended.

Given a number of rules defining $a⇓b$ with ordered premises (we will
use the ordering of premises shortly), we
define a \emph{total derivation} of $a⇓b$ as a tree in the standard
fashion. As usual $a⇓b$ holds if there is a total derivation for it.
A \emph{partial} derivation of $a⇓?$ (the question mark is part of the
syntax: the right-hand value is the result of the evaluation, it is
not yet known for a partial derivation!) is either:
\begin{itemize}
\item just a root labelled with $a⇓?$,
\item or an application of a rule matching $a$ where exactly one of
  the premises, $a'⇓?$ has a partial derivation, all the premises to
  the left of $a'⇓?$ have a total derivation, and the premises to the
  right of $a'⇓?$ are not known yet (since we would need to know the
  value $?$ to know what the root of the next premise is).
\end{itemize}

Remark that, by definition, in a partial derivation, there is exactly one
$b⇓?$ with no sub-derivation. Let us call $b$ the \emph{head} of the
partial derivation. And let us write $a⇓^*b$ for the relation which
holds when $b$ is the head of some partial derivation with root
$a⇓?$. We call $a⇓^*b$ the \emph{partial evaluation relation}, and, by
contrast, $a⇓b$ is sometimes referred to as the the \emph{complete
  evaluation relation}.

\subsection{Ordinary semantics}

Our semantics, which we often call \emph{ordinary} to constrast it
with the denotational semantics of \fref{sec:denotational}, follows
closely the semantics of \citet{launchbury_natural_1993}. The main
differences are that we keep the type annotation, and that we have
primitives for proper mutation.

Mixing mutation and laziness is not usual, as the unspecified
evaluation order of lazy languages would make mutation order
unpredicable, hence programs non-deterministic. It is our goal to show
that the linear typing discipline ensures that, despite the mutations,
the evaluation is pure.

Just like \citet{launchbury_natural_1993} does, we constrain the terms
to be explicit about sharing, before any evaluation takes
place. \Fref{fig:launchbury:syntax} shows the translation of abtrary
term to terms in the explicit-sharing form.

\begin{figure}
  \figuresection{Translation of typed terms}
  \begin{align*}
    (λ_π(x{:}A). t)^* &= λ_π(x{:}A). (t)^* \\
    x^*             &= x \\
    (t  x )^*     &= (t)^*  x \\
    (t  u )^*     &= \flet[π] y : A = (u)^* \fin (t)^*  y &
    \text{with $Γ⊢ t : A →_π B$}
  \end{align*}
  \begin{align*}
    c_k  t₁ … t_n   &= \flet x₁ :_{π_1} A_1 = (t₁)^*,…, x_n :_{π_n} A_n = (t_n)^*
                      \fin c_k x₁ … x_n & \text{with $c_k : A_1
                                          →_{π_1}…A_n →_{π_n}D$}
  \end{align*}
  \begin{align*}
    (\case[π] t {c_k  x₁ … x_{n_k} → u_k})^* &= \case[π] {(t)^*} {c_k  x₁ … x_{n_k} → (u_k)^*} \\
    (\flet[π] x_1: A_1= t₁  …  x_n : A_n = t_n \fin u)^* & = \flet[π] x₁:A_1 = (t₁)^*,…, x_n:A_n= (t_n)^* \fin (u)^*
  \end{align*}

  \caption{Syntax for the Launchbury-style semantics}
  \label{fig:launchbury:syntax}
\end{figure}

The evaluation relation is of the form $Γ : e ⇓ Δ : z$ where $e$ is an
expression, $z$ a value $Γ$ and $Δ$ are \emph{environments} with
bindings of the form $x :_ω A = e$ assigning the expression $e$ the the
variable $x$ of type $A$. Compared to the pure semantic of
\citeauthor{launchbury_natural_1993}, we have one additional kind of
values, $l$ for names of arrays. Array names are given semantics by
additional bindings in environments which we write, suggestively, $l
:_1 A = arr$. The $1$ is here to remind us that arrays cannot be used
arbitrarily, however, it does not mean they are always used in a
linear fashion: frozen arrays are not necessarily linear, but they
still appear as array names.

The details of the ordinary evaluation relation are given in
\fref{fig:dynamics}. Let us describe the noteworthy rules:
\begin{description}
\item[mutable cell] array names are values, hence are not
  reduced. In that they differ from variables.
\item[newMArray] allocates a fresh array of the given size (we write
  $i$ for an integer value). Note that
  the value of $a$ is not evaluated: an array in the environment is
  a concrete list of (not necessarily distinct) variables.
\item[writeArray] Mutates its array argument
\item[freezeArray] Mutates \emph{the type} of its argument to \ensuremath{\Conid{Array}}
  before wrapping it in $\varid{Unrestricted}$, so that we cannot call
  $\varid{write}$ on it anymore: $\varid{write}$ would block because
  the type of $l$ is not \ensuremath{\Conid{MArray}}. Of course, in an implementation
  this would not be checked because progress ensures that the case
  never arises.
\end{description}

\begin{figure}
  \begin{mathpar}
    \inferrule{ }{Γ : λp. t ⇓ Γ : λp. t}\text{m.abs}

    \inferrule{Γ : e ⇓ Δ : λp.e' \\ Δ : e'[π/q] ⇓ Θ : z} {Γ :
      e π ⇓ Θ : z} \text{m.app}

    \inferrule{ }{Γ : λ_π(x{:}A). e ⇓ Γ : λ_π(x{:}A). e}\text{abs}

    \inferrule{Γ : e ⇓ Δ : λ_π(y{:}A).e' \\ Δ : e'[x/y] ⇓ Θ : z} {Γ :
      e x ⇓ Θ : z} \text{application}

    \inferrule{Γ : e ⇓ Δ : z}{(Γ,x :_ω A = e) : x ⇓ (Δ;x :_ω A = z) :
      z}\text{variable}

    \inferrule{ }
    {(Γ,l :_1 A = arr) : l ⇓ (Γ, l :_1 A = arr) : l}\text{mutable cell}

    \inferrule{(Γ,x_1 :_ω A_1 = e_1,…,x_n :_ω A_n e_n) : e ⇓ Δ : z}
    {Γ : \flet[π] x₁ : A_1 = e₁ … x_n : A_n = e_n \fin e ⇓ Δ :
      z}\text{let}

    \inferrule{ }{Γ : c  x₁ … x_n ⇓ Γ : c  x₁ …
      x_n}\text{constructor}

    \inferrule{Γ: e ⇓ Δ : c_k  x₁ … x_n \\ Δ : e_k[x_i/y_i] ⇓ Θ : z}
    {Γ : \case[π] e {c_k  y₁ … y_n ↦ e_k } ⇓ Θ : z}\text{case}


    \inferrule
    {Γ:n ⇓ Δ:i \\ (Δ, l :_1 \varid{MArray}~a = [a,…,a]) : \flet[1] x = l \fin f~x ⇓ Θ : \varid{Unrestricted}~x}
    {Γ : \varid{newMArray}~n~a~f ⇓ Θ : \varid{Unrestricted}~x}\text{newMArray}

    \inferrule{Γ:n⇓Δ:i \\ Δ:arr ⇓ (Θ,l:_1 \varid{MArray}~a = [a_1,…,a_i,…,a_n]):l}
    {Γ : \varid{write}~arr~n~a ⇓ Θ,l :_1 \varid{MArray}~a =
      [a_1,…,a,…,a_n] : l}\text{write}

    \inferrule{Γ:arr ⇓ (Δ,l :_1 \varid{MArray}~a = [a_1,…,a_n]):l}
    { Γ : \varid{freeze}~arr ⇓ (Δ,l :_1 \varid{Array}~a = [a_1,…,a_n],
      x :_ω \varid{Array}~a = l) :
      \varid{Unrestricted}~x}\text{freeze}

    \inferrule
    {Γ : n ⇓ Δ : i \\ Δ:arr ⇓ (Θ,l :_1 \varid{Array}~a =
      [a_1,…,a_i,…,a_n]) : l \\ (Θ,l :_1 \varid{Array}~a =
      [a_1,…,a_i,…,a_n]) : a_i ⇓ Λ : z}
    {Γ : \varid{index}~arr~n ⇓ Λ : z}

  \end{mathpar}

  \caption{Ordinary dynamic semantics}
  \label{fig:dynamics}
\end{figure}

\subsection{Denotational semantics}
\label{sec:denotational}

\begin{figure}
  \begin{mathpar}
\inferrule{ }{Ξ ⊢ (Γ | λp. t ⇓ Γ | λp. t) :_ρ A, Σ}\text{m.abs}

\inferrule{Ξ ⊢ (Γ | e ⇓ Δ | λp.e') :_ρ A, Σ \\ Ξ ⊢ (Δ | e'[π/q] ⇓ Θ | z) :_ρ A, Σ} {(Γ :
  e π ⇓ Θ : z) :_ρ A, Σ} \text{m.app}

\inferrule{ }{Ξ ⊢ (Γ | λ_π(x{:}A). e ⇓ Γ | λ_π (x{:}A). e) :_ρ A→_π B, Σ}\text{abs}

\inferrule
    {Ξ  ⊢  (Γ|e      ⇓ Δ|λ(y:_π A).u):_ρ A →_π B, x:_{πρ} A, Σ \\
     Ξ  ⊢  (Δ|u[x/y] ⇓ Θ|z)   :_ρ       B,            Σ}
    {Ξ  ⊢  (Γ|e x ⇓ Θ|z) :_ρ B ,Σ}
{\text{app}}

\inferrule
  {Ξ, x:_ωB ⊢ (Γ|e ⇓ Δ|z) :_ρ A, Σ}
  {Ξ ⊢ (Γ,x :_ω B = e | x  ⇓ Δ, x :_ω B = z | z) :_ρ A, Σ}
{\text{shared variable}}

\inferrule
  {Ξ ⊢ (Γ|e ⇓ Δ|z) :_1 A, Σ}
  {Ξ ⊢ (Γ,x :_1 B = e| x  ⇓  Δ|z) :_1 A,  Σ}
{\text{linear variable}}

\inferrule
  {Ξ ⊢ (Γ,       x_1 :_{ρπ} A_1 = e_1 … x_n :_{ρπ} A_n = e_n  |  t ⇓ Δ|z) :_ρ C, Σ}
  {Ξ ⊢ (Γ|\flet[π] x_1 :  A_1 = e_1 … x_n : A_n = e_n \fin t ⇓ Δ|z) :_ρ C, Σ}
{\text{let}}

\inferrule
  { }
  {Ξ ⊢ (Γ | c x_1…x_n  ⇓ Γ | c x_1…x_n) :_ρ A, Σ}
{\text{constructor}}

\inferrule
  {Ξ,y_1:_{π_1qρ} A_1 … ,y_n:_{π_nqρ} A_n ⊢ (Γ|e ⇓ Δ|c_k x_1…x_n) :_{πρ} D, u_k:_ρ C, Σ \\
    Ξ ⊢ (Δ|u_k[x_i/y_i] ⇓ Θ|z) :_ρ C, Σ}
  {Ξ ⊢ (Γ|\case[π] e {c_k y_1…y_n ↦ u_k} ⇓ Θ|z) :_ρ C, Σ}
  {\text{case}}


\inferrule
{Ξ ⊢ (Γ|n ⇓ Δ|i), \varid{Int}, (arr:_ρ \varid{MArray}~a, Σ) \\ Ξ ⊢
  (Δ|\flet[1] x = [a,…,a] \fin f~x) ⇓ Θ|\varid{Unrestricted}~x) :_1 \varid{Unrestricted}~B, Σ}
{Ξ ⊢ (Γ|\varid{newMArray}~n~a~f ⇓ Θ|\varid{Unrestricted}~x) :_ρ \varid{Unrestricted}~B, Σ}\text{newMArray}

\inferrule
{Ξ ⊢ (Γ|n⇓Δ|i) :_ρ \varid{Int}, (arr:_ρ \varid{MArray}~a, Σ) \\ Ξ ⊢ (Δ|arr⇓Θ|[a_1,…,a_i,…,a_n]) :_ρ
  \varid{MArray}~a, Σ }
{Ξ ⊢ (Γ|\varid{write}~arr~n~a
  ⇓ Γ|[a_1,…,a,…,a_n]) :_ρ \varid{MArray}~a, Σ}\text{write}

\inferrule
{Ξ ⊢ (Γ|arr ⇓ Δ|[a_1,…,a_n]) :_ρ \varid{MArray}~a, Σ}
{Ξ ⊢ (Γ|\varid{freeze}~arr ⇓ Δ,x :_1 \varid{Array a} = [a_1,…,a_n]|\varid{Unrestricted}~x
  ) :_ρ \varid{Unrestricted} (\varid{Array}~a), Σ}\text{freeze}

\inferrule
{Ξ ⊢ (Γ | n ⇓ Δ | i) :_ρ \varid{Int},Σ \\ Ξ ⊢ (Δ|arr ⇓ Θ |
  [a_1,…,a_i,…,a_n])) :_ρ \varid{Array}~a,Σ \\ Ξ ⊢ (Θ | a_i ⇓ Λ | z)
:_ρ A, Σ}
{Ξ ⊢ (Γ | \varid{index}~arr~n ⇓ Λ | z) :_ρ a, Σ}

  \end{mathpar}
  \caption{Denotational dynamic semantics}
  \label{fig:typed-semop}
\end{figure}

The ordinary semantics, if a good model of what we are implementing in
practice, is not very convenient to reason about directly. First
mutation is a complication in itself, but it is also difficult to
recover types (or even multiplicity annotations) from a particular
evaluation state.

We address this difficulty by introducing a denotational semantics where the states
are annotated with types and linearity. The denotational semantics
also does not perform mutations: arrays are seen as ordinary values
which we modify by copy.

\begin{definition}[Annotated state]

  An annotated state is a tuple $Ξ ⊢ (Γ|t :_ρ A),Σ$ where
  \begin{itemize}
  \item $Ξ$ is a typing context
  \item $Γ$ is a \emph{typed environment}, \ie a collection of
    bindings of the form $x :_ρ A = e$
  \item $t$ is a term
  \item $ρ∈\{1,ω\}$ is a multiplicity
  \item $A$ is a type
  \item $Σ$ is a typed stack, \ie a list of triple $e:_ρ A$ of
    a term, a multiplicity and an annotation.
  \end{itemize}

  Terms are extended with array expressions: $[a_1, …, a_n]$ where the
  $a_i$ are variables.

\end{definition}

Let us introduce a notation which will be needed in the definition of
well-typed state.
\begin{definition}[Weighted pairs]
  We define a type of left-weighted pairs:

  $$\data a~{}_π\!⊗ a = ({}_π\!,) : a →_π b ⊸ a~{}_π\!⊗ b$$

  Let us remark that
  \begin{itemize}
  \item We have not introduced type parameters in datatypes, but it
    is straightforward to do so
  \item We annotate the data constructor with the multiplicity $π$,
    which is not mandated by the syntax. It will make things simpler.
  \end{itemize}

\end{definition}

Weighted pairs are used to internalise a notion of stack that keeps
track of multiplicities in the following definition, which defines
when annotated states are well-typed.

\begin{definition}[Well-typed state]
  We say that an annotated state is well-typed if the following
  typing judgement holds:
  $$
  Ξ ⊢ \flet Γ \fin (t,\termsOf{Σ}) : (A~{}_ρ\!⊗\multiplicatedTypes{Σ})‌
  $$
  Where $\flet Γ \fin e$ stands for the use of $Γ$ as a sequence of
  let-bindings with the appropriate multiplicity\footnote{We skip over
    the case of mutually recursive bindings in our presentation. But
    we can easily extend the formalism with then. Recursive bindings
    must be of multiplicity $ω$, and mutually recursive definition are
    part of a single $\flet$ block. When defining $\flet Γ$ we need to
    pull a mutually recursive block as a single $\flet$ block as
    well.}, $\termsOf{e_1 :_{ρ_1} A_1, … , e_n :_{ρ_n} A_n}$ for
  $(e_1~{}_{ρ_1}\!, (…, (e_n~{}_{ρ_n},())))$, and
  $\multiplicatedTypes{e_1 :_{ρ_1} A_1, … , e_n :_{ρ_n} A_n}$ for
  $A_1~{}_{ρ_1}\!⊗(…(A_n~{}_{ρ_n}\!⊗()))$.
\end{definition}

\begin{definition}[Denotational reduction relation]
  We define the denotational reduction relation, also written $⇓$, as a
  relation on annotated states. Because $Ξ$, $ρ$, $A$ and $Σ$ are
  always the same for related states, we abbreviate
  $$
  (Ξ ⊢ Γ|t :_ρ A,Σ) ⇓ (Ξ ⊢ Δ|z :_ρ A,Σ)
  $$
  as
  $$
  Ξ ⊢ (Γ|t ⇓ Δ|z) :_ρ A, Σ
  $$

  The denotational reduction relation is defined inductively by the
  rules of \fref{fig:typed-semop}.

  A few rules of notice:
  \begin{description}
  \item[linear variable] linear variables are removed from the
    environment when they are evaluated: they are no longer accessible
    (if the state is well-typed)
  \item[let] even if we are evaluating a $\flet_1$m we may have to
    introduce a non-linear binding in the environemnt: if the value we
    are currently computing will be used as the argument of a
    non-linear function, the newly introduced variables may be forced
    several times (or not at all). An example is
    $\flet[ω] x = (\flet[1] y = \varid{True}) \fin y \fin (x,x)$: if evaluating
    this example yielded the binding $y :_1 Bool = True$, then the
    intermediate state would be ill-typed. So for the sake of proofs,
    instead we add $y :_ω Bool = True$ to the environment
  \item[write] No mutation is performed in array write: we just return
    a new copy of the array.
  \end{description}

\end{definition}

The denotation semantics preserves the well-typedness of annotated
states throughout the evaluation\fref{thm:type-safety}. From then on, we
will only consider the evaluation of well-typed states.

{
\renewcommand{\thetheorem}{\ref{thm:type-safety}}
\begin{theorem}[Type preservation]
  If  $Ξ ⊢ (Γ|t ⇓ Δ|z) :_ρ A, Σ$, or $Ξ ⊢ (Γ|t ⇓^* Δ|z) :_ρ A, Σ$ then
  $$
  Ξ ⊢ (Γ|t :_ρ A),Σ \text{\quad{}implies\quad{}} Ξ ⊢ (Δ|z :_ρ A),Σ.
  $$
\end{theorem}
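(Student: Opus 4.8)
The plan is to prove both statements by induction on the evaluation derivation of \fref{fig:typed-semop}, establishing the claim for the complete relation $⇓$ first and then for the partial relation $⇓^*$. For $⇓$ the induction is over the finite derivation tree. For $⇓^*$ I would induct on the partial derivation of \fref{sec:partial-derivations}: the premises lying to the left of the active premise carry total subderivations, to which the already-proved result for $⇓$ applies, while the single active premise carries a partial subderivation, to which the induction hypothesis applies; chaining these propagates well-typedness from the root of the partial derivation down to its head. Throughout, the invariant being preserved is the well-typed-state judgement $Ξ ⊢ \flet Γ \fin (t,\termsOf{Σ}) : (A~{}_ρ\!⊗\multiplicatedTypes{Σ})$, in which $Γ$ is read as a block of let-bindings and the stack $Σ$ is internalised as a nested weighted pair. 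The value of this encoding is that every operational rule can be matched against a manipulation of this single flat judgement.

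For each rule I would argue in three movements: \emph{decompose}, \emph{apply}, \emph{recompose}. In the decompose step I invert the flat typing of the conclusion's source state to exhibit well-typedness of the source state of each premise. The value rules (abs, m.abs, constructor, mutable cell) are immediate, since their source and target states coincide up to the term already being a value. The stack-manipulating rules are the substance: for (app), passing from $e\,x$ to $e$ with $x :_{πρ} A$ pushed onto $Σ$ is exactly an instance of the (app) typing rule seen through the weighted-pair encoding, the multiplicity $πρ$ arising by scaling the argument multiplicity $π$ by the current multiplicity $ρ$; (case) and (m.app) are analogous, with the scrutinee multiplicity and the substitution $[π/q]$ handled by their typing-rule counterparts. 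I then apply the induction hypothesis to each premise to learn that its target state is well-typed, and finally recompose by re-popping the stack to obtain the conclusion's target; the equations of \fref{def:equiv-multiplicity} together with the module laws for context scaling and addition (the ``Contexts form a module'' lemma) are what make the multiplicity arithmetic line up.

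The supporting lemmas I expect to need are a weakening lemma for the ambient context $Ξ$, which is threaded unchanged through most rules and grows only in the variable and case rules, and a substitution lemma validating the $[x/y]$ and $[π/q]$ substitutions used in (app), (case) and (m.app). The delicate bookkeeping lives in the two variable rules and in (let). In (linear variable) the binding $x :_1 B = e$ is \emph{deleted} from the environment when forced, so I must show the flat judgement genuinely consumes that binding exactly once and stays derivable after its removal; this is exactly where the linearity of the typing judgement is used. In (let) the subtlety flagged in the text surfaces: evaluating a $\flet[1]$ may still force an $ω$-binding into the environment, so the freshly bound variables must enter the flat judgement at multiplicity $ρπ$ rather than $π$, and I must check this choice keeps every continuation well-typed.

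I expect the main obstacle to be precisely this multiplicity tracking across environment mutations, compounded by the type-changing primitive (freeze), which rewrites a binding of type $\varid{MArray}~a$ to one of type $\varid{Array}~a$ and wraps the result in $\varid{Unrestricted}$. Preservation there requires that no surviving fragment of the flat judgement still demands the old mutable type, which once again reduces to the linear discipline having already consumed every other reference to that array. Making the module-structure arithmetic close uniformly under these mutations, and doing so consistently across the complete and partial derivations, is the crux; by contrast the value rules and the stack-push/pop rules become routine once the decompose/recompose template is in place.
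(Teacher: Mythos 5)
Your proposal takes essentially the same route as the paper: the paper's proof is a two-line sketch ("by induction on the typed-reduction") that singles out exactly the case you identify as delicate, namely the linear-variable rule, where the binding $x :_1 B$ must be removed from the environment because linearity guarantees it cannot also be needed to type $Σ$. Your decompose/apply/recompose elaboration, the handling of the partial relation via the already-proved total case on the left premises, and the remarks on (let) and (freeze) are all consistent with, and considerably more explicit than, what the paper records.
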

\addtocounter{theorem}{-1}
}
\begin{proof}
  By induction on the typed-reduction.

  The case of the linear variable rule is interesting, as it uses the
  fact that, by the constructor rule, $x :_1 B$ can only be used in
  the typing of the variable $x$, it is absent from the context when
  type-checking $Σ$. In particular note how the rule \emph{must}
  remove $x$ from the environment in order to preserve typing.
\end{proof}

{
\renewcommand{\thetheorem}{\ref{thm:progress-denotational}}
\begin{theorem}[Progress]
  Evaluation does not block. That is, for any partial derivation of
  $Ξ ⊢ (Γ'|e ⇓ ?) :_ρ A,Σ$, the derivation can be
  extended.
\end{theorem}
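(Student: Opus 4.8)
The plan is to follow the partial-derivation method of \citet{gunter_partial-big-step_1993} recalled in \fref{sec:partial-derivations}: I must show that the unique open leaf $b ⇓ ?$ (the \emph{head}) of any partial derivation whose root is a well-typed state can always be advanced by at least one rule. The first ingredient is that the head state is itself well-typed. This is immediate from type preservation for the partial-evaluation relation (\fref{thm:type-safety}): the root $Ξ ⊢ (Γ'|e :_ρ A),Σ$ partially evaluates to the head, so the head $Ξ ⊢ (Δ|e' :_ρ A),Σ$ is well-typed as well. With this in hand the whole argument reduces to a case analysis on the shape of the term $e'$ sitting at the head.

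Before running that analysis I would isolate a canonical-forms lemma that characterises well-typed \emph{values} by their type: a well-typed value at type $A →_π B$ is a $λ_π$-abstraction, at $∀p. A$ a multiplicity abstraction, at a datatype $D~π_1…π_n$ a constructor application $c_k\,x_1…x_n$ for one of $D$'s constructors, and at an array type a concrete array literal $[a_1,…,a_n]$ carrying the matching $\varid{MArray}$/$\varid{Array}$ annotation. This lemma is exactly what allows a completed premise to flow into the next premise of its parent rule.

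The case analysis then splits into two families. When the head term is an elimination form --- a variable, an application $e''\,x$, a multiplicity application $e''\,π$, a $\mathsf{case}$, a $\mathsf{let}$, or one of the primitives $\varid{newMArray}$, $\varid{write}$, $\varid{freeze}$, $\varid{index}$ --- the corresponding rule of \fref{fig:typed-semop} applies and its leftmost premise becomes the new head, so the derivation extends. For the variable rules I use well-typedness to guarantee the binding is actually present in the environment, so lookup cannot fail. When the head term is already a value, I complete the leaf with the matching axiom (abs, m.abs, constructor, or mutable cell) and then invoke canonical forms on the parent rule to exhibit its next premise as a well-formed state; for instance, once the scrutinee of a $\mathsf{case}$ has reduced to $c_k\,x_1…x_n$, canonical forms guarantees $c_k$ is among the branches, so the branch premise $u_k[x_i/y_i]$ exists and evaluation proceeds.

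The main obstacle, and the real content of the theorem, lies in two interlocking cases. The first is the \textbf{linear variable}: when the head is a variable $x$ bound linearly as $x :_1 B = e$, I must show this binding has not already been consumed, i.e. that it is still in the environment. This rests on the invariant exploited in the proof of \fref{thm:type-safety}, that a linear binding $x :_1 B$ occurs free in exactly one position of a well-typed state, so it cannot have been silently duplicated and eliminated earlier in the derivation. The second is the \textbf{typestate} of the array primitives: $\varid{write}~arr~n~a$ fires only when $arr$ evaluates to a value at type $\varid{MArray}~a$, never a frozen $\varid{Array}~a$. Because the denotational semantics records the typestate in the type annotation rather than by mutating a cell, well-typedness together with canonical forms rules out the frozen configuration outright, so $\varid{write}$ can never block. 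Discharging these two points is precisely what establishes the theorem's closing remark that typestates need not be checked dynamically.
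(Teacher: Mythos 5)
Your overall strategy is the same as the paper's: use type preservation (\fref{thm:type-safety}) to conclude that the head of the partial derivation is a well-typed state, then argue by cases that a well-typed head can always take a step, with the linear variable rule as the crux. Your treatment of that crux, however, covers only half of it. The paper identifies \emph{two} ways the variable rules can fail on a linearly-bound variable: (i) the binding $x :_1 B = e$ has already been removed from the environment by an earlier use, and (ii) the binding is still present but the variable is being forced at multiplicity $ω$, in which case \emph{neither} variable rule applies --- the shared variable rule requires an $ω$-binding and the linear variable rule requires $ρ = 1$. You address (i) (``lookup cannot fail''), but your argument says nothing about (ii), and presence of the binding does not help there: the state $Ξ ⊢ Γ, x :_1 B = e \mid x :_ω A, Σ$ is stuck even though $x$ is in the environment. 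The paper closes this case by showing that such a state is not well-typed --- its let-encoding reduces to requiring $x :_1 B ⊢ x :_{ωπ} B$, which never holds --- so by preservation it cannot be the head of a partial derivation. The same style of argument you already use for (i) discharges (ii), but as written your proof has a hole exactly at one of the two points the paper singles out as the entire non-standard content of the theorem.

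Two smaller remarks. First, your emphasis on the array primitives and typestate is somewhat misplaced for \emph{this} theorem: in the denotational semantics arrays are ordinary values and their ``typestate'' is just the static type annotation, so canonical forms disposes of the $\varid{write}$-on-frozen-array scenario as part of the routine cases; the genuine typestate-blocking issue lives in the \emph{ordinary} semantics (where $\varid{freeze}$ mutates the type of a heap binding) and is handled there by transporting progress across the bisimulation (\fref{lem:liveness}), not here. Second, your invariant ``a linear binding occurs free in exactly one position of a well-typed state'' is a reasonable gloss, but the paper's argument is more direct and is the one you should use: the state with $x ∉ Γ$ simply fails the well-typedness judgement, so preservation excludes it outright.
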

\addtocounter{theorem}{-1}
}
\begin{proof}
  The proof of progress, for the denotational semantics, is almost
  entirely standard. The only unusual rule is the linear variable
  rule, in which there are two things of notice:
  \begin{itemize}
  \item The linear variable rule blocks if $ρ = ω$
  \item The linear variable rule removes the variable $x$ from the
    environment.
  \end{itemize}
  Therefore it suffices to show that the former case never arises. And
  that whenever a variable is evaluated, then it appears in the
  environment.
  \begin{itemize}
  \item Notice that $Ξ⊢Γ,x:_1 B = e|x :_ω A,Σ$ is not a well-typed
    state because it reduces to $x:_1B = x:_{ωπ} B$ for some $π$,
    which never holds. By type preservation (\fref{thm:type-safety}),
    $Ξ⊢Γ,x:_1 B = e|x :_ω A,Σ$ cannot be the head of a partial
    derivation.
  \item Similarly $Ξ⊢Γ|x :_1 A,Σ$ where $x∉Γ$ is not well-typed, and
    hence cannot be the head of a partial derivation\footnote{Notice
      that it is an invariant of the denotational evaluation, that
      variables in $Ξ$ are not reachable from $e$. This is only true
      because let-bindings are not recursive. In the case that they
      are recursive, the shared variable rule make it possible to run
      into a situation where $x$ is evaluated and part of $Ξ$, in
      which case the reduction blocks: this models so-called
      \emph{black-holing} in which ill-founded recursive lazy
      definitions report an error rather than looping. This
      presentation follows \citet{launchbury_natural_1993}, and in
      presence of such recursion, progress must be extended to say
      that partial derivation can be either extended or is in a black
      hole. An alternative solution is to change the shared variable
      rule to loop instead of blocking in case of such ill-founded
      recursion.}.
  \end{itemize}
\end{proof}

\subsection{Bisimilarity and all that}
\label{sec:bisimilarity}

The crux of our metatheory is that the two semantics are
bisimilar. Bisimilarity allows to tranport properties from the
denational semantics, on which it is easy to reason, and the ordinary
semantics which is close to the implementation. It also makes it
possible to prove observational equality results. Our first definition
is the relation between the states of the ordinary evaluation and
those of the denotational evaluation which witnesses the bisimulation.

\begin{definition}[Denotation assignment]
  A well-typed state is said to be a denotation assignment for an ordinary
  state, written $\ta{Γ:e}{Ξ ⊢ Γ' | e' :_ρ A , Σ}$, if
  $e[Γ_1]=e' ∧ Γ' = Γ''[Γ_1] ∧ Γ'' \leqslant Γ_ω$. Where
  \begin{itemize}
  \item $Γ_ω$ is the restriction of $Γ$ (a context of the ordinary
    semantics) to the variable bindings $x :_ω A = u$
  \item $Γ_1$ is the restriction of $Γ$ to the array bindings $l :_1 A
    = [a_1, …, a_n]$, seen as a substitution.
  \end{itemize}
  That is, $Γ'$ is allowed to strengthen some $ω$ bindings to be
  linear, and to drop unnecessary bindings. Array pointers are
  substituted with their value (since we have array pointers in the
  ordinary semantics but only array values in the denotational
  semantics). The substitution is subject to

  The substitution must abide by the following restrictions in order
  to preserve the invariant that \ensuremath{\Conid{MArray}} pointers are not shared:
  \begin{itemize}
  \item An \ensuremath{\Conid{MArray}} pointer in $Γ_1$ is substituted either exactly
    in one place in $Γ''$ or exactly in one place in $e$.
  \item If an \ensuremath{\Conid{MArray}} pointer is substituted in $Γ''$ then it is
    substituded in a linear binding $x :_1 A = u$
  \item If an \ensuremath{\Conid{MArray}} pointer is substituted in $e$ the $ρ=1$
  \item If an \ensuremath{\Conid{MArray}} pointer is substituted in the body $u$ as of
    $let_p x = u in v$ (sub)expression, the $p=1$
  \end{itemize}
\end{definition}

\begin{lemma}[Safety]\label{lem:actual_type_safety}
  The denotation assignment relation defines a simulation of the
  ordinary evaluation by the denotational evaluation, both in the
  complete and partial case.

  That is:
  \begin{itemize}
  \item for all $\ta{Γ:e}{Ξ ⊢ (Γ'|e) :_ρ A,Σ}$ such that $Γ:e⇓Δ:z$,
    there exists a well-typed state $Ξ ⊢ (Δ'|z) :_ρ A,Σ$ such that
    $Ξ ⊢ (Γ|t ⇓ Δ|z) :_ρ A, Σ$ and $\ta{Δ:z}{Ξ ⊢ (Δ'|z) :_ρ A,Σ}$.
  \item for all $\ta{Γ:e}{Ξ ⊢ (Γ'|e) :_ρ A,Σ}$ such that $Γ:e⇓^*Δ:z$,
    there exists a well-typed state $Ξ ⊢ (Δ'|z) :_ρ A,Σ$ such that
    $Ξ ⊢ (Γ|t ⇓^* Δ|z) :_ρ A, Σ$ and $\ta{Δ:z}{Ξ ⊢ (Δ'|z) :_ρ A,Σ}$.
  \end{itemize}
\end{lemma}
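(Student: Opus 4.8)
The plan is to prove both bullets by induction on the structure of the ordinary evaluation derivation, exploiting the fact that the rules of the ordinary semantics (\fref{fig:dynamics}) and of the denotational semantics (\fref{fig:typed-semop}) are in near one-to-one correspondence: for each ordinary rule I match the homonymous denotational rule, apply the induction hypothesis to the subderivations, reassemble the denotational derivation, and check that the denotation-assignment relation $\gamma$ is re-established on the resulting states. Well-typedness of every intermediate denotational state I would take for granted via type preservation (\fref{thm:type-safety}), which lets me stay within well-typed states throughout.

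First I would dispatch the complete case (first bullet) by induction on the derivation of $\Gamma : e \Downarrow \Delta : z$, with a case analysis on its final rule. For the array-free rules --- m.abs, m.app, abs, application, variable, let, constructor, case --- the work is essentially bookkeeping. Unfolding $\gamma(\Gamma:e)(\Xi \vdash \Gamma' \mid e' :_\rho A, \Sigma)$ gives $e[\Gamma_1] = e'$ and $\Gamma' = \Gamma''[\Gamma_1]$ with $\Gamma'' \leqslant \Gamma_\omega$; I feed the subterms through the induction hypothesis and recombine under the corresponding denotational rule. Two points need attention: the choice between the denotational \emph{shared variable} and \emph{linear variable} rules, which I would make according to whether $\gamma$ has strengthened the relevant $\omega$-binding to linear (this is exactly the freedom $\gamma$ grants); and the $\mathsf{let}$ rule, where the denotational semantics may introduce an $\omega$-binding even under $\mathsf{let}_1$ (as the side remark in the definition of the denotational relation notes), which I must reconcile with the $\Gamma_\omega$ component on the ordinary side.

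Second I would treat the array rules --- newMArray, write, freeze, index --- which carry the real content of the lemma. The ordinary semantics performs genuine in-place mutation (write overwrites cell $l$; freeze mutates its type from \ensuremath{\Conid{MArray}} to \ensuremath{\Conid{Array}}), whereas the denotational semantics returns a fresh copy $[a_1,\dots,a,\dots,a_n]$ and never mutates. The relation $\gamma$ bridges the gap by substituting each array pointer in $\Gamma_1$ by its current value, so I must show that after an ordinary mutation the updated environment still relates, under $\gamma$, to the copy produced denotationally. This hinges entirely on the non-sharing restrictions baked into the definition of $\gamma$: each \ensuremath{\Conid{MArray}} pointer is substituted in exactly one place, in a linear binding of $\Gamma''$ or linearly inside $e$. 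I would verify rule by rule that these restrictions are both \emph{used} and \emph{re-established} --- write consumes the unique occurrence of $l$ and yields the unique occurrence of the updated array; freeze converts the single linear \ensuremath{\Conid{MArray}} occurrence into a genuinely shareable \ensuremath{\Conid{Array}} value, matching the $\omega$-binding that the ordinary rule introduces --- so that destructive update is indistinguishable from copying at the level of $\gamma$.

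Finally I would lift everything to partial derivations (second bullet) by induction on the structure of partial derivations as defined in \fref{sec:partial-derivations}: such a derivation has complete derivations for the premises left of the active one, a single partial derivation for the active premise, and no premises to its right. For the complete left premises I invoke the already-proved complete case, and for the active premise the partial induction hypothesis; assembling these under the same denotational rule yields the required partial denotational derivation together with a denotation assignment for its head state. The main obstacle is unquestionably the array-mutation cases of the third step: the remaining cases are routine rule-matching, but there I must argue precisely how the non-sharing invariant is preserved, since it is exactly this invariant that makes in-place update observationally equal to copying.
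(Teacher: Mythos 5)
Your proposal matches the paper's proof: both bullets are established by induction on the ordinary evaluation derivation, matching each rule to its denotational counterpart and using the single-occurrence restrictions on \ensuremath{\Conid{MArray}} substitution in the denotation-assignment relation to reconcile in-place mutation with copying. The only (minor) difference is one of emphasis --- the paper singles out the \emph{variable} and \emph{let} cases as the ones the substitution restrictions are crafted for, treating the array-primitive rules as straightforward, whereas you locate the crux in the array rules; since you in fact treat both groups of cases carefully, this does not affect correctness.
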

\begin{proof}
  Both simulations are proved by a similar induction on the derivation
  of $Γ:e⇓Δ:z$ (resp. $Γ:e⇓Δ:z$):
  \begin{itemize}
  \item The properties of the substitution of \ensuremath{\Conid{MArray}} in the
    definition of denotation assignments are crafted to make the
    \emph{variable} and \emph{let} rules carry through
  \item The other rules are straightforward
  \end{itemize}
\end{proof}

\begin{lemma}[Liveness]\label{lem:liveness}
  The refinement relation defines a simulation of the strengthened
  reduction by the ordinary reduction, both in the complete and
  partial case.

  That is:
  \begin{itemize}
  \item for all $\ta{Γ:e}{Ξ ⊢ (Γ'|e) :_ρ A,Σ}$ such that
    $Ξ ⊢ (Γ'|e' ⇓ Δ'|z') :_ρ A,Σ$, there exists a state $Δ:z$ such
    that $Γ:e'⇓Δ:z'$ and $\ta{Δ:z}{Ξ ⊢ (Δ'|z') :_ρ A,Σ}$.
  \item for all $\ta{Γ:e}{Ξ ⊢ (Γ'|e) :_ρ A,Σ}$ such that
    $Ξ ⊢ (Γ'|e ⇓^* Δ'|t) :_ρ A,Σ$, there exists a state $Δ:t'$ such
    that $Γ:e⇓^*Δ:z$ and $\ta{Δ:t}{Ξ ⊢ (Δ'|t') :_ρ A,Σ}$.
  \end{itemize}
\end{lemma}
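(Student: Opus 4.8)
The plan is to prove Liveness as the converse simulation to Lemma \ref{lem:actual_type_safety}, by induction on the derivation of the denotational reduction $Ξ ⊢ (Γ'|e' ⇓ Δ'|z') :_ρ A,Σ$ (and, for the partial statement, on the partial derivation $Ξ ⊢ (Γ'|e' ⇓^* Δ'|t) :_ρ A,Σ$). The denotation-assignment hypothesis $\ta{Γ:e}{Ξ ⊢ (Γ'|e') :_ρ A,Σ}$ already supplies an ordinary state $Γ:e$; in each case I would inspect the last denotational rule used, fire the corresponding ordinary rule on $Γ:e$, appeal to the induction hypothesis on the immediate subderivations to evaluate the premises in the ordinary semantics, and finally check that the ordinary result stands in the denotation-assignment relation with the denotational result $Δ'|z'$. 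Because the two rule sets are in one-to-one correspondence, most cases (abstraction, application, constructor, case, and the two multiplicity rules) are a direct structural match — essentially the Safety argument run backwards.

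The delicate cases are precisely those where the two semantics genuinely differ, and they are the same ones flagged after the definition of the denotational semantics and in the Safety proof. For the linear-variable rule the denotational step discards the binding from the environment, whereas the corresponding ordinary binding is an $ω$-binding (the relation permits $Γ''$ to strengthen an $ω$-binding to a linear one, and to drop bindings); I would match it with the ordinary shared-variable rule and observe that dropping the now-unreachable binding keeps us inside the relation. For the $\flet[1]$ rule I would mirror the fact that a $\mathsf{let}$ of multiplicity $1$ may still install $ω$-bindings, so that the freshly created ordinary $ω$-bindings refine the denotational ones. The array primitives require the non-sharing invariant built into the denotation-assignment relation: when the denotational $\varid{write}$ rule produces a fresh copy $[a_1,\dots,a,\dots,a_n]$, I would fire the ordinary $\varid{write}$ rule, which mutates the array behind a pointer $l$ in place, and use the restriction that an $\varid{MArray}$ pointer occurs in exactly one place to conclude that substituting $l$ by its mutated contents reproduces exactly the copied denotational value. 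The $\varid{newMArray}$, $\varid{freeze}$ and $\varid{index}$ cases are handled analogously, with $\varid{freeze}$ additionally relabelling the type of $l$ from $\varid{MArray}$ to $\varid{Array}$ and installing the extra $ω$-binding that refines the denotational linear binding.

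I expect the main obstacle to be maintaining the non-sharing side-conditions of the denotation-assignment relation across the reconstructed ordinary reduction, rather than any single rule in isolation. The difficulty is that the ordinary semantics performs destructive mutation whereas the denotational semantics copies, so I must argue that at every point where an $\varid{MArray}$ pointer is written, read, or frozen it is held in exactly one location; this is what guarantees that an in-place mutation is observationally identical to the functional copy and that the refinement survives the step. Establishing and threading this invariant through the $\varid{write}$ and $\varid{freeze}$ cases — together with the bookkeeping of which ordinary bindings are strengthened or dropped by $Γ''$ — is where the real content lies; once it is in place the remaining cases follow the rule-by-rule pattern of Safety. Finally, the partial-derivation variant is obtained by the same induction, extending a partial ordinary derivation whenever the denotational partial derivation is extended, which is exactly what is needed to transport progress back to the ordinary semantics.
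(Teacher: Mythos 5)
Your proposal follows the same route as the paper: an induction on the denotational derivation, firing the matching ordinary rule in each case and checking the denotation-assignment relation is preserved, with the linear-variable, let, and array cases needing the non-sharing side-conditions. The paper's own proof is only the one-line remark that this is a straightforward induction over the denotational derivation, so your write-up supplies strictly more detail than the original while taking the identical approach.
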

\begin{proof}
  Both are proved by a straightforward induction over the derivation of
  $Ξ ⊢ (Γ'|e ⇓ Δ'|z) :_ρ A,Σ$ (resp. $Ξ ⊢ (Γ'|e ⇓ Δ'|z) :_ρ A,Σ$).
\end{proof}

Equipped with this bisimulation, we are ready to prove the soundness
properties of the ordinary semantics. We say that a state $Γ:e$ is
well-typed if there exists an annotated state $Ξ⊢Γ'|e':_ρ A,Σ$, such
that $\ta{Γ:e}{Ξ⊢Γ'|e':_ρ A,Σ}$.

{
\renewcommand{\thetheorem}{\ref{thm:type-preservation}}
\begin{theorem}[Type preservation]
  For any well-typed $Γ:e$, if $Γ:e⇓Δ:t$ or $Γ:e⇓^*Δ:t$, then $Δ:t$ is
  well-typed.
\end{theorem}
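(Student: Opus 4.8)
The plan is to obtain \Fref{thm:type-preservation} not by a fresh induction over the ordinary reduction rules, but as a corollary of the bisimulation set up in \fref{sec:bisimilarity}. The crucial point is that well-typedness of an ordinary state $Γ:e$ is \emph{defined} to mean that there is a denotation assignment $\ta{Γ:e}{Ξ ⊢ Γ'|e' :_ρ A,Σ}$ to some well-typed annotated state. So the whole work of preservation can be delegated to the denotational semantics, where it is already available, and the ordinary-side statement follows by merely transporting the evaluation across the simulation and re-reading the definition.

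Concretely, I would argue as follows. Assume $Γ:e$ is well-typed and $Γ:e⇓Δ:t$. First, unfold the definition: there is an annotated state $Ξ ⊢ (Γ'|e') :_ρ A,Σ$ that is well-typed and satisfies $\ta{Γ:e}{Ξ ⊢ Γ'|e' :_ρ A,Σ}$. Second, feed this denotation assignment together with the ordinary derivation into the simulation direction of the bisimulation, namely the Safety lemma \Fref{lem:actual_type_safety}; its complete-derivation clause yields a well-typed annotated state $Ξ ⊢ (Δ'|t) :_ρ A,Σ$ together with $\ta{Δ:t}{Ξ ⊢ Δ'|t :_ρ A,Σ}$. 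Third, re-fold the definition: since we have exhibited a denotation assignment from $Δ:t$ to a well-typed annotated state, $Δ:t$ is well-typed, which is exactly the claim. The partial case $Γ:e⇓^*Δ:t$ is verbatim the same, invoking the partial-derivation clause of \Fref{lem:actual_type_safety} in place of the complete one; stating Safety for both forms of derivation is precisely what makes this uniformity possible.

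The hard part is emphatically \emph{not} in this theorem, which is definitional bookkeeping, but in guaranteeing that the companion state produced by Safety really is well-typed---that is where the genuine content sits. That well-typedness rests on two things: denotational type preservation (\Fref{thm:type-safety}), and, more delicately, the side conditions in the definition of denotation assignment that forbid sharing of \ensuremath{\Conid{MArray}} pointers. Those conditions on $Γ_1$ (each mutable-array pointer substituted in exactly one place, and, when it lands in $Γ''$, always at a linear binding) are exactly what makes the ordinary \emph{variable} and \emph{let} rules line up with their denotational counterparts. The chief risk in the overall development is therefore that these invariants are stated strongly enough to be preserved by every reduction step; discharging that is the real labour, and it is carried out inside \Fref{lem:actual_type_safety} rather than in the proof of this theorem.
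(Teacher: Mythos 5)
Your proposal matches the paper's proof exactly: the paper discharges this theorem by observing that it is precisely the statement of the Safety lemma (\fref{lem:actual_type_safety}), which is the same unfold-apply-refold of the definition of well-typedness for ordinary states that you spell out. Your additional remarks correctly locate the real work inside the Safety lemma (and its \ensuremath{\Conid{MArray}}-substitution side conditions) rather than in this corollary.
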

\addtocounter{theorem}{-1}
}
\begin{proof}
  This is precisely the same statement as \fref{lem:actual_type_safety}
\end{proof}
{
\renewcommand{\thetheorem}{\ref{thm:progress}}
\begin{theorem}[Progress]
  Evaluation does not block. That is, for any partial derivation of
  $Γ:e⇓?$, for $Γ:e$ well-typed, the derivation can be
  extended.

  In particular, typestates need not be checked dynamically.
\end{theorem}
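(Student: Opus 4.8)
The plan is to transport progress from the denotational semantics, where it is already established in \fref{thm:progress-denotational}, to the ordinary semantics, using the bisimulation between the two semantics assembled in \fref{sec:bisimilarity}. Concretely, suppose $Γ:e$ is well-typed, so that by definition there is an annotated state with $\ta{Γ:e}{Ξ ⊢ Γ'|e' :_ρ A, Σ}$, and suppose we are given a partial derivation of $Γ:e ⇓^* Δ:z$ that we wish to extend.

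First I would push this ordinary partial derivation through the safety simulation. By the partial case of \fref{lem:actual_type_safety}, the ordinary partial evaluation $Γ:e ⇓^* Δ:z$ is matched by a denotational partial evaluation $Ξ ⊢ (Γ'|e' ⇓^* Δ'|z') :_ρ A, Σ$ whose head carries a denotation assignment $\ta{Δ:z}{Ξ ⊢ Δ'|z' :_ρ A, Σ}$. Since denotation assignments are by definition built on well-typed states (and \fref{thm:type-safety} guarantees the head remains well-typed), this head lies squarely within the scope of denotational progress.

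Next I would apply denotational progress and convert its conclusion back. By \fref{thm:progress-denotational}, the denotational partial derivation with head $Δ'|z'$ can be extended; equivalently, either $z'$ is already a final value — in which case, by the denotation assignment, so is $z$ and there is nothing to extend — or the head state $Δ'|z'$ admits a denotational reduction step. In the latter case I would transport that step across the simulation in the liveness direction: applying \fref{lem:liveness} to $\ta{Δ:z}{Ξ ⊢ Δ'|z' :_ρ A, Σ}$ together with the denotational step out of $Δ'|z'$ produces a matching ordinary reduction out of $Δ:z$ (with a denotation assignment for the new ordinary head). This ordinary step is precisely an extension of the given partial derivation, which is what we wanted.

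The final clause — that typestates need not be checked dynamically — falls out of the same argument, and I expect it to be the conceptually delicate point rather than the routine transport above. The ordinary semantics of \fref{fig:dynamics} can block for two distinct reasons: a linear ($1$-)binding being forced a second time, and a typestate guard failing (e.g.\ $\varid{write}$ applied to a cell whose type has already been mutated to $\varid{Array}$ by $\varid{freeze}$). The denotational semantics of \fref{fig:typed-semop} has no typestate guard at all — arrays are plain values and $\varid{freeze}$ merely copies — so whenever its head can step, that step is never obstructed by a typestate. Because \fref{lem:liveness} reproduces exactly that unobstructed denotational step as an ordinary step, the ordinary typestate guards can never be the cause of blocking on a well-typed evaluation. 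The hard part will therefore not be the overall structure but ensuring that the \ensuremath{\Conid{MArray}} non-sharing side-conditions of the denotation-assignment relation are strong enough for liveness to reconstruct each array-mutating rule: these conditions are what guarantee that the single physical cell touched by $\varid{write}$ in the ordinary semantics corresponds to the freshly copied array value on the denotational side, so that the two semantics stay in lockstep through mutation.
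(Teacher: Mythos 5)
Your proposal is correct and follows essentially the same route as the paper: the paper's proof of this theorem is a one-liner that transports progress from the denotational semantics (\fref{thm:progress-denotational}) across the bisimulation of \fref{sec:bisimilarity}, citing \fref{lem:liveness}. You merely spell out what the paper leaves implicit --- using the safety direction (\fref{lem:actual_type_safety}) to carry the given ordinary partial derivation over to a well-typed denotational head before pulling the extension back with liveness, and noting why the typestate guards cannot be the source of blocking --- which is a faithful elaboration rather than a different argument.
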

\addtocounter{theorem}{-1}
}
\begin{proof}
  By liveness (\fref{lem:liveness}) it is sufficient to prove the case
  of the denotational semantics. Which follows from \fref{thm:progress-denotational}
\end{proof}

Observational equivalence, which means, for \calc{}, that an
implementation in terms of in-place mutation is indistinguishable from
a pure implementation, is phrased in terms of the \ensuremath{\Conid{Bool}}: any
distinction which we can make between two evaulations can be
extended so that one evaluates to \ensuremath{\Conid{False}} and the other to \ensuremath{\Conid{True}}.
{
\renewcommand{\thetheorem}{\ref{thm:obs-equiv}}
\begin{theorem}[Observational equivalence]
  The ordinary semantics, with in-place mutation is observationally equivalent
  to the pure denotational semantics.

  That is, for all $\ta{⋅:e}{⊢ (⋅|e) :_ρ \varid{Bool},⋅}$, if $⋅:e ⇓ Δ:z$ and
  $⋅ ⊢ (⋅|e⇓ Δ|z')  :_ρ \varid{Bool}, ⋅ $, then $z=z'$
\end{theorem}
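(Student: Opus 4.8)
The plan is to read observational equivalence off the bisimulation already in hand, namely Safety (\fref{lem:actual_type_safety}) and Liveness (\fref{lem:liveness}), together with determinism of the \emph{denotational} relation. The decisive simplification is that the observation is taken at type $\varid{Bool}$. A closed value of type $\varid{Bool}$ is a nullary constructor, $\varid{True}$ or $\varid{False}$, containing no array pointers; hence the array-substitution $[Γ_1]$ appearing in the denotation-assignment relation $\ta{Γ:e}{Ξ ⊢ Γ'|e' :_ρ A,Σ}$ acts as the identity on it. Thus at $\varid{Bool}$ the relation between \emph{final} values degenerates to syntactic equality, which is precisely what upgrades ``related observations'' into ``equal observations''.

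Concretely, first I would feed the ordinary evaluation $⋅:e ⇓ Δ:z$, together with the assignment $\ta{⋅:e}{⊢ (⋅|e) :_ρ \varid{Bool}, ⋅}$ supplied by the hypothesis, into Safety. This produces a complete denotational derivation from the same closed initial state ending in the very same value $z$, together with an assignment $\ta{Δ:z}{⊢ (Δ''|z) :_ρ \varid{Bool}, ⋅}$. The point I would check here is that Safety returns $z$ itself and not merely a value related to it: by the previous paragraph there is nothing at type $\varid{Bool}$ for the denotation assignment to rewrite, so the final term is forced to be identical.

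At this stage I have two complete denotational derivations of one and the same well-typed closed state $⊢ (⋅|e \ldots) :_ρ \varid{Bool}, ⋅$: the one just manufactured, ending in $z$, and the one assumed in the statement, ending in $z'$. I would conclude $z = z'$ by invoking determinism of the denotational big-step relation. That determinism I would establish by a routine induction, using that the rules of \fref{fig:typed-semop} are syntax-directed: the head term selects a unique rule, and every multiplicity threaded through a premise (the $ρπ$ of the let rule, the $π_i q ρ$ of the case rule, and so on) is computed rather than chosen. Crucially, I would route determinism through the \emph{pure} denotational side and never through the ordinary side, since the whole difficulty the development addresses is that mixing mutation with lazy evaluation could \emph{a priori} make the ordinary relation non-deterministic; the pure side has no such issue.

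I expect the work here to be bookkeeping rather than genuine difficulty, because all the mathematical content lives in the two simulation lemmas, which are discharged beforehand. The two places to be careful are: confirming that Safety delivers the identical value at base type (the reason the theorem is phrased for $\varid{Bool}$ specifically), and stating determinism sharply enough that gluing the two denotational derivations together is airtight. Were one to admit a richer class of observable types, the conclusion would have to be relativised to the denotation-assignment relation, and one would argue that the chosen observations factor through it; at $\varid{Bool}$, however, that relation is equality and the argument is the direct one above.
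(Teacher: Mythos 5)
Your proposal is correct and follows essentially the same route as the paper, whose entire proof is the one-liner ``because the semantics are deterministic, this is a direct consequence of bisimilarity'': you instantiate the Safety simulation on the ordinary derivation and then close with determinism of the denotational relation. Your two added observations --- that the denotation-assignment relation collapses to syntactic equality on closed \textsf{Bool} values (no array pointers to substitute), and that determinism need only be invoked on the pure side --- are faithful elaborations of what the paper leaves implicit rather than a different argument.
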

\addtocounter{theorem}{-1}
}
\begin{proof}
  Because the semantics are deterministic, this is a direct
  consequence of bisimilarity.
\end{proof}
}
\fi

\end{document}